	\definecolor{dark_gray}{gray}{0.05}
\theoremstyle{plain}
\newtheorem{theorem}{Theorem}[section]
\newtheorem{lemma}[theorem]{Lemma}
\newtheorem{corollary}[theorem]{Corollary}
\newtheorem{proposition}[theorem]{Proposition}
\theoremstyle{definition}
\newtheorem{definition}[theorem]{Definition}
\theoremstyle{remark}
\newtheorem{remark}[theorem]{Remark}
\numberwithin{equation}{section}
\newcommand{\One}[1]{\mathds{1}_{\{#1\}}}
\newcommand{\Corr}[2]{\operatorname{Corr}\!\left(#1, #2\right)}
\newcommand{\dee}[1]{\operatorname{d}\!#1}
\newcommand{\Kap}[2][]{\operatorname{\widetilde{K}}_{#1}\!\left(#2\right)}
\newcommand{\normdist}[2]{\mathcal{N}\!\left(#1, #2\right)}
\newcommand{\q}[2]{\operatorname{q}_{#1}\!\left(#2\right)}
\newcommand{\sqrts}[2][]{\,\sqrt[#1]{#2}\,}
\newcommand{\Var}[1]{\operatorname{Var}\!\left(#1\right)}
\newcommand{\VaR}[2][]{\operatorname{VaR}_{#1}\!\left(#2\right)}
\def\E{\mathbb{E}}
\def\ead{\delta}
\def\EAD{\Delta}
\def\el{l}
\def\given{\,|\,}
\def\lb{\theta}
\def\lgd{\eta}
\def\N{\mathbb{N}}
\def\one{\mathds{1}}
\def\P{\mathbb{P}}
\def\R{\mathbb{R}}
\def\ub{\Theta}
\newcommandx{\bracketcite}[3][1,2]{\brackettext{\cite[#2][#1]{#3}}}	
\begin{document}

\begin{titlepage}

\title[Regulatory Capital Modelling for Credit Risk]{}
\maketitle

\thispagestyle{empty}
\begin{center}

\vspace*{40pt}
\textsc{\textbf{\Large Regulatory Capital Modelling for Credit Risk}}
\par\vspace{20pt}
{\large Marek Rutkowski\textsuperscript{~a} and Silvio Tarca\textsuperscript{~a,$\ast$}}
\par\vspace{10pt}
{\small\textsuperscript{a~}\textit{School of Mathematics and Statistics F07, University of Sydney, NSW 2006, Australia.}}
\par\vspace{20pt}

\rule{\linewidth}{0.25mm}
\begin{abstract}
\vspace{4pt}\noindent The Basel~II internal ratings-based (IRB) approach to capital adequacy for credit risk plays an important role in protecting the banking sector against insolvency.  We outline the mathematical foundations of regulatory capital for credit risk, and extend the model specification of the IRB approach to a more general setting than the usual Gaussian case.  It rests on the proposition that quantiles of the distribution of conditional expectation of portfolio percentage loss may be substituted for quantiles of the portfolio loss distribution.  We present a more compact proof of this proposition under weaker assumptions.  Then, constructing a portfolio that is representative of credit exposures of the Australian banking sector, we measure the rate of convergence, in terms of number of obligors, of empirical loss distributions to the asymptotic (infinitely fine-grained) portfolio loss distribution.  Moreover, we evaluate the sensitivity of credit risk capital to dependence structure as modelled by asset correlations and elliptical copulas.  Access to internal bank data collected by the prudential regulator distinguishes our research from other empirical studies on the IRB approach.

\vspace{8pt}\noindent \textsc{Keywords:} credit risk, regulatory capital, internal ratings-based (IRB) approach, asymptotic single risk factor (ASRF) model, credit value-at-risk (VaR), one-factor Gaussian copula, Student's $t$ copula.
\end{abstract}
\rule{\linewidth}{0.25mm}

\vspace{40pt}
{\textnormal June 2016}
\end{center}

\vspace{110pt}
\noindent\rule{0.33\textwidth}{0.10mm}
\par{\footnotesize$^{\ast\,}$Corresponding author.  Telephone: +61 (0)8 8313 4178.  Email: \texttt{silvio.tarca@adelaide.edu.au}}

\end{titlepage}

\section{Introduction}\label{sect_intro}
Risk capital models serve management functions including capital allocation, performance attribution, risk pricing, risk identification and monitoring, strategic business planning, and solvency assessment (i.e., capital adequacy).  However, model characteristics best suited for different purposes may conflict.  For example, models for solvency assessment require precision in the measurement of absolute risk levels under stressed economic conditions, whereas models for capital allocation need only be accurate in the measurement of relative risk under ``normal'' economic conditions \citep{JF25}.  While economic capital models may serve several management functions, the sole purpose of regulatory capital models is solvency assessment.
 
Under the Basel~II Accord, authorised deposit-taking institutions (ADIs) are required to assess capital adequacy for credit, market and operational risks.  ADIs determine regulatory capital for credit risk using either the standardised approach or, subject to approval, the internal rating-based (IRB) approach.  The latter is more expensive to administer, but usually produces lower regulatory capital requirements than the former.  As a consequence, ADIs using the IRB approach may deploy their capital in pursuit of more (profitable) lending opportunities.  The IRB approach implements the so-called asymptotic single risk factor (ASRF) model, an asset value factor model of credit risk.  This paper examines the model specification of the IRB approach, outlining its mathematical foundations and evaluating its robustness to a relaxation of model assumptions.  In relation to the latter, we undertake an empirical analysis of the Australian banking sector.  Its findings, though, are pertinent to other banking jurisdictions where regulatory capital charges are assessed under the IRB approach.  

In the context of evaluating model robustness, we briefly comment on the adoption of the IRB approach by the Australian banking sector.  Upon implementation of Basel~II in the first quarter of 2008, the Australian Prudential Regulation Authority (APRA) had granted the four largest Australian banks, designated ``major'' banks, approval to use the IRB approach to capital adequacy for credit risk.  The market dominance of the major banks, when coupled with the concentration of their regulatory capital assessed under the IRB approach, is indicative of the significance of the ASRF model in protecting the Australian banking sector against insolvency.  It motivates our interest in the sensitivity of model output to parameter variations and model misspecification.  APRA's support for this research includes access to internal bank data, which allows us to evaluate model robustness on a portfolio that is representative of the credit exposures of the Australian banking sector.  It distinguishes our research from other empirical studies on the IRB approach.

We begin in Section~\ref{sect_asrf_model} by deriving the theoretical foundations, drawn from the literature, of the model specification of the IRB approach.  An asset value factor model of credit risk, it has its roots in the classical structural approach of \citet{MRC74}.  Adapting the single asset model of \citeauthor{MRC74} to a portfolio of credits, \citet{VO02} derived a function that transforms unconditional default probabilities into default probabilities conditional on a single systematic risk factor.  We extend Vasicek's model of conditional independence to a more general setting, one not restricted to Gaussian processes.  \citet{GMB03} established that conditional on a single systematic risk factor, the portfolio percentage loss converges to its conditional expectation as the portfolio approaches asymptotic granularity --- no single credit exposure accounts for more than an arbitrarily small share of total portfolio exposure.  Then, assuming conditional independence given a single systematic risk factor, we derive a limiting form of the portfolio loss distribution for the general case.  The model specification of the IRB approach employs an analytical approximation of credit value-at-risk (VaR).  It rests on the proposition, due to \citet{GMB03}, that quantiles of the distribution of conditional expectation of portfolio percentage loss may be substituted for quantiles of the portfolio loss distribution.  We present a more compact proof of this proposition starting from weaker assumptions.

In generating a portfolio loss distribution we are, in effect, combining marginal loss distributions of constituent credits into a multivariate distribution capturing default dependence between obligors.  An approach to modelling default dependence, popularised by \citet{LDX00}, uses copula functions, which combine marginal distributions into a multivariate distribution with a chosen dependence structure.  Section~\ref{sect_cop_appr} derives the single-factor copula model describing default dependence for the general case.  Then, we deal with the special case of the one-factor Gaussian copula, the most commonly applied copula function in credit risk modelling.  Recognising that Gaussian distributions in financial applications tend to underestimate tail risk, we proceed to outline procedures for generating empirical loss distributions described by elliptical copulas, including Gaussian and Student's $t$ copulas.  Moreover, we illustrate the dependence induced by elliptical copulas.

In its implementation of the IRB approach to capital adequacy for credit risk, APRA requires that ADIs set aside provisions for absorbing expected losses, and hold capital against unexpected losses.  Assuming that portfolios are infinitely fine-grained so that idiosyncratic risk is fully diversified away, and a single systematic risk factor explains dependence across obligors, Section~\ref{sect_model_spec_irb} describes an analytical approximation for assessing ratings-based capital charges.  While real-world portfolios are not infinitely fine-grained, as a practical matter, credit portfolios of large banks adequately satisfy the asymptotic granularity condition, so it need not pose an impediment to assessing ratings-based capital charges.  Again, our contribution extends the model specification of the IRB approach to a more general setting than the usual Gaussian case.

Section~\ref{sect_empir_data} describes the Basel~II capital adequacy reporting of ADIs that supplies data for our empirical analysis.  In Section~\ref{sect_cvrg_asymp_dist} we measure the rate of convergence, in terms of number of obligors, of empirical loss distributions to the distribution of conditional expectation of portfolio percentage loss representing an infinitely fine-grained portfolio.  In the process we demonstrate that Gordy's proposition, which underpins the IRB approach, holds for a representative credit portfolio that exhibits sufficient granularity.  The IRB approach applies the one-factor Gaussian copula, in which default dependence is described by the matrix of pairwise correlations between obligors' asset values.  Section~\ref{sect_cap_sens_depend_struct} proceeds to evaluate the sensitivity of credit risk capital to dependence structure, as modelled by asset correlations and elliptical copulas.  We conclude by outlining the direction of future related research. 

\section{Foundations of the Asymptotic Single Risk Factor Model}\label{sect_asrf_model}
In this section we derive the theoretical foundations of the Basel~II IRB approach to capital adequacy for credit risk, and extend its model specification to a more general setting, one not restricted to Gaussian processes.  The IRB approach implements the so-called \textit{asymptotic single risk factor} (ASRF) model, an asset value factor model of credit risk.  Asset value models posit that default or survival of a firm depends on the value of its assets at (the end of) a given risk measurement horizon.  If the value of its assets falls below a critical threshold, its default point, the firm defaults, otherwise it survives.  Asset value models have their roots in Merton's seminal paper published in \citeyear{MRC74}.  Factor models are a well established, computationally efficient technique for explaining dependence between variables.

Define set $D_{i}$, abstractly, as the event that firm~$i$ defaults, and denote by $p_{i} = \P(D_{i})$ the \textit{unconditional probability of default} (PD) assigned to firm~$i$.  The standard model of asset values is geometric Brownian motion.  Let $W_{i}(t)$ be a Brownian motion, ${W_{i}(t) \sim \normdist{0}{t}}$, describing the variability in asset values of firm~$i$.  Then, the value of assets of firm~$i$ at time~$t$ may be expressed in logarithmic form as 
\begin{equation}\label{eqn_gbm_log_ss_i}
\log A_{i}(t) = \log A_{i}(0) + \mu_{i}t - \tfrac{1}{2}\sigma_{i}^{2}t + \sigma_{i}\sqrts{t}W_{i},
\end{equation}
where latent random $W_{i}(1)$, which we write $W_{i}$, is standard Gaussian, ${W_{i} \sim \normdist{0}{1}}$.  Note that $W_{i}(t)$ is distributionally equivalent to $\sqrts{t}W_{i}$ by the self-similarity property of Brownian motion.  A more precise definition of the event that firm~$i$ defaults follows from the postulate of asset value models of credit risk:
\begin{equation}\label{eqn_dflt_event_gauss}
	D_{i} = \left\{W_{i} < \Phi^{-1}(p_{i})\right\},
\end{equation}
where $\Phi^{-1}$ is the inverse standard Gaussian distribution function.  In the sequel we assume that unconditional PDs are published as market data by the likes of Moody's KMV and RiskMetrics.

\subsection{Conditional Independence Model}\label{sect_cond_indep_model}
Adapting the single asset model of Merton to a portfolio of credits, \citet{VO02} derived a function that transforms unconditional PDs into PDs conditional on a single systematic risk factor.  This function is the kernel of the model specification of the IRB approach.  Let the sequence of random variables $\{L_{n}\}$ be the percentage loss on a credit portfolio comprising $n\in\N$ obligors over a given risk measurement horizon $[0, \tau], \tau > 0$.  We make the assumption that the number of credits in the portfolio equals the number of distinct obligors, which can be achieved by aggregating multiple credits of an individual obligor into a single credit.  Suppose that \textit{exposure at default} (EAD) and \textit{loss given default} (LGD) are deterministic quantities, and denote by ${\ead_{i} \in \R_{+}}$ and ${\lgd_{i} \in [0,1]}$ the EAD and LGD, respectively, assigned to obligor~$i$.  Also, let $D_{i}$ be the event that obligor~$i$ defaults during the risk measurement horizon.  Then, the \textit{portfolio percentage loss} is given by
\begin{equation}\label{eqn_port_loss}
	L_{n} = \sum_{i = 1}^{n} w_{i}\lgd_{i}\one_{D_{i}},
\end{equation}
where $\one_{D_{i}}$ is the default indicator function, and $w_{i} = \ead_{i} / \sum_{j=1}^{n}\ead_{j}$ is the \textit{exposure weight} of obligor $i$ with $\sum_{i = 1}^{n} w_{i} = 1$.  Clearly, $w_{i}$ depends on $n$ and could be denoted $w_{i}(n)$, but we adopt the more concise, and more common, notation for exposure weight. 

Assume that latent random variables ${W_{1}, \ldots, W_{n}}$ modelling the variability in obligors' asset values are standard Gaussian and conditionally independent.  Suppose, too, that $W_{i}$ may be represented as
\begin{equation}\label{eqn_cond_indep_gauss_rvs}
	W_{i} = \sqrts{\rho_{i}}Y + \sqrts{1 - \rho_{i}}Z_{i},
\end{equation}
where random variables $Z_{1}, \ldots, Z_{n}$ and $Y$ are standard Gaussian and mutually independent, and ${\rho_{1}, \ldots, \rho_{n} \in (0,1)}$ are correlation parameters calibrated to market data.  Thus, ${W_{1}, \ldots, W_{n}}$ are conditionally independent given random variable~$Y$, which is common to all obligors.  Systematic risk factor~$Y$ may be interpreted as an underlying risk driver or economic factor, with each realisation describing a scenario of the economy.  Random variables $Z_{1}, \ldots, Z_{n}$ represent idiosyncratic, or obligor specific, risk.  Representation~\eqref{eqn_cond_indep_gauss_rvs}, which assumes that asset values are positively correlated, is taken from \citet{VO02} and employed in the model specification of the IRB approach.

\begin{remark}\label{rem_corr_asset_values}
Let the variability in obligors' asset values be described by \eqref{eqn_cond_indep_gauss_rvs}.  Then, the pairwise correlation between obligors' asset values ${\Corr{W_{i}}{W_{j}} = \sqrts{\rho_{i}\rho_{j}}}$.
\end{remark}

Substituting representation~\eqref{eqn_cond_indep_gauss_rvs} into set~\eqref{eqn_dflt_event_gauss} representing the event of default, the PD of obligor~$i$ conditional on realisation~$y \in \R$ of systematic risk factor~$Y$, or \textit{conditional probability of default}, may be expressed as
\begin{IEEEeqnarray*}{rCl}
	p_{i}(y) = \P(D_{i} \given Y=y) & = & \P\left( W_{i} <  \Phi^{-1}(p_{i}) \given Y=y \right) \\
		& = & \P\left(\sqrts{\rho_{i}}y + \sqrts{1 - \rho_{i}}Z_{i} < \Phi^{-1}(p_{i})\right)	\\
		& = & \P\left(Z_{i} < \frac{\Phi^{-1}(p_{i}) - \sqrts{\rho_{i}}y}{\sqrts{1-\rho_{i}}}\right) \\
		& = & \Phi\left(\frac{\Phi^{-1}(p_{i}) - \sqrts{\rho_{i}}y}{\sqrts{1-\rho_{i}}}\right),\IEEEyesnumber\label{eqn_cond_pd_gauss}
\end{IEEEeqnarray*}
where $p_{i}$ is the unconditional PD of obligor $i$.  Equation~\eqref{eqn_cond_pd_gauss} transforms unconditional PDs into PDs conditional on a single systematic risk factor.

Let 
\begin{equation}\label{eqn_zeta_y_gauss}
	\zeta_{i}(y) = \frac{\Phi^{-1}(p_{i}) - \sqrts{\rho_{i}}y}{\sqrts{1-\rho_{i}}} = \Phi^{-1}\big(p_{i}(y)\big)
\end{equation}
for $i = 1, \dots, n$.  Then, given $Y=y$, the portfolio percentage loss is calculated as
\begin{equation}\label{eqn_cond_port_loss}
	L_{n} = \sum_{i = 1}^{n} w_{i}\lgd_{i}\One{Z_{i} < \zeta_{i}(y)}.
\end{equation}

In the sequel we refer to the model developed in this section as the \textit{Gaussian conditional independence model of a credit portfolio}. 

\subsection{A More General Setting}\label{sect_gen_set}
We proceed to extend the Gaussian conditional independence model to a more general setting.  In Section~\ref{sect_cond_indep_model} we assume that: (\textit{i})~defaults are conditionally independent given a single systematic risk factor; and (\textit{ii})~obligors' asset values are modelled as geometric Brownian motion.  The latter assumption implies that latent random variables modelling the variability in obligors' asset values, and their component systematic and idiosyncratic risk factors, are Gaussian.  We now relax this assumption to describe a more general setting.

\begin{definition}\label{def_cond_indep_model_port_loss}
A \textit{conditional independence model of a credit portfolio} comprising $n\in\N$ obligors over a given risk measurement horizon $[0,\tau]$, $\tau > 0$, takes the form:
\begin{enumerate}[label=(\arabic*)]
	\item\label{item_cim_params}  Let $\ead_{i}\in\R$ be the EAD assigned to obligor~$i$, and $w_{i} = \ead_{i} / \sum_{j=1}^{n}\ead_{j}$ its exposure weight.  Let $\lgd_{i}\in[0,1]$, $\gamma_{i}\in(-1,1)$ and $p_{i}\in(0,1)$ be the LGD, asset correlation and unconditional PD, respectively, assigned to obligor~$i$.
	\item\label{item_cim_rvs}  Suppose that latent random variables $W_{1}, \ldots, W_{n}$ are conditionally independent, and admit representation
	\begin{equation}\label{eqn_cond_indep_rvs}
		W_{i} = \gamma_{i}Y + \sqrts{1 - \gamma_{i}^{2}}Z_{i},\IEEEyesnumber
	\end{equation}
	where $Z_{1}, \ldots, Z_{n}$ and $Y$ are mutually independent random variables.  Systematic risk factor~$Y$ is common to all obligors, while $Z_{1}, \ldots, Z_{n}$ represent idiosyncratic, or obligor specific, risk.  Denote by $F_{1}, \ldots, F_{n}$, $G_{1}, \ldots, G_{n}$ and $H$ the continuous and strictly increasing distribution functions of $W_{1}, \ldots, W_{n}$, $Z_{1}, \ldots, Z_{n}$ and $Y$, respectively.  Clearly, $F_{i}$ depends on $G_{i}$ and $H$ for $i = 1, \ldots, n$.
	\item\label{item_cim_dflt}  The event that obligor~$i$ defaults during the time interval $[0,\tau]$ is defined by the set
	\begin{equation}\label{eqn_dflt_event}
		D_{i} = \left\{W_{i} < F_{i}^{-1}(p_{i})\right\}
	\end{equation}
	for $i = 1, \ldots, n$.
\end{enumerate}
\end{definition}

Portfolio percentage loss~$L_{n}$ is calculated by \eqref{eqn_port_loss} where, for the general case, $\one_{D_{i}}$ is the indicator function of the default event defined by~\eqref{eqn_dflt_event}.  Recasting argument~\eqref{eqn_zeta_y_gauss} of the default indicator function and conditional probability function~\eqref{eqn_cond_pd_gauss} for the general case, we deduce a formula for portfolio percentage loss conditional on realisation~$y\in\R$ of systematic risk factor~$Y$.  Thus, given $Y=y$, portfolio percentage loss under the conditional independence model of Definition~\ref{def_cond_indep_model_port_loss} is calculated as:
\begin{equation}\label{eqn_cond_port_loss}
	L_{n} = \sum_{i = 1}^{n} w_{i}\lgd_{i}\One{Z_{i} < \zeta_{i}(y)},
\end{equation}
where
\begin{equation}\label{eqn_zeta_y}
	\zeta_{i}(y) = \frac{F_{i}^{-1}(p_{i}) - \gamma_{i}y}{\sqrts{1-\gamma_{i}^{2}}} = G_{i}^{-1}\big(p_{i}(y)\big),
\end{equation}
and
\begin{equation}\label{eqn_cond_pd}
	p_{i}(y) = \P(D_{i} \given Y=y) = G_{i}\left(\frac{F_{i}^{-1}(p_{i}) -\gamma_{i}y}{\sqrts{1-\gamma_{i}^{2}}}\right)
\end{equation}
for $i = 1, \ldots, n$.

\begin{remark}\label{rem_inv_cond_pd}
In the abstract case where conditional PD, as well as unconditional PD and asset correlation, are known and the state of the economy is sought, we take the inverse of~\eqref{eqn_cond_pd}.  Conditional probability function ${p_{i}\colon\R\rightarrow(0,1)}$, given by~\eqref{eqn_cond_pd}, is continuous and strictly decreasing in~$y$ --- conditional PD falls (respectively, rises) as the economy improves (deteriorates).  Hence, its inverse ${p_{i}^{-1}\colon(0,1)\rightarrow\R}$ is strictly decreasing too.  In particular, 
\begin{equation}\label{eqn_inv_cond_pd}
	y = p_{i}^{-1}(x) = \frac{F_{i}^{-1}(p_{i}) - \sqrts{1-\gamma_{i}^{2}}G_{i}^{-1}(x)}{\gamma_{i}}
\end{equation}
for all $x \in (0,1)$.
\end{remark}

\begin{remark}\label{rem_cond_pd_alpha}
Let $y = H^{-1}(1\!-\!\alpha)$, where $\alpha \in (0,1)$.  Then, the PD of obligor~$i$ conditional on $Y=y$ may be interpreted as the probability of default of obligor~$i$ is no greater than
\begin{equation}\label{eqn_cond_pd_alpha}
	\P\big(D_{i} \given Y=H^{-1}(1\!-\!\alpha)\big) = p_{i}\big(H^{-1}(1\!-\!\alpha)\big) = G_{i}\left(\frac{F_{i}^{-1}(p_{i}) - \gamma_{i}H^{-1}(1\!-\!\alpha)}{\sqrts{1-\gamma_{i}^{2}}}\right)
\end{equation}
in $(\alpha \times 100)\%$ of economic scenarios.
\end{remark}

\subsection{Conditional Expectation of Portfolio Percentage Loss}\label{sect_cond_port_loss}
The model specification of the IRB approach calculates the expectation of portfolio credit losses conditional on realisation~$y\in\R$ of systematic risk factor~$Y$.  Taking expectations of~\eqref{eqn_port_loss} and~\eqref{eqn_cond_port_loss}, respectively, we define the \textit{expected portfolio percentage loss} as
\begin{equation}\label{eqn_exp_port_loss}
	\E[L_{n}] = \sum_{i = 1}^{n} w_{i}\lgd_{i}p_{i},
\end{equation}
and the \textit{conditional expectation of portfolio percentage loss} as
\begin{equation}\label{eqn_cond_exp_port_loss}
	\E[L_{n} \given Y=y] = \sum_{i = 1}^{n} w_{i}\lgd_{i}p_{i}(y).
\end{equation}

\begin{remark}\label{rem_cond_exp_decr_y}
Conditional expectation function ${\E[L_{n} \given Y]\colon\R\rightarrow(0,1)}$, given by~\eqref{eqn_cond_exp_port_loss}, is continuous and strictly decreasing in~$y$ --- conditional expectation of portfolio percentage loss falls (respectively, rises) as the economy improves (deteriorates).
\end{remark}

Key propositions underpinning the model specification of the Basel~II IRB approach are derived for an asymptotic portfolio, often described as infinitely fine-grained, in which no single credit exposure accounts for more than an arbitrarily small share of total portfolio exposure.  Accordingly, our derivation of the model specification of the IRB approach requires a mathematically more precise definition of asymptotic granularity.

\begin{definition}\label{def_asymp_port}
Let $\EAD = \sum_{k=1}^{\infty}\ead_{k}$ be an infinite series whose terms $\ead_{k} \in \R_{+}$ represent EAD assigned to obligors constituting a credit portfolio:
\begin{enumerate}[label=(\arabic*)]
	\item  The partial sums of $\EAD$ of order $n$, are defined for $n \in \N$ as
		\begin{equation*}
			\EAD_{n} = \sum_{k=1}^{n}\ead_{k}.
		\end{equation*}
	\item  An \textit{asymptotic portfolio} satisfies
		\begin{equation*}
			\sum_{n=1}^{\infty} \left(\frac{\ead_{n}}{\EAD_{n}}\right)^{2} < \infty.
		\end{equation*}
\end{enumerate}
\end{definition}

\begin{remark}\label{rem_ead_wgt}
An application of Kronecker's lemma \citep[Lemma~6.5.1]{GA05} shows that exposure weights of credits constituting an asymptotic portfolio shrink very rapidly as the number of obligors increases:
\begin{equation}\label{eqn_wgts_shrink_to_zero}
	\sum_{k=1}^{\infty}\left(\frac{\ead_{k}}{\EAD_{k}}\right)^{2} < \infty \Longrightarrow \frac{1}{\EAD_{n}^{2}}\sum_{k=1}^{n}\ead_{k}^{2} = \sum_{k=1}^{n} w_{k}^{2} \rightarrow 0 \textnormal{ as } n\rightarrow\infty.
\end{equation}
\end{remark}

\begin{remark}\label{rem_asymp_cvrg_p_series}
Suppose that ${a \leq\ead_{k}\leq b}$ where ${0 < a \leq b < \infty}$ for all $k\in\N$.  Then,
\begin{equation}\label{eqn_asymp_ead_dvrg}
	\EAD_{n} = \sum_{k=1}^{n}\ead_{k} \geq na\rightarrow\infty \text{ as } n\rightarrow\infty,
\end{equation}
and
\begin{equation}\label{eqn_asymp_cvrg_p_series}
	\sum_{n=1}^{\infty} \left(\frac{\ead_{n}}{\EAD_{n}}\right)^{2} = \sum_{n=1}^{\infty} \frac{\ead_{n}^{2}}{\left(\sum_{k=1}^{n}\ead_{k}\right)^{2}} \leq \sum_{n=1}^{\infty} \frac{b^{2}}{(na)^{2}} = \frac{b^{2}}{a^{2}}\sum_{n=1}^{\infty} \frac{1}{n^{2}} < \infty,
\end{equation}
where \eqref{eqn_asymp_cvrg_p_series} converges by the p-series test \citep[see, e.g.,][Corollary~6.13]{WWR04}.  Assuming that EAD assigned to individual obligors is bounded, an entirely uncontroversial claim, total EAD of an asymptotic portfolio diverges, but \eqref{eqn_asymp_cvrg_p_series} satisfies the definition of an asymptotic portfolio \citep[Example~2.5.3]{BOW10}.
\end{remark}

\citet[Proposition~1]{GMB03} established that, conditional on a single systematic risk factor, the portfolio percentage loss converges, almost surely, to its conditional expectation as the portfolio approaches asymptotic granularity.

\begin{restatable}{proposition}{condportloss}\label{prop_cond_port_loss}
Assume a conditional independence model of an asymptotic credit portfolio.  Then,
\begin{equation}\label{eqn_cond_port_loss_lim}
	\lim_{n\rightarrow\infty} \left(L_{n} - \sum_{i = 1}^{n} w_{i}\lgd_{i}p_{i}(Y)\right) = 0, \quad\P\text{-a.s.}
\end{equation}
\end{restatable}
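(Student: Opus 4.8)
The plan is to reduce the almost sure convergence of the triangular array to the convergence of a single \emph{fixed} random series, which classical results for sums of independent random variables together with Kronecker's lemma then dispatch. The difficulty in applying a strong law directly is that the exposure weights $w_i = \ead_i/\EAD_n$ depend on the terminal index $n$, so the quantity in \eqref{eqn_cond_port_loss_lim} is a sum over a genuine triangular array rather than the partial sums of one fixed sequence.

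First I would center and condition on the systematic factor. Writing $S_n := L_n - \sum_{i=1}^n w_i\lgd_i p_i(Y)$ and recalling from~\eqref{eqn_cond_pd} that $p_i(Y) = \E[\one_{D_i}\given Y]$, I introduce the $n$-free summands $U_i := \ead_i\lgd_i(\one_{D_i} - p_i(Y))$, so that $S_n = \EAD_n^{-1}\sum_{i=1}^n U_i$. By the conditional independence postulated in Definition~\ref{def_cond_indep_model_port_loss}, the $U_i$ are independent given $Y$, each with conditional mean zero; and since $\lgd_i\in[0,1]$ and $p_i(Y)(1-p_i(Y))\leq\tfrac14$,
\begin{equation*}
	\Var{U_i\given Y} = \ead_i^2\lgd_i^2\,p_i(Y)\big(1-p_i(Y)\big) \leq \tfrac14\,\ead_i^2.
\end{equation*}

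Next I would show the fixed series $\sum_{i=1}^\infty U_i/\EAD_i$ converges almost surely. Conditionally on $Y$ its terms are independent and centered, and the variance bound gives
\begin{equation*}
	\sum_{i=1}^\infty \frac{\Var{U_i\given Y}}{\EAD_i^2} \leq \frac14\sum_{i=1}^\infty\left(\frac{\ead_i}{\EAD_i}\right)^2 < \infty,
\end{equation*}
the finiteness being precisely the asymptotic portfolio condition of Definition~\ref{def_asymp_port}. Kolmogorov's one-series theorem thus delivers almost sure convergence of $\sum_i U_i/\EAD_i$ under the regular conditional law given each value of $Y$; as the bound above is independent of that value, integrating over the law of $Y$ (the tower property) promotes this to unconditional almost sure convergence.

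Finally I would apply Kronecker's lemma with the normalising sequence $\EAD_n\uparrow\infty$ --- the same monotone divergence underlying Remark~\ref{rem_ead_wgt} --- so that almost sure convergence of $\sum_i U_i/\EAD_i$ forces $\EAD_n^{-1}\sum_{i=1}^n U_i = S_n\to 0$ almost surely, which is~\eqref{eqn_cond_port_loss_lim}. I expect the one delicate step to be the passage from conditional to unconditional almost sure convergence; it is made routine here only because the variance bound does not depend on the conditioning value of $Y$. The decisive device throughout is the replacement of the $n$-dependent weights by the $n$-free terms $U_i/\EAD_i$, which converts a triangular-array limit into an ordinary series convergence to which Kronecker's lemma applies.
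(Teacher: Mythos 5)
Your proposal is correct and follows essentially the same route as the paper's own proof: condition on the systematic factor $Y$, bound the conditional variances by $(\ead_i/\EAD_i)^2$ via the asymptotic portfolio condition, apply Kolmogorov's convergence criterion to the $n$-free series $\sum_i U_i/\EAD_i$, and finish with Kronecker's lemma using $\EAD_n\uparrow\infty$. Your only departure is cosmetic but welcome: you make explicit the passage from $\P_y$-almost sure convergence for each realisation $y$ to unconditional $\P$-almost sure convergence, a step the paper dispatches with a single ``accordingly.''
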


\begin{proof}
See Appendix~\ref{appx_cond_port_loss}.
\end{proof}

\begin{remark}\label{rem_port_loss_may_not_cvrg}
The sequence $\{L_{n}\}_{n\in\N}$ is bounded, but it is not monotone.  Indicator function ${\One{Z_{i} < \zeta_{i}(y)}\in\{0,1\}}$ and ${\lgd_{i} \in [0,1]}$ for ${i = 1, \ldots, n}$, and ${\sum_{i=1}^{n} w_{i} = 1}$ for all $n\in\N$, where $w_{i}$ depends on $n$.  The dependence of exposure weights on the number of obligors is clear when expressed as $w_{i} = \ead_{i} / \sum_{j=1}^{n} \ead_{j}$.  Consequently, $L_{n}$, and hence ${\sum_{i = 1}^{n} w_{i}\lgd_{i}p_{i}(y)}$, may not converge as ${n\rightarrow\infty}$.
\end{remark}

\begin{remark}\label{rem_depend_only_sys_risk}
In an asymptotic portfolio idiosyncratic risk is fully diversified away, so portfolio percentage loss~$L_{n}$ depends only on systematic risk factor~$Y$.  
\end{remark}

\begin{remark}\label{rem_port_loss_near_asymp}
As a practical matter, credit portfolios of large banks are typically near the asymptotic granularity of Definition~\ref{def_asymp_port}.  Thus, given ${Y=y}$, \eqref{eqn_cond_exp_port_loss} provides a statistically accurate estimate of percentage loss on a portfolio containing a large number of credits without concentration in a few names dominating the rest of the portfolio.
\end{remark}

\subsection{Limiting Form of Portfolio Loss Distribution}\label{sect_port_loss_dist}
Risk capital for a credit portfolio is determined from its parametric or empirical loss distribution.  Assuming conditional independence given a single systematic risk factor, \citet{VO02} derived the parametric loss distribution function of an asymptotic, homogeneous credit portfolio.  In Section~\ref{sect_cond_port_loss} we define an asymptotic portfolio, here we define a homogeneous portfolio.  

\begin{definition}\label{def_homo_port}
Assume a conditional independence model of a credit portfolio.  A \textit{homogeneous portfolio} comprising $n$ obligors satisfies:
\begin{enumerate}[label=(\arabic*)]
	\item  Parameters ${\gamma_{i} = \gamma}$ for ${i = 1, \ldots, n}$ in representation~\eqref{eqn_cond_indep_rvs} of latent random variables ${W_{1}, \ldots, W_{n}}$.
	\item  Random variables ${Z_{1}, \ldots, Z_{n}}$ are drawn from the same distribution described by the continuous and strictly increasing distribution function~$G$.  Also, denote by~$F$ the common distribution function of ${W_{1}, \ldots, W_{n}}$.
	\item Obligors are assigned the same unconditional PD and LGD, that is, ${p_{i} = p}$ and ${\lgd_{i} = \lgd}$ for ${i = 1, \ldots, n}$.
 \end{enumerate}
 \end{definition}

\begin{remark}\label{rem_cond_pd_homo}
From the properties of a homogeneous credit portfolio we infer that  ${p_{i}(y) = p(y)}$ for $i = 1, \ldots, n$, and all realisations $y \in \R$ of systematic risk factor~$Y$.
\end{remark}

The following result, derived for the general case, is a corollary of Proposition~\ref{prop_cond_port_loss}.

\begin{restatable}{corollary}{limlossdist}\label{cor_lim_loss_dist}
Assume a conditional independence model of an asymptotic, homogeneous credit portfolio.  Then,
\begin{equation}\label{eqn_cond_lim_port_loss}
	\lim_{n\rightarrow\infty}L_{n} = \lgd p(Y), \quad\P\text{-a.s.} 
\end{equation}
Accordingly, the portfolio loss distribution satisfies
\begin{equation}\label{eqn_lim_loss_dist}
	\lim_{n\rightarrow\infty} \P(L_{n} \leq \el) = 1 - H\left(\frac{F^{-1}(p) - \sqrts{1-\gamma^{2}}G^{-1}(\el/\lgd)}{\gamma}\right)
\end{equation}
for all $\el \in (0,1)$.
\end{restatable}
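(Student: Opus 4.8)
The plan is to obtain both assertions as direct consequences of Proposition~\ref{prop_cond_port_loss}. First I would specialise the conditional expectation appearing in~\eqref{eqn_cond_port_loss_lim} to the homogeneous setting of Definition~\ref{def_homo_port}. There $\lgd_{i} = \lgd$ and, by Remark~\ref{rem_cond_pd_homo}, $p_{i}(Y) = p(Y)$ for every $i$, so that
\begin{equation*}
	\sum_{i=1}^{n} w_{i}\lgd_{i}p_{i}(Y) = \lgd\, p(Y)\sum_{i=1}^{n} w_{i} = \lgd\, p(Y),
\end{equation*}
using $\sum_{i=1}^{n} w_{i} = 1$. Substituting this identity into~\eqref{eqn_cond_port_loss_lim} immediately yields the almost-sure limit~\eqref{eqn_cond_lim_port_loss}.

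Next I would translate this almost-sure convergence into convergence of the loss distribution. Since $L_{n} \to \lgd\, p(Y)$ almost surely, the sequence converges in distribution to $\lgd\, p(Y)$, and hence $\P(L_{n} \leq \el) \to \P(\lgd\, p(Y) \leq \el)$ at every continuity point $\el$ of the limiting distribution function. It therefore remains to evaluate $\P(\lgd\, p(Y) \leq \el)$ and to confirm that the resulting function of $\el$ is continuous on $(0,1)$.

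To compute the limiting probability I would exploit the monotonicity recorded in Remark~\ref{rem_inv_cond_pd}: the homogeneous conditional default probability $p(\cdot)$, given by the homogeneous form of~\eqref{eqn_cond_pd}, is continuous and strictly decreasing, so that $\{\lgd\, p(Y) \leq \el\} = \{p(Y) \leq \el/\lgd\} = \{Y \geq p^{-1}(\el/\lgd)\}$. Inserting the explicit inverse~\eqref{eqn_inv_cond_pd}, with the obligor index suppressed, gives the threshold
\begin{equation*}
	p^{-1}(\el/\lgd) = \frac{F^{-1}(p) - \sqrts{1-\gamma^{2}}\,G^{-1}(\el/\lgd)}{\gamma},
\end{equation*}
and, because $H$ is continuous, $\P\big(Y \geq p^{-1}(\el/\lgd)\big) = 1 - H\big(p^{-1}(\el/\lgd)\big)$, which is exactly~\eqref{eqn_lim_loss_dist}. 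Continuity of this expression in $\el$ follows from the continuity of $H$ and of $G^{-1}$, so that every $\el \in (0,1)$ is a continuity point of the limiting law and the convergence holds throughout $(0,1)$.

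The steps are individually routine; the point demanding care is the passage from almost-sure convergence to pointwise convergence of the distribution functions, where one must invoke convergence in distribution together with the continuity of the limiting law rather than assert convergence of $\P(L_{n} \leq \el)$ outright. I expect this to be the main obstacle, together with correctly tracking the reversal of the inequality induced by the strict monotonic decrease of $p(\cdot)$.
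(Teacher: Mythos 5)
Your proof is correct, and while the first half matches the spirit of the paper, the second half takes a genuinely different route. For the almost-sure limit~\eqref{eqn_cond_lim_port_loss} you simply specialise Proposition~\ref{prop_cond_port_loss} to the homogeneous case via Remark~\ref{rem_cond_pd_homo} and $\sum_{i}w_{i}=1$; the paper's appendix instead re-runs a law-of-large-numbers argument for the conditionally i.i.d.\ Bernoulli indicators given $Y=y$. Your derivation is arguably the cleaner one (it is what the word ``corollary'' suggests), and it sidesteps a looseness in the paper's version: the classical strong law cited there applies to equally weighted averages, whereas the exposure weights $w_{i}$ are general and the convergence really rests on the asymptotic granularity condition already built into Proposition~\ref{prop_cond_port_loss}. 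For the distributional limit~\eqref{eqn_lim_loss_dist}, the paper conditions on $Y=y$, observes that the conditional limiting distribution function is the step function $\One{p^{-1}(\el/\lgd)\leq y<\infty}$, and integrates over the density of $Y$, invoking the dominated convergence theorem to exchange limit and integral; you instead pass directly from almost-sure convergence to convergence in distribution and evaluate $\P\big(\lgd\,p(Y)\leq\el\big)$ using the strict decrease of $p$ and the continuity of $H$, checking that every $\el$ is a continuity point of the limit law so that the cdf convergence holds pointwise. Both arguments are sound and land on the same expression via~\eqref{eqn_inv_cond_pd}; the paper's route buys an explicit description of the conditional loss distribution, while yours is shorter and uses only standard weak-convergence facts, with the continuity verification you flag being exactly the point that must not be skipped. (One shared implicit restriction, inherited from the statement itself and not a defect of your argument: $G^{-1}(\el/\lgd)$ requires $\el/\lgd\in(0,1)$, so $\el$ must in fact range over $(0,\lgd)$ when $\lgd<1$.)
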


\begin{proof}
See Appendix~\ref{appx_lim_loss_dist}.
\end{proof}

Corollary~\ref{cor_lim_loss_dist} generalises the \citet{VO02} formulation of the loss distribution function of an asymptotic, homogeneous portfolio, which models default dependence as a multivariate Gaussian process.

\subsection{Credit Value-at-Risk}\label{sect_cred_var}
In determining regulatory capital, the Basel~II IRB approach applies a risk measure to assign a single numerical value to a random credit loss.  The chosen risk measure is value-at-risk (VaR), one of the most widely used measures in risk management.  VaR is an extreme quantile of a loss, or profit and loss, distribution that is rarely exceeded.  Firstly, we define quantiles of a distribution. 

\begin{definition}\label{def_qntl}
Let $X$ be a random variable, and let $\alpha \in (0,1)$.  Then the $\alpha$ \textit{quantile} of the distribution of $X$ is
\begin{equation}\label{eqn_qntl}
	\q{\alpha}{X} = \inf\{x\in\R \colon \P(X \leq x) \geq \alpha\}.
\end{equation}
\end{definition}

\begin{remark}\label{rem_qntl_cont_incr_dist}
If $X$ has a continuous and strictly increasing distribution function $F$, then the $\alpha$ quantile of the distribution of X is given by
\begin{equation}\label{eqn_qntl_cont_incr_dist}
	\q{\alpha}{X} = F^{-1}(\alpha),
\end{equation}
where $F^{-1}(\alpha)$, the inverse distribution function evaluated at $\alpha$, is the number ${\q{\alpha}{X}\in\R}$ such that ${F(\q{\alpha}{X})=\alpha}$.
\end{remark}

Adopting the convention that a loss is a positive number, we now define credit VaR.

\begin{definition}\label{def_cred_var}
\textit{Credit VaR} at the confidence level $\alpha \in (0,1)$ over a given risk measurement horizon is the largest portfolio percentage loss~$\el$ such that the probability of a loss~$L_{n}$ exceeding~$\el$ is at most $(1\!-\!\alpha)$:
\begin{equation}\label{eqn_cred_var}
	\VaR[\alpha]{L_{n}} = \inf\{\el\in\R \colon \P(L_{n}>\el) \leq 1\!-\!\alpha\}.
\end{equation}
\end{definition}

In probabilistic terms, $\VaR[\alpha]{L_{n}}$ is simply the $\alpha$ quantile of the portfolio loss distribution.  Although computationally expensive, Monte Carlo simulation is routinely employed to generate the empirical loss distribution and determine VaR of a credit portfolio.  Suppose that we generate the loss distribution of a credit portfolio comprising $n$ obligors by simulation of \eqref{eqn_cond_port_loss} parameterised by~\eqref{eqn_zeta_y}.  Let Monte Carlo simulation perform $N$ iterations.  For each iteration we draw from their respective distributions random variable~$Y$ representing systematic risk, and random variables ${Z_{1}, \ldots, Z_{n}}$ representing obligor specific risks.  Then, conditional on realisation~$y_{k} \in \R$ of systematic risk factor~$Y$ describing a scenario of the economy, the portfolio percentage loss over the risk measurement horizon is computed as
\begin{equation}\label{eqn_sim_port_loss}
	L_{n,k} = \sum_{i=1}^{n} w_{i}\lgd_{i}\One{Z_{i,k} < \zeta_{i}(y_{k})}
\end{equation}
for iterations ${k = 1, \ldots, N}$.  Monte Carlo simulation computes $N$ portfolio percentage losses constituting the empirical loss distribution described by the function \citep[pp. 30--32]{BOW10}:
\begin{equation}\label{eqn_sim_loss_dist}
	F(\el) = \frac{1}{N}\sum_{k=1}^{N}\One{0 \leq L_{n,k} \leq \el}.
\end{equation}

$\VaR[\alpha]{L_{n}}$, the $\alpha$ quantile of the empirical loss distribution, is the maximum credit loss at the $\alpha$ confidence level over a given risk measurement horizon.  Expected loss is estimated by calculating the average portfolio percentage loss over $N$ iterations of the simulation:
\begin{equation}\label{eqn_sim_exp_loss}
	\E[L_{n}] = \frac{1}{N}\sum_{k=1}^{N}L_{n,k}.
\end{equation}

An analytical model of the portfolio loss distribution, on the other hand, facilitates the fast calculation of credit VaR.  In the limiting case of an asymptotic, homogeneous credit portfolio, $\VaR[\alpha]{L_{n}}$ may be determined analytically from distribution function~\eqref{eqn_lim_loss_dist}.  However, the assumptions of Corollary~\ref{cor_lim_loss_dist} are too restrictive for real-world credit portfolios.  The risk factor model for ratings-based capital charges derived by \citet{GMB03} relaxes the homogeneity assumption.  His analysis proceeds assuming that:
\begin{enumerate}[label=(\arabic*)]
	\item\label{item_port_invar_asymp}  Portfolios are infinitely fine-grained so that idiosyncratic risk is fully diversified away.
	\item\label{item_port_invar_ssrf}  A single systematic risk factor explains dependence across obligors. 
\end{enumerate}
Under these weaker assumptions, and subject to additional technical conditions, Gordy established that quantiles of the distribution of conditional expectation of portfolio percentage loss may be substituted for quantiles of the portfolio loss distribution.  The statement and proof of Proposition~5 of \citet{GMB03}, which leads to an analytical approximation of credit VaR, is relegated to Appendix~\ref{appx_gordy_prop_5}.  Here, we present a version of this proposition that relaxes the additional technical conditions imposed by Gordy, resulting in a more compact, or parsimonious, proof.  

\begin{proposition}\label{prop_qntl_loss_dist_subst}
Assume a conditional independence model of a credit portfolio comprising $n$~obligors.  Denote by $\varphi_{n}(y)$ the conditional expectation function ${\E[L_{n} \given y]\colon\R\rightarrow(0,1)}$ given by~\eqref{eqn_cond_exp_port_loss}, and assume that the sequence $\{\varphi_{n}\}_{n\in\N}$ of real-valued functions satisfies:
\begin{enumerate}[label=(\arabic*)]
	\item\label{item_mono_func}  For every $n\in\N$, function $\varphi_{n}$ is strictly monotonic.
	\item\label{item_func_intv_eps_delta}  For every $y\in\R$ and $\varepsilon > 0$ there is a $\overline{\delta}(\varepsilon)\in\R\setminus\{0\}$ and $N(\overline{\delta}, \varepsilon)\in\N$ such that $n > N(\overline{\delta}, \varepsilon)$ implies
	\begin{equation*}
		\big[\varphi_{n}(y)-\varepsilon, \varphi_{n}(y)+\varepsilon\big] \subset \big[\varphi_{n}\big(y-\overline{\delta}(\varepsilon)\big), \varphi_{n}\big(y+\overline{\delta}(\varepsilon)\big)\big],
	\end{equation*}
	where $\overline{\delta}$ depends on $\varepsilon$, and $N$ depends on $\overline{\delta}$ and $\varepsilon$, in general.  While $\overline{\delta}$ depends on $\varepsilon$, and may also depend on $y$, we assume that it is independent of $n$.
	\item\label{item_cvrg_zero_eps_delta}  For every $\xi > 0$ there is an $\varepsilon > 0$ such that $0 < \big|\overline{\delta}(\varepsilon)\big| < \xi$, that is, $\overline{\delta}(\varepsilon)$ tends to zero as $\varepsilon$ tends to zero.
	\item\label{item_func_intv_delta_eps}  For every $y\in\R$ and $\varepsilon > 0$ there is a $\underline{\delta}(\varepsilon)\in\R\setminus\{0\}$ and $N(\underline{\delta}, \varepsilon)\in\N$ such that $n > N(\underline{\delta}, \varepsilon)$ implies
	\begin{equation*}
		\big[\varphi_{n}\big(y-\underline{\delta}(\varepsilon)\big), \varphi_{n}\big(y+\underline{\delta}(\varepsilon)\big)\big] \subset \big[\varphi_{n}(y)-\varepsilon, \varphi_{n}(y)+\varepsilon\big],
	\end{equation*}
	where $\underline{\delta}$ is independent of $n$.
\end{enumerate}
Then,
\begin{equation}\label{eqn_qntl_loss_dist_subst}
	\lim_{n\rightarrow\infty}\big(L_{n}-\varphi_{n}(y)\big) = 0,\ \P\text{-a.s.} \quad\Rightarrow\quad \lim_{n\rightarrow\infty}\big|\q{\alpha}{L_{n}}-\varphi_{n}(\q{1\!-\!\alpha}{Y})\big| = 0
\end{equation}
for all $\alpha\in(0,1)$.
\end{proposition}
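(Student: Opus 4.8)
The plan is to reduce the two-sided claim to a comparison of two $\alpha$-quantiles and then prove quantile convergence by a direct $\varepsilon$-sandwich. Write $q_{n} := \q{\alpha}{L_{n}}$ and $m_{n} := \varphi_{n}\big(\q{1-\alpha}{Y}\big)$, and set $y_{\alpha} := \q{1-\alpha}{Y} = H^{-1}(1-\alpha)$. Because $\varphi_{n}$ is strictly decreasing (Remark~\ref{rem_cond_exp_decr_y}) and $H$ is continuous and strictly increasing, $\{\varphi_{n}(Y) \le \varphi_{n}(y_{\alpha})\} = \{Y \ge y_{\alpha}\}$, so $\P(\varphi_{n}(Y) \le m_{n}) = 1 - H(y_{\alpha}) = \alpha$; that is, $m_{n} = \q{\alpha}{\varphi_{n}(Y)}$ exactly. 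The hypothesis of the implication is $R_{n} := L_{n} - \varphi_{n}(Y) \to 0$, $\P$-a.s.\ (the almost-sure statement necessarily reads the argument of $\varphi_{n}$ as the random $Y$, cf.\ Proposition~\ref{prop_cond_port_loss}), so in particular $R_{n} \to 0$ in probability. The goal is thus $|q_{n} - m_{n}| \to 0$.

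Fix $\varepsilon > 0$. I would show $q_{n} \le m_{n} + \varepsilon$ and $q_{n} \ge m_{n} - \varepsilon$ for all large $n$; since $\varepsilon$ is arbitrary this gives the conclusion. By Definition~\ref{def_qntl} it suffices to prove $\P(L_{n} \le m_{n} + \varepsilon) \ge \alpha$ eventually (which forces $q_{n} \le m_{n}+\varepsilon$) and $\P(L_{n} \le m_{n} - \varepsilon) < \alpha$ eventually (which forces $q_{n} \ge m_{n}-\varepsilon$). On the event $\{|R_{n}| \le \varepsilon/2\}$ one has $\varphi_{n}(Y) - \varepsilon/2 \le L_{n} \le \varphi_{n}(Y) + \varepsilon/2$, whence the Bonferroni-type bounds $\P(L_{n} \le m_{n}+\varepsilon) \ge \P(\varphi_{n}(Y) \le m_{n}+\varepsilon/2) - \P(|R_{n}| > \varepsilon/2)$ and $\P(L_{n} \le m_{n}-\varepsilon) \le \P(\varphi_{n}(Y) \le m_{n}-\varepsilon/2) + \P(|R_{n}| > \varepsilon/2)$.

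The crux is to read off from the fixed law $H$ a strict probability margin around $\alpha$ that survives as $n \to \infty$. Applying condition~\ref{item_func_intv_delta_eps} at $y = y_{\alpha}$ with tolerance $\varepsilon/2$ produces a displacement $\underline{\delta}$, \emph{independent of $n$}, with $m_{n} < \varphi_{n}(y_{\alpha}+\underline{\delta}) \le m_{n}+\varepsilon/2$ and $m_{n}-\varepsilon/2 \le \varphi_{n}(y_{\alpha}-\underline{\delta}) < m_{n}$ for every large $n$ (here $y_{\alpha}+\underline{\delta} < y_{\alpha}$ because $\varphi_{n}$ decreases). Monotonicity of $\varphi_{n}$ then gives $\P(\varphi_{n}(Y) \le m_{n}+\varepsilon/2) \ge \P(\varphi_{n}(Y) \le \varphi_{n}(y_{\alpha}+\underline{\delta})) = 1 - H(y_{\alpha}+\underline{\delta}) = \alpha + c$, where $c := (1-\alpha) - H(y_{\alpha}+\underline{\delta}) > 0$ depends on $\varepsilon$ but not on $n$; symmetrically $\P(\varphi_{n}(Y) \le m_{n}-\varepsilon/2) \le 1 - H(y_{\alpha}-\underline{\delta}) = \alpha - c'$ with $c' > 0$. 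Choosing $n$ so large that also $\P(|R_{n}| > \varepsilon/2) < \min(c, c')$ turns the two displayed bounds into $\P(L_{n} \le m_{n}+\varepsilon) > \alpha$ and $\P(L_{n} \le m_{n}-\varepsilon) < \alpha$, as required. Conditions~\ref{item_func_intv_eps_delta} and~\ref{item_cvrg_zero_eps_delta} supply the reverse (outer) inclusion together with $\overline{\delta}(\varepsilon) \to 0$, which legitimises the equivalent route of sandwiching $q_{n}$ directly between $\varphi_{n}(y_{\alpha} \pm \overline{\delta})$ and confirms that these bands may be made arbitrarily narrow.

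The main obstacle is that $\varphi_{n}$ is a moving target: both $L_{n}$ and its centring $m_{n} = \varphi_{n}(y_{\alpha})$ vary with $n$, so the textbook route --- weak or almost-sure convergence to a \emph{fixed} limit law together with continuity of that law's quantile --- is simply unavailable. What rescues the argument is the $n$-independence of the displacement $\underline{\delta}$ encoded in condition~\ref{item_func_intv_delta_eps}: it converts a fixed loss-space tolerance into a fixed $y$-neighbourhood of $y_{\alpha}$ uniformly in $n$, so the fixed $H$ delivers a margin $c > 0$ that does not decay with $n$. The delicate point is the order of limits: one must fix $\varepsilon$ (hence freeze $c, c' > 0$) before sending $n \to \infty$, so that the vanishing $\P(|R_{n}| > \varepsilon/2)$ eventually drops below the frozen margin; coupling the two limits would collapse the margin and break the proof.
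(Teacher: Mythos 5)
Your proof is correct, but it is not the paper's proof; it takes a genuinely different and in fact leaner route. The paper proceeds through three auxiliary results: Lemma~\ref{lem_cvrg_seq_func_inv} (an inverse-function convergence statement, which consumes Conditions~\ref{item_mono_func}--\ref{item_cvrg_zero_eps_delta}), Corollary~\ref{cor_cvrg_dist_seq_func} (the pointwise limit $F_{n}\big(\varphi_{n}(y)\big) \to 1-H(y)$ for strictly decreasing $\varphi_{n}$), and Lemma~\ref{lem_qntl_mono_func_rvs} (the quantile of a monotone transform of $Y$); it then evaluates that limiting distribution at $\q{1\!-\!\alpha}{Y} \pm \underline{\delta}$ to get strict inequalities against $\alpha$ and invokes Condition~\ref{item_func_intv_delta_eps} to squeeze $\q{\alpha}{L_{n}}$ into the $\varepsilon$-band around $\varphi_{n}(\q{1\!-\!\alpha}{Y})$. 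You bypass the inverse-function machinery entirely: you identify $m_{n}=\varphi_{n}(\q{1\!-\!\alpha}{Y})$ as the exact $\alpha$-quantile of $\varphi_{n}(Y)$ (the content of Lemma~\ref{lem_qntl_mono_func_rvs}), compare $F_{n}$ with the law of $\varphi_{n}(Y)$ via a union bound on the event $\{|L_{n}-\varphi_{n}(Y)|>\varepsilon/2\}$, and extract from Condition~\ref{item_func_intv_delta_eps} an $n$-uniform probability margin $c, c'>0$ supplied by the fixed law $H$. All steps check out: the Bonferroni inequalities are valid, the sign convention $\underline{\delta}<0$ for decreasing $\varphi_{n}$ is handled correctly, the quantile deductions from $\P(L_{n}\le m_{n}+\varepsilon)\ge\alpha$ and $\P(L_{n}\le m_{n}-\varepsilon)<\alpha$ follow from Definition~\ref{def_qntl}, and you rightly freeze $\varepsilon$ (hence $c,c'$) before sending $n\to\infty$. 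In technique your argument is actually closer to Gordy's original proof as reproduced in Appendix~\ref{appx_gordy_prop_5} --- the Petrov-type comparison of Lemma~\ref{lem_dist_abs_diff} --- but run under the weaker Condition~\ref{item_func_intv_delta_eps} in place of the uniform derivative bounds. What each approach buys: the paper's detour produces the distributional limit $F_{n}\big(\varphi_{n}(y)\big)\to 1-H(y)$ as a reusable intermediate of independent interest, whereas your proof is shorter and never uses Conditions~\ref{item_func_intv_eps_delta}--\ref{item_cvrg_zero_eps_delta}, so it establishes the implication under strictly weaker hypotheses; your closing sentence crediting those two conditions with ``legitimising'' an equivalent sandwich is dispensable and could simply be deleted.
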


\begin{remark}\label{rem_bnds_pd_corr}
We argue that Conditions~\ref{item_mono_func}--\ref{item_func_intv_delta_eps} of Proposition~\ref{prop_qntl_loss_dist_subst} are quite reasonable assumptions for real-world credit portfolios.  Observe that the conditional expectation function~$\varphi_{n}$, given by~\eqref{eqn_cond_exp_port_loss}, is continuous and strictly decreasing in~$y$ by Remark~\ref{rem_cond_exp_decr_y}.  Therefore, it satisfies Condition~\ref{item_mono_func}, which is also a condition of Lemma~\ref{lem_cvrg_seq_func_inv} and Corollary~\ref{cor_cvrg_dist_seq_func}, both used in the proof of Proposition~\ref{prop_qntl_loss_dist_subst}.  

For Conditions~\ref{item_func_intv_eps_delta}--\ref{item_func_intv_delta_eps} to hold, the curve of $\varphi_{n}$ cannot have horizontal or vertical segments in the neighbourhood of $\q{1\!-\!\alpha}{Y}$.  This is guaranteed by constraints $\overline{\delta}(\varepsilon)\in\R\setminus\{0\}$, $0 < \big|\overline{\delta}(\varepsilon)\big| < \xi$, and $\underline{\delta}(\varepsilon)\in\R\setminus\{0\}$.  By inspection of~\eqref{eqn_cond_pd}, Conditions~\ref{item_func_intv_eps_delta}--\ref{item_func_intv_delta_eps} are satisfied if $p_{i}$ and $\gamma_{i}^{2}$ are bounded away from zero and one for ${i = 1, \ldots, n}$.  Otherwise, $\varphi_{n}$ would no longer depend on $y$, thus violating Condition~\ref{item_mono_func}.  As a practical matter, if $p_{i}$ were equal to zero, then the capital charge assessed on credit~$i$ would be zero; and if $p_{i}$ were equal to one, then the product of EAD and LGD assigned to obligor~$i$ would be charged against profit and loss.  
\end{remark}

\begin{remark}\label{rem_cond_exp_func_diffbl}
If on some open interval $I$ containing $\q{1\!-\!\alpha}{Y}$, $\varphi_{n}$ were also differentiable on $I$, then Conditions~\ref{item_func_intv_eps_delta}--\ref{item_func_intv_delta_eps}  would be satisfied if
\begin{equation*}
	-\infty < -\ub \leq \varphi_{n}^{\prime}(y) \leq -\lb < 0
\end{equation*}
for all $y \in I$, with $\lb > 0$ and $\ub > 0$ independent of $n$.  Indeed, Proposition~\ref{prop_gordy_prop_5} \citep[Proposition~5]{GMB03} assumes that this condition holds on an open interval $I$ containing $\q{1\!-\!\alpha}{Y}$.
\end{remark}

Recall that $\VaR[\alpha]{L_{n}} = \q{\alpha}{L_{n}}$.  So, Proposition~\ref{prop_qntl_loss_dist_subst} asserts that the $\alpha$ quantile of the distribution of $\E[L_{n} \given Y]$, which is associated with the $(1\!-\!\alpha)$ quantile of the distribution of $Y$, may be substituted for the $\alpha$ quantile of the distribution of $L_{n}$ (i.e., credit VaR at the $\alpha$ confidence level over a given risk measurement horizon).  The IRB approach rests on Proposition~\ref{prop_qntl_loss_dist_subst}.  Its proof, presented below, relies on the following lemmas and corollary.  Note that in this section, ${F_{1}, \ldots, F_{n}}$ denote a sequence of distribution functions, as distinct from the notation adopted in Section~\ref{sect_gen_set}.

\begin{lemma}\label{lem_cvrg_seq_func_inv}
Let $\{g_{n}\}_{n\in\N}$ be a sequence of real-valued functions ${g_{n}\colon\R\rightarrow\R}$ that satisfies Conditions~\ref{item_mono_func}--\ref{item_cvrg_zero_eps_delta} of Proposition~\ref{prop_qntl_loss_dist_subst}.  Then, for every $b\in\R$,
\begin{equation}\label{eqn_cvrg_seq_func_inv}
	\lim_{n\rightarrow\infty}\big(a_{n} - g_{n}(b)\big) = 0 \quad\Rightarrow\quad \lim_{n\rightarrow\infty}g_{n}^{-1}(a_{n}) = b.
\end{equation}
\end{lemma}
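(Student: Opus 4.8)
The plan is to run a direct $\xi$-$\varepsilon$-$\overline{\delta}$ chase, converting the hypothesis $a_{n}-g_{n}(b)\to 0$ into the assertion that, for large $n$, the point $a_{n}$ is trapped in an interval on which $g_{n}^{-1}$ confines its argument to a window of radius $|\overline{\delta}|$ about $b$.  Fix $b\in\R$ and suppose $\lim_{n\to\infty}\big(a_{n}-g_{n}(b)\big)=0$.  To prove $\lim_{n\to\infty}g_{n}^{-1}(a_{n})=b$ it suffices to show that for each $\xi>0$ there is an $M\in\N$ with $\big|g_{n}^{-1}(a_{n})-b\big|<\xi$ whenever $n>M$.

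Given $\xi>0$, Condition~\ref{item_cvrg_zero_eps_delta} furnishes an $\varepsilon>0$ for which the quantity $\overline{\delta}(\varepsilon)$ associated with the point $y=b$ in Condition~\ref{item_func_intv_eps_delta} satisfies $0<\big|\overline{\delta}(\varepsilon)\big|<\xi$.  With this $\varepsilon$ now fixed, the convergence hypothesis supplies $N_{1}\in\N$ such that $n>N_{1}$ forces $a_{n}\in\big[g_{n}(b)-\varepsilon,\,g_{n}(b)+\varepsilon\big]$, while Condition~\ref{item_func_intv_eps_delta} applied at $y=b$ supplies $N_{2}=N(\overline{\delta},\varepsilon)$ such that $n>N_{2}$ yields the inclusion $\big[g_{n}(b)-\varepsilon,\,g_{n}(b)+\varepsilon\big]\subset\big[g_{n}(b-\overline{\delta}),\,g_{n}(b+\overline{\delta})\big]$.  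Setting $M=\max(N_{1},N_{2})$, for every $n>M$ we obtain $a_{n}\in\big[g_{n}(b-\overline{\delta}),\,g_{n}(b+\overline{\delta})\big]$, an interval whose endpoints are genuine values of $g_{n}$; since each $g_{n}$ is strictly monotonic by Condition~\ref{item_mono_func}, hence injective, $g_{n}^{-1}(a_{n})$ is well-defined and lies between $b-\overline{\delta}$ and $b+\overline{\delta}$.

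It remains to convert this containment into the bound $\big|g_{n}^{-1}(a_{n})-b\big|\leq|\overline{\delta}|$, and here one must track the sign of $\overline{\delta}$ --- precisely the reason Condition~\ref{item_func_intv_eps_delta} permits $\overline{\delta}\in\R\setminus\{0\}$.  If $g_{n}$ is strictly increasing, then the ordering $g_{n}(b-\overline{\delta})\leq g_{n}(b+\overline{\delta})$ of the endpoints forces $\overline{\delta}>0$, and applying the increasing inverse $g_{n}^{-1}$ gives $g_{n}^{-1}(a_{n})\in[b-\overline{\delta},\,b+\overline{\delta}]$.  If $g_{n}$ is strictly decreasing, the same ordering forces $\overline{\delta}<0$, and applying the decreasing inverse reverses the interval to give $g_{n}^{-1}(a_{n})\in[b+\overline{\delta},\,b-\overline{\delta}]$.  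In either case the window has radius $|\overline{\delta}|$ about $b$, so $\big|g_{n}^{-1}(a_{n})-b\big|\leq\big|\overline{\delta}(\varepsilon)\big|<\xi$ for all $n>M$, which is the desired conclusion.

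The step I expect to carry the real content is the uniform-in-$n$ bookkeeping: the argument closes only because Condition~\ref{item_func_intv_eps_delta} guarantees that a single $\overline{\delta}(\varepsilon)$ works simultaneously for every sufficiently large $n$.  Were $\overline{\delta}$ allowed to depend on $n$, the terminal bound $|\overline{\delta}|$ could fail to remain below $\xi$ along the tail and the conclusion would collapse; so the crux is to use $\varepsilon$ --- chosen once via Condition~\ref{item_cvrg_zero_eps_delta} --- to pin down a fixed $\overline{\delta}$, and only then let $n\to\infty$.  The monotonicity case split is routine but must be written carefully to avoid sign errors when $g_{n}^{-1}$ reverses orientation.
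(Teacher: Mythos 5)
Your proof is correct and follows essentially the same route as the paper's: the chain $a_{n}\in[g_{n}(b)-\varepsilon,\,g_{n}(b)+\varepsilon]\subset\big[g_{n}\big(b-\overline{\delta}(\varepsilon)\big),\,g_{n}\big(b+\overline{\delta}(\varepsilon)\big)\big]$ via Condition~\ref{item_func_intv_eps_delta}, application of the monotone inverse with the sign of $\overline{\delta}$ tracked by the direction of monotonicity, and Condition~\ref{item_cvrg_zero_eps_delta} to force $|\overline{\delta}(\varepsilon)|<\xi$. Your version is in fact slightly more careful about the quantifier bookkeeping (fixing $\varepsilon$ from $\xi$ up front and taking $M=\max(N_{1},N_{2})$), which tightens a step the paper leaves somewhat informal, but it is the same argument.
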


\begin{proof}
The sufficient condition for the conclusion in~\eqref{eqn_cvrg_seq_func_inv} states that for every $\varepsilon > 0$ there is an $N_{0}(\varepsilon)\in\N$ such that $n > N_{0}(\varepsilon)$ implies
\begin{equation*}
	\big| a_{n} - g_{n}(b) \big| \leq \varepsilon,
\end{equation*}
which may be expressed as
\begin{equation*}
		a_{n} \in [g_{n}(b)-\varepsilon, g_{n}(b)+\varepsilon] \subset \big[g_{n}\big(b-\overline{\delta}(\varepsilon)\big), g_{n}\big(b+\overline{\delta}(\varepsilon)\big)\big],
\end{equation*}
where the subset relation holds when Condition~\ref{item_func_intv_eps_delta} is satisfied.  Observe that since $g_{n}$ is strictly monotonic for every $n\in\N$ by hypothesis, $g_{n}$ is one-to-one and ${g_n^{-1}\big(g_{n}(b)\big)=b}$.  Also, $\overline{\delta}(\varepsilon)>0$ if $g_{n}$ is strictly increasing, and $\overline{\delta}(\varepsilon)<0$ if $g_{n}$ is strictly decreasing.  Then, an application of the inverse function~$g_{n}^{-1}$ yields
\begin{equation*}
	g_{n}^{-1}(a_{n}) \in \big[b-\big|\overline{\delta}(\varepsilon)\big|, b+\big|\overline{\delta}(\varepsilon)\big|\big] \subset [b-\xi, b+\xi],
\end{equation*}
where the subset relation holds when Condition~\ref{item_cvrg_zero_eps_delta} is satisfied.  Choosing $N(\overline{\delta}, \varepsilon) = N_{0}(\varepsilon)$ such that $\xi$ is arbitrarily close to zero establishes the necessary condition of the hypothesis in~\eqref{eqn_cvrg_seq_func_inv}.
\end{proof}

\begin{corollary}\label{cor_cvrg_dist_seq_func}
Let $X_{n}$ and $Y$ be random variables defined on a common probability space with distribution functions $F_{n}$ and $H$, respectively.  If a sequence $\{g_{n}\}_{n\in\N}$ of real-valued functions ${g_{n}\colon\R\rightarrow\R}$ satisfies Conditions~\ref{item_mono_func}--\ref{item_cvrg_zero_eps_delta} of Proposition~\ref{prop_qntl_loss_dist_subst}, then
\begin{equation}\label{eqn_cvrg_seq_func_inv_rvs}
	\lim_{n\rightarrow\infty}\big(X_{n}-g_{n}(y)\big)=0,\ \P\text{-a.s.} \quad\Rightarrow\quad \lim_{n\rightarrow\infty}g_{n}^{-1}(X_{n})=Y.
\end{equation}
Moreover, if $H$ is continuous, then for every realisation~$y\in\R$ of $Y$,
\begin{equation}\label{eqn_cvrg_dist_seq_func_incr}
	\lim_{n\rightarrow\infty}F_{n}\big(g_{n}(y)\big) = H(y)
\end{equation}
when functions $g_{n}$ are strictly increasing, and
\begin{equation}\label{eqn_cvrg_dist_seq_func_decr}
	\lim_{n\rightarrow\infty}F_{n}\big(g_{n}(y)\big) = 1-H(y)
\end{equation}
when functions $g_{n}$ are strictly decreasing.
\end{corollary}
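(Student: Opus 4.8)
The plan is to establish the two assertions separately: the first implication \eqref{eqn_cvrg_seq_func_inv_rvs} by applying Lemma~\ref{lem_cvrg_seq_func_inv} pathwise, and the distributional limits \eqref{eqn_cvrg_dist_seq_func_incr}--\eqref{eqn_cvrg_dist_seq_func_decr} by feeding the first conclusion into the standard fact that almost sure convergence forces convergence in distribution. Throughout I read the hypothesis as the pathwise statement $X_{n}-g_{n}(Y)\to 0$, $\P$-a.s., with $y=Y(\omega)$ denoting the generic realisation, consistent with Proposition~\ref{prop_cond_port_loss}.

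For the first implication I would fix a sample point $\omega$ in the event of full $\P$-measure on which $X_{n}(\omega)-g_{n}\big(Y(\omega)\big)\to 0$, set $a_{n}=X_{n}(\omega)$ and $b=Y(\omega)$, and invoke Lemma~\ref{lem_cvrg_seq_func_inv} (whose Conditions~\ref{item_mono_func}--\ref{item_cvrg_zero_eps_delta} are exactly those assumed here) to get $g_{n}^{-1}\big(X_{n}(\omega)\big)\to Y(\omega)$. Since this holds for $\P$-a.e.\ $\omega$, it delivers $g_{n}^{-1}(X_{n})\to Y$, $\P$-a.s. Strict monotonicity (Condition~\ref{item_mono_func}) is what makes each $g_{n}^{-1}$ well defined, so no additional justification is needed there.

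For the second part I would write $U_{n}=g_{n}^{-1}(X_{n})$, observe that $U_{n}\to Y$ $\P$-a.s.\ hence in distribution, and note that continuity of $H$ upgrades the distributional convergence to $\P(U_{n}\leq y)\to H(y)$ for \emph{all} $y\in\R$. It then remains to rewrite $F_{n}\big(g_{n}(y)\big)=\P\big(X_{n}\leq g_{n}(y)\big)$ through the monotone map $g_{n}^{-1}$. When $g_{n}$ is strictly increasing, so is $g_{n}^{-1}$, and $\{X_{n}\leq g_{n}(y)\}=\{U_{n}\leq y\}$, giving $F_{n}\big(g_{n}(y)\big)=\P(U_{n}\leq y)\to H(y)$. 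When $g_{n}$ is strictly decreasing, $g_{n}^{-1}$ reverses the inequality, so $\{X_{n}\leq g_{n}(y)\}=\{U_{n}\geq y\}$ and $F_{n}\big(g_{n}(y)\big)=1-\P(U_{n}<y)$.

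The hard part will be the decreasing case, where one is left with the strict inequality $\P(U_{n}<y)$, whereas convergence in distribution directly controls only $\P(U_{n}\leq y)$. This is exactly where continuity of $H$ is indispensable: picking $y'<y$ with $H(y)-H(y')$ small shows $\liminf_{n}\P(U_{n}<y)\geq H(y')$, while $\P(U_{n}<y)\leq\P(U_{n}\leq y)\to H(y)$, so $\P(U_{n}<y)$ and $\P(U_{n}\leq y)$ share the common limit $H(y)=H(y^{-})$ and $F_{n}\big(g_{n}(y)\big)\to 1-H(y)$. I expect this reconciliation of strict versus non-strict inequalities to be the only genuinely delicate point; the rest is a routine transfer through $g_{n}^{-1}$.
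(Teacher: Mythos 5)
Your proof is correct and follows essentially the same route as the paper's: the first implication is obtained by applying Lemma~\ref{lem_cvrg_seq_func_inv} pathwise on the almost sure event, and the distributional limits follow by upgrading almost sure convergence of $g_{n}^{-1}(X_{n})$ to convergence of the distribution functions at every $y$ (via continuity of $H$) and transferring through the monotone map $g_{n}^{-1}$. If anything, your squeeze argument reconciling $\P\big(g_{n}^{-1}(X_{n})<y\big)$ with $\P\big(g_{n}^{-1}(X_{n})\leq y\big)$ in the decreasing case is more careful than the paper, which simply writes $\P\big(X_{n}\geq g_{n}(y)\big)=1-\P\big(X_{n}\leq g_{n}(y)\big)$, tacitly ignoring a possible atom of $X_{n}$ at $g_{n}(y)$.
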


\begin{proof}
Random variables are real-valued functions on some probability space, so~\eqref{eqn_cvrg_seq_func_inv_rvs} is an immediate consequence of Lemma~\ref{lem_cvrg_seq_func_inv}.

The almost sure convergence of the sufficient condition for the conclusion in~\eqref{eqn_cvrg_seq_func_inv_rvs} implies pointwise convergence \citep[see, e.g.,][Definition 7.1]{WWR04} of the sequence of distribution functions of $g_{n}^{-1}(X_{n})$ to the distribution function of $Y$ at every point of continuity of $H$.  If $H$ is continuous, then convergence occurs for every realisation~$y\in\R$ of $Y$.  It follows from the necessary condition of the hypothesis in~\eqref{eqn_cvrg_seq_func_inv_rvs} that
\begin{equation*}
	\P\big(g_{n}^{-1}(X_{n}) \leq y\big) = \P\big(X_{n} \leq g_{n}(y)\big) = F_{n}\big(g_{n}(y)\big)
\end{equation*}
converges to ${\P(Y \leq y) = H(y)}$ as $n\rightarrow\infty$ if functions $g_{n}$ are strictly increasing, which establishes~\eqref{eqn_cvrg_dist_seq_func_incr}.  Similarly,
\begin{equation*}
	\P\big(g_{n}^{-1}(X_{n}) \leq y\big) = \P\big(X_{n} \geq g_{n}(y)\big) = 1 - \P\big(X_{n} \leq g_{n}(y)\big) = 1 - F_{n}\big(g_{n}(y)\big)
\end{equation*}
converges to ${\P(Y \leq y) = H(y)}$ as $n\rightarrow\infty$ if functions $g_{n}$ are strictly decreasing, which establishes~\eqref{eqn_cvrg_dist_seq_func_decr}.
\end{proof}

\begin{lemma}\label{lem_qntl_mono_func_rvs}
Let $Y$ be a random variable with continuous and strictly increasing distribution function~$H$, and let $g\colon\R\rightarrow\R$ be a strictly monotonic function.  Then, the $\alpha$~quantile of the distribution function of $g(Y)$ is
\begin{equation}\label{eqn_qntl_incr_func_rvs}
	\q{\alpha}{g(Y)} = g\big(H^{-1}(\alpha)\big)
\end{equation}
if $g$ is strictly increasing, and
\begin{equation}\label{eqn_qntl_decr_func_rvs}
	\q{\alpha}{g(Y)} = g\big(H^{-1}(1-\alpha)\big)
\end{equation}
if $g$ is strictly decreasing.
\end{lemma}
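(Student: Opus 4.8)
The plan is to reduce the computation of $\q{\alpha}{g(Y)}$ to the fact already recorded in Remark~\ref{rem_qntl_cont_incr_dist}, namely that a random variable with continuous and strictly increasing distribution function has $\alpha$~quantile equal to the value of the inverse distribution function at~$\alpha$. Since $H$ is continuous and strictly increasing, Remark~\ref{rem_qntl_cont_incr_dist} gives $\q{\alpha}{Y}=H^{-1}(\alpha)$, and a strictly monotonic $g$ is a bijection onto its range with strictly monotonic inverse~$g^{-1}$. The idea is therefore to write the distribution function of $g(Y)$ explicitly in terms of $H$ and $g^{-1}$, observe that it is again continuous and strictly increasing, and then apply Remark~\ref{rem_qntl_cont_incr_dist} a second time — now to $g(Y)$ — reading off the quantile by inversion.

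First I would treat the strictly increasing case. Using the event equivalence $g(Y)\le x \Leftrightarrow Y\le g^{-1}(x)$, the distribution function of $g(Y)$ is $\P\big(g(Y)\le x\big)=H\big(g^{-1}(x)\big)$ for $x$ in the range of $g$. As a composition of continuous, strictly increasing maps this is continuous and strictly increasing there, so Remark~\ref{rem_qntl_cont_incr_dist} yields $\q{\alpha}{g(Y)}$ as the unique solution of $H\big(g^{-1}(x)\big)=\alpha$, namely $x=g\big(H^{-1}(\alpha)\big)$, which is~\eqref{eqn_qntl_incr_func_rvs}. For the strictly decreasing case the event equivalence becomes $g(Y)\le x \Leftrightarrow Y\ge g^{-1}(x)$, and here the continuity of $H$ is used to write $\P\big(Y\ge g^{-1}(x)\big)=1-H\big(g^{-1}(x)\big)$; since $g^{-1}$ is strictly decreasing this function is once more continuous and strictly increasing in~$x$, and solving $1-H\big(g^{-1}(x)\big)=\alpha$ gives $x=g\big(H^{-1}(1-\alpha)\big)$, which is~\eqref{eqn_qntl_decr_func_rvs}.

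More in the spirit of Definition~\ref{def_qntl}, one may instead verify directly that the claimed value $x^{\ast}$ (equal to $g\big(H^{-1}(\alpha)\big)$ or $g\big(H^{-1}(1-\alpha)\big)$) is the infimum of $\{x\colon\P(g(Y)\le x)\ge\alpha\}$. Membership of $x^{\ast}$ follows from the event equivalences above together with $H\big(H^{-1}(\alpha)\big)=\alpha$, which already gives $\q{\alpha}{g(Y)}\le x^{\ast}$ using only the monotonicity of~$g$. The step I expect to require the most care is the matching lower bound, i.e.\ showing that no value below $x^{\ast}$ lies in the defining set: monotonicity of $g$ alone only delivers $\P\big(g(Y)\le x\big)\le\alpha$ for $x<x^{\ast}$, whereas the infimum characterisation demands the strict inequality $\P\big(g(Y)\le x\big)<\alpha$. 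This gap is closed by the observation that the induced distribution function $H\circ g^{-1}$ (respectively $1-H\circ g^{-1}$) is \emph{strictly} increasing, which rests jointly on the strict monotonicity of $H$ and of $g$ and on the continuity of~$H$. It is precisely this point that makes the distribution-function route the more economical one, since it lets Remark~\ref{rem_qntl_cont_incr_dist} supply both bounds, and hence the exact value of the quantile, at a single stroke.
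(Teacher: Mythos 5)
Your proposal is correct and takes essentially the same route as the paper's proof: both reduce the claim to the event equivalences $g(Y)\le g(y)\Leftrightarrow Y\le y$ (respectively $Y\ge y$ in the decreasing case), the continuity of $H$ to convert $\P\big(Y\ge H^{-1}(1\!-\!\alpha)\big)$ into $1-H\big(H^{-1}(1\!-\!\alpha)\big)=\alpha$, and Remark~\ref{rem_qntl_cont_incr_dist} applied to $\q{\alpha}{Y}=H^{-1}(\alpha)$. The only difference is presentational --- the paper checks the single candidate point $g\big(H^{-1}(\alpha)\big)$ attains probability exactly $\alpha$ and appeals to the infimum in Definition~\ref{def_qntl}, whereas you invert the explicit distribution function $H\circ g^{-1}$ (respectively $1-H\circ g^{-1}$); your closing remark that the \emph{strict} increase of this induced distribution function is what rules out smaller values in the defining set makes explicit precisely the step the paper leaves implicit.
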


\begin{proof}
By Definition~\ref{def_qntl}, the $\alpha$ quantile of $g(Y)$ is
\begin{equation}\label{eqn_qntl_y}
	\q{\alpha}{g(Y)} = \inf\big\{g(y)\in\R \colon \P\big(g(Y) \leq g(y)\big) \geq \alpha\big\}.
\end{equation}
Since $H$ is continuous and strictly increasing by hypothesis, the $\alpha$ quantile of the distribution of~$Y$ is given by
\begin{equation}\label{eqn_qntl_cont_incr_y}
	\q{\alpha}{Y} = H^{-1}(\alpha),
\end{equation}
where $H^{-1}(\alpha)$ is the inverse distribution function evaluated at $\alpha$ (Remark~\ref{rem_qntl_cont_incr_dist}).  Observe that if $g$ is strictly increasing, then
\begin{IEEEeqnarray*}{rCl}
	\P\big(g(Y) \leq g(\q{\alpha}{Y})\big) & = & \P\big(g(Y) \leq g(H^{-1}(\alpha))\big)	\\
	& = & \P\big(Y \leq H^{-1}(\alpha)\big) \\
	& = & \alpha,
\end{IEEEeqnarray*}
where the first equality is a consequence of~\eqref{eqn_qntl_cont_incr_y}, and the second equality is the result of an application of inverse function $g^{-1}$.  Hence, \eqref{eqn_qntl_incr_func_rvs} follows from~\eqref{eqn_qntl_y}.  By a parallel argument, if $g$ is strictly decreasing, then
\begin{IEEEeqnarray*}{rCl}
	\P\big(g(Y) \leq g(\q{1-\alpha}{Y})\big) & = &  \P\big(g(Y) \leq g(H^{-1}(1\!-\!\alpha))\big)	\\
	& = & \P\big(Y \geq H^{-1}(1\!-\!\alpha)\big) \\
	& = & 1 - \P\big(Y \leq H^{-1}(1\!-\!\alpha)\big) \\
	& = & \alpha,
\end{IEEEeqnarray*}
which establishes~\eqref{eqn_qntl_decr_func_rvs}.
\end{proof}

\begin{proof}[Proof of Proposition~\ref{prop_qntl_loss_dist_subst}]
Fix $\alpha\in(0,1)$, set $\varphi_{n}(y) = \q{\alpha}{\varphi_{n}(Y)}$, and denote by $F_{n}$ the distribution function of $L_{n}$.  By appealing to results due to Corollary~\ref{cor_cvrg_dist_seq_func} and Lemma~\ref{lem_qntl_mono_func_rvs} for strictly decreasing functions $\varphi_{n}$, observe that
\begin{IEEEeqnarray*}{rCl}
	\lim_{n\rightarrow\infty}F_{n}\big(\q{\alpha}{\varphi_{n}(Y)}\big) & = & \lim_{n\rightarrow\infty}F_{n}\big(\varphi_{n}(H^{-1}(1\!-\!\alpha))\big) = \lim_{n\rightarrow\infty}F_{n}\big(\varphi_{n}(\q{1\!-\!\alpha}{Y})\big)	\\
	& = & 1 - H\big(\q{1\!-\!\alpha}{Y}\big) = 1 - H\big(H^{-1}(1\!-\!\alpha)\big) = \alpha.\IEEEyesnumber\label{eqn_cvrg_dist_decr_func_alpha}
\end{IEEEeqnarray*}
The first equality follows from~\eqref{eqn_qntl_decr_func_rvs}, the second from~\eqref{eqn_qntl_cont_incr_y}, the third from~\eqref{eqn_cvrg_dist_seq_func_decr}, and the fourth from~\eqref{eqn_qntl_cont_incr_y} again.  Notice that $\underline{\delta}(\varepsilon)>0$ if $\varphi_{n}$ is strictly increasing, and $\underline{\delta}(\varepsilon)<0$ if $\varphi_{n}$ is strictly decreasing.  By Remark~\ref{rem_cond_exp_decr_y}, $\varphi_{n}(y)=\E[L_{n} \given y]$ is strictly decreasing, thus satisfying Condition~\ref{item_mono_func}.  Then, on the basis of~\eqref{eqn_cvrg_dist_decr_func_alpha} and subject to Condition~\ref{item_func_intv_delta_eps}, lower and upper bounds, respectively, on the $\alpha$ quantile of $L_{n}$ are deduced:
\begin{equation*}
	\lim_{n\rightarrow\infty}F_{n}\big(\varphi_{n}(\q{1\!-\!\alpha}{Y}-\underline{\delta}(\varepsilon))\big) = 1-H\big(\q{1\!-\!\alpha}{Y}+\big|\underline{\delta}(\varepsilon)\big|\big) < \alpha,
\end{equation*}
and
\begin{equation*}
	\lim_{n\rightarrow\infty}F_{n}\big(\varphi_{n}(\q{1\!-\!\alpha}{Y}+\underline{\delta}(\varepsilon))\big) = 1-H\big(\q{1\!-\!\alpha}{Y}-\big|\underline{\delta}(\varepsilon)\big|\big) > \alpha,
\end{equation*}
which may be expressed as
\begin{equation*}
	\q{\alpha}{L_{n}} \in \big[\varphi_{n}\big(\q{1\!-\!\alpha}{Y}-\underline{\delta}(\varepsilon)\big), \varphi_{n}\big(\q{1\!-\!\alpha}{Y}+\underline{\delta}(\varepsilon)\big)\big].
\end{equation*}
Finally, by Condition~\ref{item_func_intv_delta_eps}, for every $\varepsilon>0$ there is a $\underline{\delta}(\varepsilon)\in\R$ and $N(\underline{\delta}, \varepsilon)\in\N$ such that $n > N(\underline{\delta}, \varepsilon)$ implies
\begin{equation*}
	\q{\alpha}{L_{n}} \in \big[\varphi_{n}\big(\q{1\!-\!\alpha}{Y}\big)-\varepsilon, \varphi_{n}\big(\q{1\!-\!\alpha}{Y}\big)+\varepsilon\big],
\end{equation*}
which establishes the necessary condition of the hypothesis in \eqref{eqn_qntl_loss_dist_subst}.
\end{proof}

\section{Copula Approach to Modelling Default Dependence}\label{sect_cop_appr}
In generating a portfolio loss distribution we are, in effect, combining marginal loss distributions of constituent credits into a multivariate distribution capturing default dependence between obligors.  An approach to modelling default dependence, popularised by \citet{LDX00}, uses copula functions --- a statistical technique for combining marginal distributions into a multivariate distribution with a chosen dependence structure.  

\subsection{Single-Factor Copula Model}\label{sect_sfc_model}
Consider a credit portfolio comprising $n$ obligors, and let \eqref{eqn_dflt_event} define the event that obligor~$i$ defaults.  We may express the unconditional PD of obligor~$i$, for the general case introduced in Section~\ref{sect_gen_set}, as
\begin{equation}\label{eqn_pd_cop}
	\P(D_{i}) = \P\big(W_{i} < F_{i}^{-1}(p_{i})\big),
\end{equation}
and the joint default probability as
\begin{equation}\label{eqn_joint_pd_cop}
	\P\big(\one_{D_{1}}=1, \ldots, \one_{D_{n}}=1\big) = \P\big(W_{1} < F_{1}^{-1}(p_{1}), \ldots, W_{n} < F_{n}^{-1}(p_{n})\big).
\end{equation}

In the sequel, $\overline{\R}$ denotes the extended real number line $[-\infty, \infty]$.  Let ${(u_{1}, \ldots, u_{n}) = (p_{1}, \ldots, p_{n})}$ be a vector in $[0,1]^{n}$, and ${(W_{1}, \ldots, W_{n})}$ a vector of latent random variables with continuous and strictly increasing distribution functions $F_{1}, \ldots, F_{n}$, respectively.  Suppose that $F$ is an $n$-dimensional distribution function with margins $F_{1}, \ldots, F_{n}$.  Then, by Sklar's theorem \citep[see, e.g.,][Theorem~2.10.9]{NRB06}, there is a unique $n$-copula~$C$ such that for all $(w_{1}, \ldots, w_{n}) \in \overline{\R}^{n}$,
\begin{equation}\label{eqn_multi_dist_cop}
	F(w_{1}, \ldots, w_{n}) = C\big(F_{1}(w_{1}), \ldots, F_{n}(w_{n})\big).
\end{equation}
Now, for any ${(u_{1}, \ldots, u_{n}) \in [0,1]^{n}}$,
\begin{IEEEeqnarray*}{rCl}
	C(u_{1}, \ldots, u_{n}) & = & F\big(F_{1}^{-1}(u_{1}), \ldots, F_{n}^{-1}(u_{n})\big) \\
	& = & \P\big(W_{1} < F_{1}^{-1}(u_{1}), \ldots, W_{n} < F_{n}^{-1}(u_{n})\big),\IEEEyesnumber\label{eqn_dflt_cop}
\end{IEEEeqnarray*}
by a corollary to Sklar's theorem \citep[see, e.g.,][Corollary~2.10.10]{NRB06}.

Assuming that defaults are conditionally independent given systematic risk factor~$Y$, latent random variables ${W_{1}, \ldots, W_{n}}$ may be represented as in \eqref{eqn_cond_indep_rvs}:
\begin{equation*}
	W_{i} = \gamma_{i}Y + \sqrts{1 - \gamma_{i}^{2}}Z_{i},
\end{equation*}
where $Z_{1}, \ldots, Z_{n}$ and $Y$ are mutually independent random variables with continuous and strictly increasing distribution functions $G_{1}, \ldots, G_{n}$ and $H$, respectively, and $\gamma_{i} \in (-1,1)$ for $i = 1, \ldots, n$.  

\begin{restatable}{lemma}{sfcmodel}\label{lem_sfc_model}
Assume a conditional independence model of a credit portfolio comprising $n$ obligors.  Then, default dependence may be described by the single-factor copula function associated with ${(W_{1}, \ldots, W_{n})}$:
\begin{equation}\label{eqn_sfc_model}
	C(u_{1}, \ldots, u_{n}) = \int_{-\infty}^{\infty} \left(\prod_{i=1}^{n} G_{i}\left( \frac{F_{i}^{-1}(u_{i}) - \gamma_{i}y}{\sqrts{1-\gamma_{i}^{2}}} \right)\right) \dee{H(y)}
\end{equation}
for any $(u_{1}, \ldots, u_{n}) \in [0,1]^{n}$.
\end{restatable}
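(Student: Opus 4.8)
The plan is to begin from the representation of the copula as a joint probability supplied by~\eqref{eqn_dflt_cop},
\[
	C(u_{1}, \ldots, u_{n}) = \P\big(W_{1} < F_{1}^{-1}(u_{1}), \ldots, W_{n} < F_{n}^{-1}(u_{n})\big),
\]
and to evaluate the right-hand side by conditioning on the systematic risk factor~$Y$. First I would invoke the law of total probability to write this unconditional joint probability as an integral over realisations $y\in\R$ of $Y$ against its distribution function $H$, so that the integrand is the conditional joint probability $\P\big(W_{1} < F_{1}^{-1}(u_{1}), \ldots, W_{n} < F_{n}^{-1}(u_{n}) \given Y = y\big)$.

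The decisive step is to factor this conditional joint probability into a product of conditional marginals. This is legitimate because, given $Y = y$, representation~\eqref{eqn_cond_indep_rvs} expresses each $W_{i} = \gamma_{i}y + \sqrts{1-\gamma_{i}^{2}}Z_{i}$ as a deterministic, strictly increasing function of $Z_{i}$ alone, and the $Z_{1}, \ldots, Z_{n}$ are mutually independent by hypothesis; hence $W_{1}, \ldots, W_{n}$ are conditionally independent given $Y$. Each factor is then evaluated exactly as in the derivation of the conditional probability of default~\eqref{eqn_cond_pd}: since $\gamma_{i}\in(-1,1)$ guarantees $\sqrts{1-\gamma_{i}^{2}} > 0$, isolating $Z_{i}$ preserves the inequality and yields
\[
	\P\big(W_{i} < F_{i}^{-1}(u_{i}) \given Y = y\big) = G_{i}\!\left(\frac{F_{i}^{-1}(u_{i}) - \gamma_{i}y}{\sqrts{1-\gamma_{i}^{2}}}\right).
\]
Substituting the product of these factors back under the integral sign produces~\eqref{eqn_sfc_model}.

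I do not anticipate a serious obstacle, as the argument simply lifts the single-obligor conditional computation already performed for~\eqref{eqn_cond_pd} to the multivariate setting. The one point demanding genuine care is the factorisation across obligors, which relies on the conditional independence of $W_{1}, \ldots, W_{n}$ given $Y$; this follows from the mutual independence of $Z_{1}, \ldots, Z_{n}$ and $Y$ together with the functional form~\eqref{eqn_cond_indep_rvs}, and is precisely the structural assumption that defines the conditional independence model. I would also remark that continuity of the margins $F_{i}$ makes the strict and non-strict inequalities interchangeable, confirming that~\eqref{eqn_sfc_model} indeed returns the prescribed margins $u_{i}$ upon fixing the remaining arguments at one.
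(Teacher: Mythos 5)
Your proposal is correct and follows essentially the same route as the paper's proof: condition on the systematic risk factor~$Y$ via the law of total probability, factor the conditional joint probability into conditional marginals using the conditional independence of $W_{1}, \ldots, W_{n}$ given~$Y$, and evaluate each factor exactly as in the derivation of the conditional PD~\eqref{eqn_cond_pd}. Your added remarks on why the factorisation is legitimate and on the interchangeability of strict and non-strict inequalities are sound refinements that the paper leaves implicit.
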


\begin{proof}
See Appendix~\ref{appx_sfc_model_ofgc}.
\end{proof}

\begin{remark}\label{rem_sfc_model_sim}
Monte Carlo simulation computes the portfolio percentage loss for a very large sample of realisations $y \in \R$ of systematic risk factor~$Y$.  For each realisation~$y$ we use \eqref{eqn_cond_port_loss} to determine the number of defaults $k$, and calculate portfolio percentage loss by summing the product of exposure weight and LGD for credits that have defaulted.  This simulation procedure is discussed in some detail in Section~\ref{sect_cvrg_asymp_dist}.  In generating the empirical loss distribution of a credit portfolio by simulation~\eqref{eqn_sim_port_loss}, we are implicitly applying the single-factor copula model using Monte Carlo methods.
\end{remark}

\subsection{One-Factor Gaussian Copula}\label{sect_ofgc}
Throughout this section, as the title implies, we restrict our attention to the special case in which default dependence is modelled as a multivariate Gaussian process.  Consider a credit portfolio comprising $n$ obligors.  Substituting the inverse standard Gaussian distribution function into~\eqref{eqn_pd_cop} and \eqref{eqn_joint_pd_cop}, the unconditional PD of obligor~$i$ becomes
\begin{equation}\label{eqn_pd_gauss_cop}
	\P(D_{i}) = \P\big(W_{i} < \Phi^{-1}(p_{i})\big),
\end{equation}
and the joint default probability is given by
\begin{equation}\label{eqn_joint_pd_gauss_cop}
	\P\big(\one_{D_{1}}=1, \ldots, \one_{D_{n}}=1\big) = \P\big(W_{1} < \Phi^{-1}(p_{1}), \ldots, W_{n} < \Phi^{-1}(p_{n})\big).
\end{equation}
Let ${(u_{1}, \ldots, u_{n}) = (p_{1}, \ldots, p_{n})}$ be a vector in $[0,1]^{n}$, and choose a dependence structure described by correlation matrix $\Gamma$.  Then, the unique Gaussian copula associated with ${(W_{1}, \ldots, W_{n})}$ is a particular case of~\eqref{eqn_dflt_cop}:
\begin{IEEEeqnarray*}{rCl}
	C_{\Gamma}(u_{1}, \ldots, u_{n}) & = & \Phi_{\Gamma}\big(\Phi^{-1}(u_{1}), \ldots, \Phi^{-1}(u_{n})\big)\IEEEyesnumber\label{eqn_gauss_cop} \\
	& = & \P\big(W_{1} < \Phi^{-1}(u_{1}), \ldots, W_{n} < \Phi^{-1}(u_{n})\big), 
\end{IEEEeqnarray*}
for any ${(u_{1}, \ldots, u_{n}) \in [0,1]^{n}}$, where $\Phi_{\Gamma}$ is the multivariate standard Gaussian distribution function with correlation matrix $\Gamma$.

Now suppose that defaults are conditionally independent given a single systematic risk factor.  Then, default dependence is described by correlation matrix
\begin{equation}\label{eqn_ofgc_gamma}
	\widehat{\Gamma} = 
	\begin{pmatrix*}[c]
		1					&  \sqrts{\rho_{1}\rho_{2}} & \cdots	& \sqrts{\rho_{1}\rho_{n}}  \\
		\sqrts{\rho_{1}\rho_{2}}	& 1					& \cdots	& \sqrts{\rho_{2}\rho_{n}} \\
		\vdots				& \vdots 				& \ddots	& \vdots \\
		\sqrts{\rho_{1}\rho_{n}}	& \sqrts{\rho_{2}\rho_{n}}	& \cdots	& 1 \\ 
	\end{pmatrix*},
\end{equation}
where $\sqrts{\rho_{i}\rho_{j}} = \Corr{W_{i}}{W_{j}}$ is the pairwise correlation between obligors' asset values (Remark \ref{rem_corr_asset_values}), and $\sqrts{\rho_{i}}$ is the exposure of obligor~$i$ to systematic risk factor~$Y$ in \eqref{eqn_cond_indep_gauss_rvs}.  This special case of the Gaussian copula is the so-called \textit{one-factor Gaussian copula}, the most commonly applied copula function in credit risk modelling \citep{MS12}.  The following result, a corollary of Lemma~\ref{lem_sfc_model}, provides an expression for the one-factor Gaussian copula.

\begin{restatable}{corollary}{ofgc}\label{cor_ofgc}
Assume a Gaussian conditional independence model of a credit portfolio comprising $n$ obligors with pairwise asset correlations defined by matrix~\eqref{eqn_ofgc_gamma}.  Then, default dependence may be described by the one-factor Gaussian copula associated with ${(W_{1}, \ldots, W_{n})}$: 
\begin{IEEEeqnarray*}{rCl}
	C_{\widehat{\Gamma}}(u_{1}, \ldots, u_{n}) & = & \Phi_{\widehat{\Gamma}}\big(\Phi^{-1}(u_{1}), \ldots, \Phi^{-1}(u_{n})\big) \\
	& = &\int_{-\infty}^{\infty} \left(\prod_{i=1}^{n} \Phi\left( \frac{\Phi^{-1}(u_{i}) - \sqrts{\rho_{i}}y}{\sqrts{1-\rho_{i}}} \right)\right) \phi(y) \dee{y}\IEEEyesnumber\label{eqn_ofgc}
\end{IEEEeqnarray*}
for any $(u_{1}, \ldots, u_{n}) \in [0,1]^{n}$.
\end{restatable}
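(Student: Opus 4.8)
The plan is to obtain the integral representation in~\eqref{eqn_ofgc} by specialising the general single-factor copula formula of Lemma~\ref{lem_sfc_model} to the Gaussian setting, and to obtain the first equality directly from the definition~\eqref{eqn_gauss_cop} of the Gaussian copula. The guiding observation is that a Gaussian conditional independence model is exactly the instance of the general model of Definition~\ref{def_cond_indep_model_port_loss} in which every latent variable is standard Gaussian.

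First I would confirm that $(W_1, \ldots, W_n)$ is jointly Gaussian with correlation matrix~$\widehat{\Gamma}$. Because $Y, Z_1, \ldots, Z_n$ are mutually independent standard Gaussian random variables, the vector $(Y, Z_1, \ldots, Z_n)$ is jointly Gaussian, and each $W_i = \sqrts{\rho_i}Y + \sqrts{1-\rho_i}Z_i$ in representation~\eqref{eqn_cond_indep_gauss_rvs} is a linear combination of its entries; hence $(W_1, \ldots, W_n)$ is jointly Gaussian. A short computation gives $\Var{W_i} = \rho_i + (1-\rho_i) = 1$, so each $W_i$ is standard Gaussian, while for $i \neq j$ the independence of the idiosyncratic factors leaves only the common systematic term and yields $\Corr{W_i}{W_j} = \sqrts{\rho_i\rho_j}$, as recorded in Remark~\ref{rem_corr_asset_values}. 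Consequently the joint distribution function of $(W_1, \ldots, W_n)$ is $\Phi_{\widehat{\Gamma}}$, the multivariate standard Gaussian distribution function indexed by the correlation matrix~\eqref{eqn_ofgc_gamma}, and the copula associated with $(W_1, \ldots, W_n)$ is the Gaussian copula $C_{\widehat{\Gamma}}$. The first equality $C_{\widehat{\Gamma}}(u_1, \ldots, u_n) = \Phi_{\widehat{\Gamma}}\bigl(\Phi^{-1}(u_1), \ldots, \Phi^{-1}(u_n)\bigr)$ is then the particular case $\Gamma = \widehat{\Gamma}$ of~\eqref{eqn_gauss_cop}.

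Second I would apply Lemma~\ref{lem_sfc_model} under the Gaussian identifications $F_i = G_i = \Phi$ and $H = \Phi$, so that the Stieltjes integrator becomes $\dee{H(y)} = \phi(y)\dee{y}$. Matching the general representation~\eqref{eqn_cond_indep_rvs} against~\eqref{eqn_cond_indep_gauss_rvs} identifies $\gamma_i = \sqrts{\rho_i}$, whence $\gamma_i^2 = \rho_i$ and $1 - \gamma_i^2 = 1 - \rho_i$. Substituting these into~\eqref{eqn_sfc_model} turns each factor $G_i\bigl((F_i^{-1}(u_i) - \gamma_i y)/\sqrts{1-\gamma_i^2}\bigr)$ into $\Phi\bigl((\Phi^{-1}(u_i) - \sqrts{\rho_i}y)/\sqrts{1-\rho_i}\bigr)$ and replaces $\dee{H(y)}$ by $\phi(y)\dee{y}$, producing the right-hand side of~\eqref{eqn_ofgc} verbatim.

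Since the result is a direct specialisation, I do not expect a substantive obstacle. The only step needing care is verifying that the single-factor representation~\eqref{eqn_cond_indep_gauss_rvs} reproduces the off-diagonal entries of~\eqref{eqn_ofgc_gamma}; this is exactly Remark~\ref{rem_corr_asset_values}, and it is what guarantees that the Gaussian copula indexed by~$\widehat{\Gamma}$ coincides with the copula of $(W_1, \ldots, W_n)$, so that the closed-form expression $\Phi_{\widehat{\Gamma}}(\cdots)$ and the integral representation describe the same object.
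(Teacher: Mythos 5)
Your proof is correct and follows essentially the same route as the paper's: the paper likewise establishes \eqref{eqn_ofgc} by substituting the Gaussian conditional PD \eqref{eqn_cond_pd_gauss} into the single-factor copula formula \eqref{eqn_sfc_model} of Lemma~\ref{lem_sfc_model} and observing $\dee{\Phi(y)} = \phi(y)\dee{y}$, with the first equality already supplied by \eqref{eqn_gauss_cop}. Your additional verification that $(W_{1}, \ldots, W_{n})$ is jointly standard Gaussian with correlation matrix $\widehat{\Gamma}$ is a welcome completeness check that the paper leaves implicit (via Remark~\ref{rem_corr_asset_values} and the discussion preceding the corollary), but it does not change the argument.
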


\begin{proof}
See Appendix~\ref{appx_sfc_model_ofgc}.
\end{proof}

\begin{remark}\label{rem_irb_ofgc}
Appealing to Proposition~\ref{prop_qntl_loss_dist_subst}, the Basel~II IRB approach applies the one-factor Gaussian copula to calculate the $\alpha$ quantile of the distribution of $\E[L_{n} \given Y]$, an analytical approximation of the $\alpha$ quantile of the distribution of $L_{n}$, or $\VaR[\alpha]{L_{n}}$.
\end{remark}

\subsection{Elliptical Copulas}\label{sect_ellip_cop}
In applying the one-factor Gaussian copula to calculate regulatory capital for credit risk, the IRB approach implicitly assumes that a multivariate Gaussian distribution accurately models tail risk of credit portfolios.  But, it is generally acknowledged that models which assume that financial data follow a Gaussian distribution tend to underestimate tail risk.  For one, Gaussian copulas do not exhibit tail dependence --- the tendency for extreme observations (i.e., credit defaults) to occur simultaneously for all random variables.  Under a Gaussian copula defaults are said to be \textit{asymptotically independent} in the upper tail \citep{EMS02}.  Section~\ref{sect_sens_ellip_cop} examines the effect of tail dependence by measuring the sensitivity of credit risk capital to dependence structure as modelled by elliptical copulas, including Gaussian and Student's $t$ copulas.  We abbreviate the latter by $t$-copula.
 
Let ${(X_{1}, \ldots, X_{n})}$ be a vector of latent random variables modelling default dependence of a portfolio comprising $n$~obligors. We proceed to illustrate the dependence induced by a variety of elliptical copulas, and outline procedures for randomly generating observations drawn from the resultant multivariate distributions \citep[pp.~106--108]{BOW10}:
\begin{itemize}
	\item  \textit{One-factor Gaussian copula}.  Observations ${X_{1}, \ldots, X_{n}}$ are randomly generated with 
		\begin{equation}\label{eqn_gauss_cop_rvs}
			X_{i} = \sqrts{\rho_{i}}Y + \sqrts{1 - \rho_{i}}Z_{i},
		\end{equation}
		where random variables $Z_{1}, \ldots, Z_{n}$ and $Y$ are standard Gaussian and mutually independent, and correlation parameters ${\rho_{1}, \ldots, \rho_{n} \in (0,1)}$.  That is, we sample ${(X_{1}, \ldots, X_{n})}$ from the distribution induced by Gaussian copula~\eqref{eqn_gauss_cop} with correlation matrix~\eqref{eqn_ofgc_gamma}.  Note that  \eqref{eqn_gauss_cop_rvs} is simply the conditionally independent representation expressed in~\eqref{eqn_cond_indep_gauss_rvs}. 
	
	\item  \textit{Product copula with Gaussian margins}.  The product copula generates independent, and therefore uncorrelated, standard Gaussian random variables \citep[Theorem~8.2.5]{ELM03}.  So, observations ${X_{1}, \ldots, X_{n}}$ are independently drawn from the standard Gaussian distribution.  That is, we sample ${(X_{1}, \ldots, X_{n})}$ from the distribution induced by Gaussian copula~\eqref{eqn_gauss_cop} with correlation matrix $I_{n}$, the $n$-by-$n$ identity matrix.
	
	\item  \textit{$t$-copula with $\nu$ degrees of freedom and $t$-distributed margins}.  Observations ${X_{1}, \ldots, X_{n}}$ are randomly generated with
		\begin{equation}\label{eqn_t_cop_rvs}
			X_{i} = \sqrts{\frac{\nu}{V}} \left(\sqrts{\rho_{i}}Y + \sqrts{1 - \rho_{i}}Z_{i}\right),
		\end{equation}
where ${Z_{1}, \ldots, Z_{n}}$ and ${Y \sim \normdist{0}{1}}$, ${V \sim \chi^{2}(\nu)}$, and ${Z_{1}, \ldots, Z_{n}}$, $Y$ and $V$ are mutually independent.  Scaling~\eqref{eqn_cond_indep_gauss_rvs} by $\sqrts{\nu / V}$ transforms standard Gaussian random variables into $t$-distributed random variables with $\nu$ degrees of freedom.  Vector ${(X_{1}, \ldots, X_{n})}$ inherits correlation matrix~\eqref{eqn_ofgc_gamma}.

	\item  \textit{$t$-copula with $\nu$ degrees of freedom and Gaussian margins}.  Observations ${X_{1}, \ldots, X_{n}}$ are randomly generated with
		\begin{equation}\label{eqn_t_cop_gauss_rvs}
			X_{i} = \Phi^{-1}\left(\Phi_{\nu}\left(\sqrts{\frac{\nu}{V}} \left(\sqrts{\rho_{i}}Y + \sqrts{1 - \rho_{i}}Z_{i}\right)\right)\right),
		\end{equation}
where $\Phi^{-1}$ is the inverse standard Gaussian distribution function, and $\Phi_{\nu}$ is the Student's $t$ distribution function with $\nu$ degrees of freedom.
\end{itemize}

\begin{figure}[!t]
	\centering
	\caption{Bivariate scatter plots illustrate the dependence induced by a variety of elliptical copulas.  Each point corresponds to an ordered pair $(X_{1},X_{2})$.  Except for the product copula where $X_{1}$ and $X_{2}$ are uncorrelated, ${\rho_{1} = \rho_{2} = 0.170}$, the exposure-weighted average asset correlation of the representative credit portfolio described in Section~\ref{sect_empir_data}.}
	\label{fig_bivar_ellip_cop}
	\scalebox{0.50}{
	\begin{tabular}{ccc}
		\input{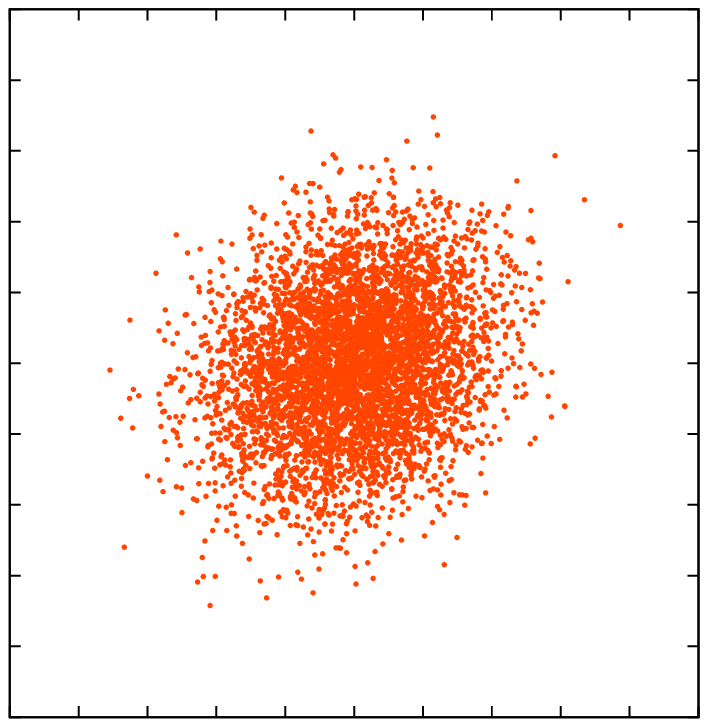}	& \input{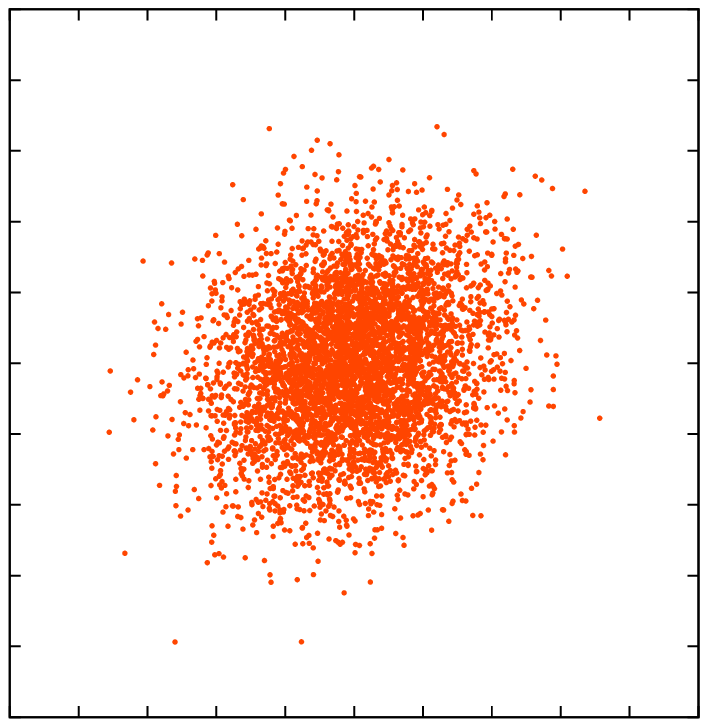}	& \input{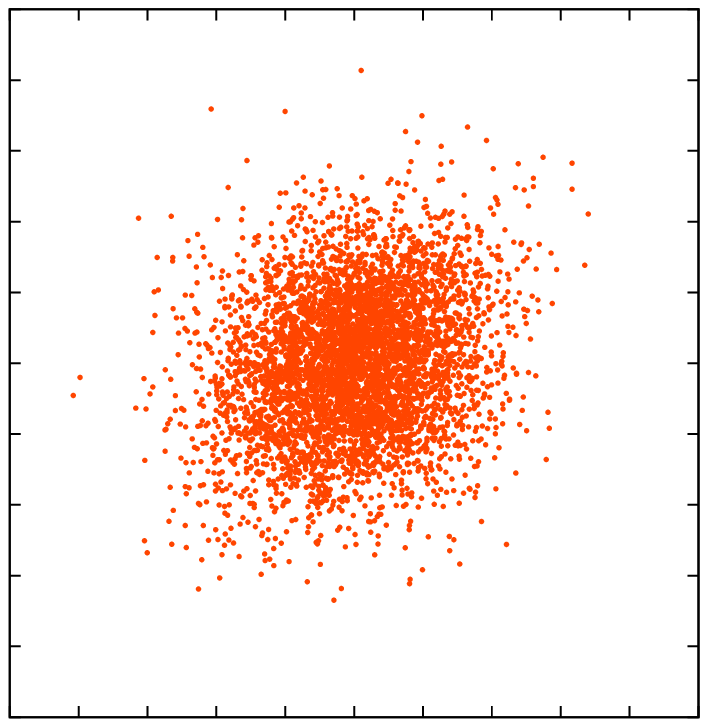}	\\
		\input{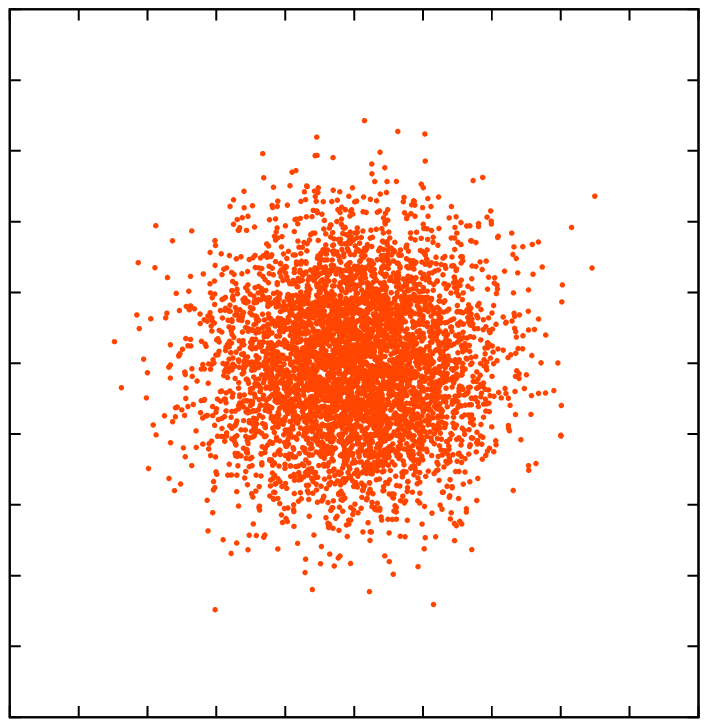}	& \input{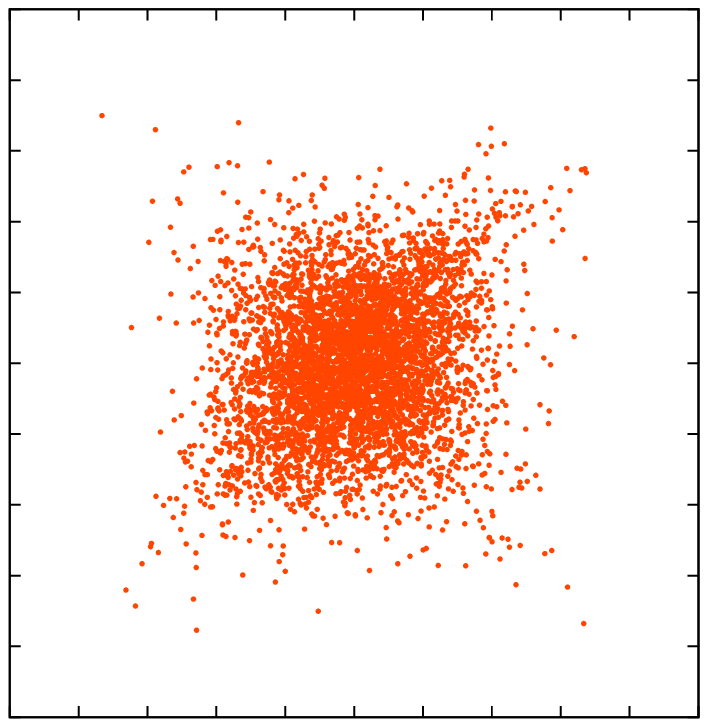}	& \input{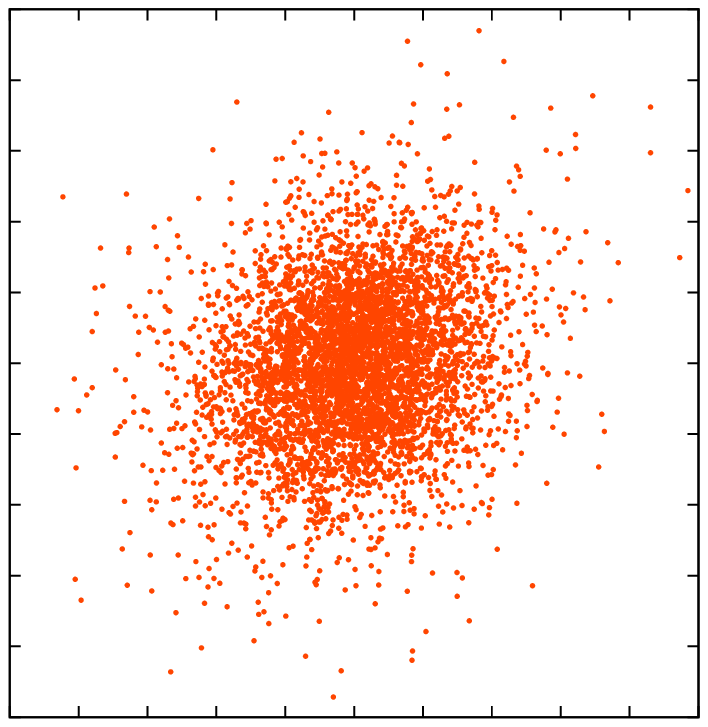}
	\end{tabular}
	}
\end{figure}

The bivariate scatter plots in Figure~\ref{fig_bivar_ellip_cop} illustrate the dependence induced by the elliptical copulas described above.  For each copula, except the product copula, we set ${\rho_{1} = \rho_{2} = 0.170}$, the exposure-weighted asset correlation of the representative credit portfolio described in Section~\ref{sect_empir_data}.

\begin{remark}\label{rem_ellip_cop}
The dependence exhibited by the one-factor Gaussian copula becomes apparent when compared with the product copula, which generates uncorrelated standard Gaussian random variables.  In contrast to Gaussian copulas, $t$-copulas admit tail dependence with fewer degrees of freedom producing stronger dependence.  When the $t$-copula is applied to combine Gaussian margins and $t$-distributed margins, respectively, the former is more tightly distributed.
\end{remark}

Assuming that asset values follow a log-normal distribution, the distribution of ${(X_{1}, \ldots, X_{n})}$ is determined by the copula function chosen to combine its margins.  Section~\ref{sect_cred_var} outlines the procedure for generating the loss distribution of a credit portfolio comprising $n$ obligors by simulation of \eqref{eqn_cond_port_loss}.  In an implementation of the one-factor Gaussian copula, the default indicator function of \eqref{eqn_cond_port_loss} is parameterised by~\eqref{eqn_zeta_y_gauss}.  Monte Carlo simulation performs $N$~iterations, \eqref{eqn_sim_port_loss} calculates the portfolio percentage loss for each iteration, and \eqref{eqn_sim_loss_dist} describes the empirical loss distribution. 

In relation to the $t$-copula with Gaussian margins, we continue to assume that unconditional PDs scaled to a given risk measurement horizon are published as market data.  Substituting~\eqref{eqn_t_cop_rvs} into~\eqref{eqn_dflt_event}, we define the event that obligor~$i$ defaults during the risk measurement horizon by the set
\begin{equation}\label{eqn_dflt_event_t_cop}
	D_{i} = \left\{ \sqrts{\frac{\nu}{V}} \left(\sqrts{\rho_{i}}Y + \sqrts{1 - \rho_{i}}Z_{i}\right) < \Phi_{\nu}^{-1}(p_{i}) \right\}.
\end{equation}
Then, the PD of obligor~$i$ conditional on $Y = y$ and $V = v$ is deducible \citep[pp.~109--111]{BOW10}:
\begin{IEEEeqnarray*}{rCl}
	p_{i}(y, v) & = & \P\left( \sqrts{\frac{\nu}{V}} \left(\sqrts{\rho_{i}}Y + \sqrts{1 - \rho_{i}}Z_{i}\right) < \Phi_{\nu}^{-1}(p_{i}) \given Y = y, V = v \right) \\
	& = & \P\left(Z_{i} < \frac{\sqrts{v/\nu}\Phi_{\nu}^{-1}(p_{i}) - \sqrts{\rho_{i}}y}{\sqrts{1-\rho_{i}}}\right) \\
	& = & \Phi\left( \frac{\sqrts{v/\nu}\Phi_{\nu}^{-1}(p_{i}) - \sqrts{\rho_{i}}y}{\sqrts{1-\rho_{i}}} \right).\IEEEyesnumber\label{eqn_cond_pd_t_cop}
\end{IEEEeqnarray*}
Let
\begin{equation}\label{eqn_zeta_y_t_cop}
	\zeta_{i}(y, v) = \frac{\sqrts{v/\nu}\Phi_{\nu}^{-1}(p_{i}) - \sqrts{\rho_{i}}y}{\sqrts{1-\rho_{i}}}
\end{equation}
for $i = 1, \dots, n$.  Now, given $Y = y$ and $V = v$, the portfolio percentage loss is calculated as
\begin{equation}\label{eqn_port_loss_t_cop}
	L_{n} = \sum_{i = 1}^{n} w_{i}\lgd_{i}\One{Z_{i} < \zeta_{i}(y, v)}.
\end{equation}

Suppose that we generate the loss distribution of a portfolio comprising $n$ obligors by simulation of \eqref{eqn_port_loss_t_cop}, an implementation of the $t$-copula with Gaussian margins.  Let Monte Carlo simulation perform $N$~iterations.  For each iteration we draw from the standard Gaussian distribution random variables ${Z_{1}, \ldots, Z_{n}}$ and $Y$, and from the chi-square distribution with $\nu$ degrees of freedom random variable~$V$.  Then, given $Y = y_{k}$ and $V = v_{k}$, the portfolio percentage loss over the risk measurement horizon is computed as
\begin{equation}\label{eqn_sim_port_loss_t_cop}
	L_{n,k} = \sum_{i=1}^{n} w_{i}\lgd_{i}\One{Z_{i,k} < \zeta_{i}(y_{k}, v_{k})}
\end{equation}
for iterations ${k = 1, \ldots, N}$.  Again, \eqref{eqn_sim_loss_dist} describes the empirical loss distribution.

\section{Model Specification of the Internal Ratings-Based Approach}\label{sect_model_spec_irb}
Under the Basel~II Accord \citep{BCBS128}, ADIs assess capital adequacy for credit risk using either the standardised approach or, subject to approval, the IRB approach.  Our concern is with the theoretical foundations and empirical analysis of the latter approach.  In keeping with the Basel~II IRB approach to capital adequacy for credit risk, the relevant prudential standard of APRA \citeyearpar{APS113} requires that ADIs set aside provisions for absorbing expected losses, and hold capital against unexpected losses.

\begin{definition} \label{def_unexp_loss}
\textit{Unexpected loss} on a credit portfolio at the $\alpha$ confidence level over a given risk measurement horizon is the difference between credit VaR (with the same confidence level and time horizon) and expected loss.
\end{definition}

Recall that $L_{n}$ denotes the portfolio percentage loss on a credit portfolio comprising $n$ obligors.  Then, $\VaR[\alpha]{L_{n}}$ denotes the portfolio percentage loss at the $\alpha$ confidence level over a given risk measurement horizon, and $\E[L_{n}]$ the expected portfolio percentage loss.  We define credit risk capital consistent with the IRB approach of the Basel~II Accord and the relevant prudential standard of APRA.

\begin{definition} \label{def_cap_cred_asymp}
Let \textit{credit risk capital} be held against unexpected losses.  Then,
\begin{equation}\label{eqn_cap_pct_asymp}
	K_{\alpha}(L_{n}) = \VaR[\alpha]{L_{n}} - \E[L_{n}]
\end{equation}
is the capital charge (at the $\alpha$ confidence level over a given risk measurement horizon) as a percentage of EAD on a credit portfolio comprising $n$ obligors.
\end{definition}

Appealing to Definition~\ref{def_cap_cred_asymp} and Proposition~\ref{prop_qntl_loss_dist_subst}, we deduce a function for calculating capital held against unexpected losses on an asymptotic credit portfolio.

\begin{restatable}{proposition}{capcredasymp}\label{prop_cap_cred_asymp}
Assume a conditional independence model of an asymptotic credit portfolio.  Then,
\begin{equation}\label{eqn_cap_cred_asymp}
	\lim_{n\rightarrow\infty}K_{\alpha}(L_{n}) = \lim_{n\rightarrow\infty}\sum_{i = 1}^{n} w_{i}\lgd_{i}G_{i}\left(\frac{F_{i}^{-1}(p_{i}) - \gamma_{i}H^{-1}(1\!-\!\alpha)}{\sqrts{1-\gamma_{i}^{2}}}\right) - \lim_{n\rightarrow\infty}\sum_{i = 1}^{n} w_{i}\lgd_{i}p_{i},
\end{equation}
assuming that the limits on the right-hand side of~\eqref{eqn_cap_cred_asymp} are well defined.
\end{restatable}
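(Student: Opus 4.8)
The plan is to assemble the statement directly from the two convergence results already in hand, namely Proposition~\ref{prop_cond_port_loss} and Proposition~\ref{prop_qntl_loss_dist_subst}, together with the defining identity $K_{\alpha}(L_{n}) = \VaR[\alpha]{L_{n}} - \E[L_{n}]$ of Definition~\ref{def_cap_cred_asymp}. Writing $\varphi_{n}(y) = \E[L_{n} \given y]$ as in~\eqref{eqn_cond_exp_port_loss}, the idea is first to identify the $\alpha$ quantile of $L_{n}$ asymptotically with $\varphi_{n}$ evaluated at the $(1\!-\!\alpha)$ quantile of $Y$, then to substitute the explicit form of $\varphi_{n}$, and finally to take limits term by term.

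First I would verify that the hypotheses of Proposition~\ref{prop_qntl_loss_dist_subst} are in force. Condition~\ref{item_mono_func} holds because $\varphi_{n}$ is continuous and strictly decreasing by Remark~\ref{rem_cond_exp_decr_y}, while Conditions~\ref{item_func_intv_eps_delta}--\ref{item_func_intv_delta_eps} hold under the bounds discussed in Remark~\ref{rem_bnds_pd_corr}. Because the portfolio is asymptotic, Proposition~\ref{prop_cond_port_loss} supplies $\lim_{n\to\infty}\bigl(L_{n} - \varphi_{n}(Y)\bigr) = 0$, $\P$-a.s., which is precisely the antecedent of the implication~\eqref{eqn_qntl_loss_dist_subst}. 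Hence Proposition~\ref{prop_qntl_loss_dist_subst} yields $\lim_{n\to\infty}\bigl|\q{\alpha}{L_{n}} - \varphi_{n}(\q{1-\alpha}{Y})\bigr| = 0$. Since $H$ is continuous and strictly increasing, Remark~\ref{rem_qntl_cont_incr_dist} gives $\q{1-\alpha}{Y} = H^{-1}(1\!-\!\alpha)$, and substituting~\eqref{eqn_cond_exp_port_loss} and~\eqref{eqn_cond_pd} expresses $\varphi_{n}\bigl(H^{-1}(1\!-\!\alpha)\bigr)$ as the asset-correlation sum appearing in the first term on the right-hand side of~\eqref{eqn_cap_cred_asymp}.

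Next I would combine these pieces. Recalling $\VaR[\alpha]{L_{n}} = \q{\alpha}{L_{n}}$, the closeness estimate says that the sequences $\{\VaR[\alpha]{L_{n}}\}$ and $\{\varphi_{n}(H^{-1}(1\!-\!\alpha))\}$ differ by a null sequence; consequently, whenever the limit of the latter exists, the limit of the former exists and the two coincide. Under the standing assumption that the right-hand limits of~\eqref{eqn_cap_cred_asymp} are well defined, both $\lim_{n}\varphi_{n}(H^{-1}(1\!-\!\alpha))$ and $\lim_{n}\E[L_{n}] = \lim_{n}\sum_{i=1}^{n} w_{i}\lgd_{i}p_{i}$ exist by~\eqref{eqn_exp_port_loss}, so $\lim_{n}\VaR[\alpha]{L_{n}}$ exists and equals the first right-hand limit. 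Taking limits in $K_{\alpha}(L_{n}) = \VaR[\alpha]{L_{n}} - \E[L_{n}]$ and invoking additivity of limits for convergent sequences then delivers~\eqref{eqn_cap_cred_asymp}.

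The main obstacle is the one the statement explicitly sidesteps: by Remark~\ref{rem_port_loss_may_not_cvrg}, neither $\VaR[\alpha]{L_{n}}$ nor $\E[L_{n}]$ need converge as $n\to\infty$, since the exposure weights $w_{i}$ depend on $n$. Proposition~\ref{prop_qntl_loss_dist_subst} controls only the difference $\q{\alpha}{L_{n}} - \varphi_{n}(H^{-1}(1\!-\!\alpha))$, not either sequence in isolation, so the passage from asymptotic equality to a genuine equality of limits is exactly where the ``well defined'' hypothesis must be invoked. The care required is to keep the two limits separate throughout and to split the difference only once the existence of both has been assumed, rather than prematurely asserting convergence of any single term.
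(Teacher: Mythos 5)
Your proposal is correct and follows essentially the same route as the paper's proof: both invoke Proposition~\ref{prop_qntl_loss_dist_subst} (with the antecedent supplied by Proposition~\ref{prop_cond_port_loss} for an asymptotic portfolio) to replace $\VaR[\alpha]{L_{n}}$ by $\varphi_{n}\big(H^{-1}(1\!-\!\alpha)\big)$ up to a null sequence, and then split the limit of $K_{\alpha}(L_{n}) = \VaR[\alpha]{L_{n}} - \E[L_{n}]$ term by term under the well-definedness hypothesis, finally expanding via~\eqref{eqn_cond_exp_port_loss} and~\eqref{eqn_cond_pd}. Your closing observation about where the ``well defined'' assumption is genuinely needed is precisely the point the paper handles by adding and subtracting $\E\big[L_{n} \given Y=H^{-1}(1\!-\!\alpha)\big]$ in~\eqref{eqn_cap_cred_var_subst}, so the two arguments are mathematically identical.
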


\begin{proof}
See Appendix~\ref{appx_cap_cred_asymp}.
\end{proof}

Proposition~\ref{prop_cap_cred_asymp} applies to the abstract case of an infinitely fine-grained credit portfolio for which total EAD diverges to $\infty$ (Remark~\ref{rem_asymp_cvrg_p_series}).  However, in the real world a credit portfolio contains a finite number of obligors, each assigned a positive and finite EAD.  Hence, portfolio EAD is positive and finite, and so is the assessed capital charge, which may be expressed as a percentage of EAD.  Remark~\ref{rem_port_loss_near_asymp} highlights that while real-world portfolios are not infinitely fine-grained, as a practical matter, credit portfolios of large banks are typically near the asymptotic granularity of Definition~\ref{def_asymp_port}.  For the practical application of Proposition~\ref{prop_cap_cred_asymp} we redefine credit risk capital for finite portfolios that exhibit ``sufficient'' granularity.

\begin{definition}\label{def_cap_cred_risk}
Let \textit{credit risk capital} be held against unexpected losses.  Then, in practice, the capital charge (at the $\alpha$ confidence level over a given risk measurement horizon) as a percentage of EAD on a credit portfolio comprising $n$ obligors that ``adequately'' satisfies the asymptotic granularity condition of Definition~\ref{def_asymp_port} may be assessed as
\begin{equation}\label{eqn_cap_pct}
	\Kap[\alpha]{L_{n}} = \E\big[L_{n} \given Y=H^{-1}(1\!-\!\alpha)\big] - \E[L_{n}].
\end{equation}
\end{definition}

Expanding~\eqref{eqn_cap_pct}, the capital charge as a percentage of EAD on a credit portfolio containing a finite number of obligors, $n$, that exhibits sufficient granularity is calculated as
\begin{IEEEeqnarray*}{rCl}
	\Kap[\alpha]{L_{n}} & = & \sum_{i = 1}^{n} w_{i}\lgd_{i}p_{i}\big(H^{-1}(1\!-\!\alpha)\big) - \sum_{i = 1}^{n} w_{i}\lgd_{i}p_{i}	\\
	& = &\sum_{i = 1}^{n} w_{i}\lgd_{i}G_{i}\left(\frac{F_{i}^{-1}(p_{i}) - \gamma_{i}H^{-1}(1\!-\!\alpha)}{\sqrts{1-\gamma_{i}^{2}}}\right) - \sum_{i = 1}^{n} w_{i}\lgd_{i}p_{i}.\IEEEyesnumber\label{eqn_cap_cred_risk}
\end{IEEEeqnarray*}
The ASRF model developed by the Basel Committee on Banking Supervision (BCBS) models default dependence as a multivariate Gaussian process.  Recasting~\eqref{eqn_cap_cred_risk} for the Gaussian case, the capital charge (at the $\alpha$ confidence level over a given risk measurement horizon) on a near asymptotic portfolio comprising $n$ obligors is calculated as
\begin{IEEEeqnarray*}{rCl}
	\Kap[\alpha]{L_{n}} & = & \sum_{i = 1}^{n} w_{i}\lgd_{i}\Phi\left(\frac{\Phi^{-1}(p_{i}) - \sqrts{\rho_{i}}\Phi^{-1}(1\!-\!\alpha)}{\sqrts{1-\rho_{i}}}\right) - \sum_{i = 1}^{n} w_{i}\lgd_{i}p_{i} \\
		& = & \sum_{i = 1}^{n} w_{i}\lgd_{i}\Phi\left(\frac{\Phi^{-1}(p_{i}) + \sqrts{\rho_{i}}\Phi^{-1}(\alpha)}{\sqrts{1-\rho_{i}}}\right) - \sum_{i = 1}^{n} w_{i}\lgd_{i}p_{i}.\IEEEyesnumber\label{eqn_cap_cred_gauss}
\end{IEEEeqnarray*} 
The second equality follows from the symmetry of the standard Gaussian density function.  Note that the kernel of ASRF model~\eqref{eqn_cap_cred_gauss} transforms unconditional PDs into PDs conditional on systematic risk factor~$Y$ using~\eqref{eqn_cond_pd_gauss}.

Under the Basel~II IRB approach regulatory capital is determined at the $99.9\%$ confidence level over a one-year horizon --- a $0.1\%$ probability that credit losses will exceed provisions and capital over the subsequent year.  In practice, it incorporates a maturity adjustment to account for the greater likelihood of downgrades for longer-term claims, the effects of which are stronger for claims with higher credit ratings.  We omit the maturity adjustment from our mathematical derivation and empirical study of the ASRF model.  Thus, regulatory capital for a near asymptotic credit portfolio comprising $n$ obligors is assessed as
\begin{IEEEeqnarray*}{rCl}
	\Kap[99.9\%]{L_{n}}
	& = & \sum_{i=1}^{n} w_{i}\lgd_{i}\Phi\left(\frac{\Phi^{-1}(p_{i}) + \sqrts{\rho_{i}}\Phi^{-1}(0.999)}{\sqrts{1-\rho_{i}}}\right) - \sum_{i=1}^{n} w_{i}\lgd_{i}p_{i}.\IEEEyesnumber\label{eqn_reg_cap_cred}
\end{IEEEeqnarray*}

\begin{remark}\label{rem_cond_pd_999}
In view of Remark~\ref{rem_cond_pd_alpha}, but with reference to~\eqref{eqn_cond_pd_gauss}, the default probability of obligor~$i$ is no greater than
\begin{equation}\label{eqn_cond_pd_999}
	\P\big(D_{i} \given Y=\Phi^{-1}(0.001)\big) = p_{i}(-3.090) = \Phi\left(\frac{\Phi^{-1}(p_{i}) + \sqrts{\rho_{i}}\Phi^{-1}(0.999)}{\sqrts{1-\rho_{i}}}\right)
\end{equation}
in $99.9\%$ of economic scenarios.
\end{remark}

\begin{remark}\label{rem_bcbs_alpha}
BCBS \citeyearpar{BCBS0507} claims that the IRB approach sets regulatory capital for credit risk at a level where losses exceed it, ``on average, once in a thousand years.''  Qualifying this informal statement of probability, BCBS cautions that the $99.9\%$ confidence level was chosen because tier 2 capital ``does not have the loss absorbing capacity of tier 1'', and ``to protect against estimation error'' in model inputs as well as ``other model uncertainties.''  With provisions and capital amounting to as little as 2.0--3.0\% of EAD under the IRB approach, perhaps the claim of protection against insolvency due to credit losses at the $99.9\%$ confidence level should be interpreted as providing a margin for misspecification of the ASRF model, and not literally protection against a ``one in a thousand year'' event.  The choice of confidence level for the ASRF model may also have been influenced by the desire to produce regulatory capital requirements that are uncontroversial vis-\`a-vis Basel~I.  These qualifying remarks warn against the complacency engendered by the high confidence level chosen for the IRB approach.
\end{remark}

\section{Empirical Data}\label{sect_empir_data}
Under its implementation of Basel~II, APRA requires ADIs to assess capital adequacy for credit, market and operational risks.  ADIs determine regulatory capital for credit risk using either the standardised approach or, subject to approval, the IRB approach.  The former applies prescribed risk weights to credit exposures based on asset class and credit rating grade to arrive at an estimate of RWA.  Then, the minimum capital requirement is simply 8\% of RWA.  The standardised approach, which is an extension of Basel~I, is straightforward to administer and produces a relatively conservative estimate of regulatory capital.  The IRB approach, which implements ASRF model~\eqref{eqn_reg_cap_cred}, is a more sophisticated method requiring more input data estimated at higher precision.  Its greater complexity makes it more expensive to administer, but usually produces lower regulatory capital requirements than the standardised approach.  As a consequence, ADIs using the IRB approach may deploy their capital in pursuit of more (profitable) lending opportunities.  

In evaluating the robustness of the ASRF model, we conduct an empirical analysis of the Australian banking sector.  Its findings, though, are pertinent to other banking jurisdictions where regulatory capital charges are assessed under the IRB approach.  In this context we provide background on the Australian banking sector.  Upon implementation of Basel~II in the first quarter of 2008, APRA had granted the four largest Australian banks, designated ``major'' banks, approval to use the IRB approach to capital adequacy for credit risk.  They include: Commonwealth Bank of Australia (CBA), Westpac Banking Corporation (WBC), National Australia Bank (NAB), and Australia and New Zealand Banking Group (ANZ).  WBC acquired St.~George Bank (SGB) on 1 December 2008, and CBA acquired Bank of Western Australia (BWA) on 19 December 2008.  Putting them in a global context, all four major Australian banks have been ranked in the top 20 banks in the world by market capitalisation, and top 50 by assets, during 2013.  We construct a portfolio that is representative of credit exposures, reported to APRA by the major banks, for which regulatory capital is assessed by the ASRF model.  We choose the reporting period ending 31 December 2012.  At this date, the major banks accounted for 78.1\% of total assets on the balance sheet of ADIs regulated by APRA.  Furthermore, of the regulatory capital reported by the major banks, in aggregate,\footnote{
Data are reported for the major Australian banks, in aggregate, so as not to violate confidentiality agreements.
} 85.5\% was assessed for credit risk, 9.4\% for operational risk and 5.1\% for market risk.\footnote{
Figures for the quarter ending 31 December 2012, are published by APRA \citeyearpar{AQPS13} in its quarterly issue of ADI performance statistics.
}

\begin{figure}[!b]
	\centering
	\caption{EAD and RWA, respectively, of IRB credit exposures of the major Australian banks at 31 December 2012 are decomposed by sector (business, government and household).\protect\\
		\small{Source: Australian Prudential Regulation Authority.}} 
	\label{fig_irb_ead_rwa_121231}
	\scalebox{0.70}{
	\begin{tabular}{cc}
		\input{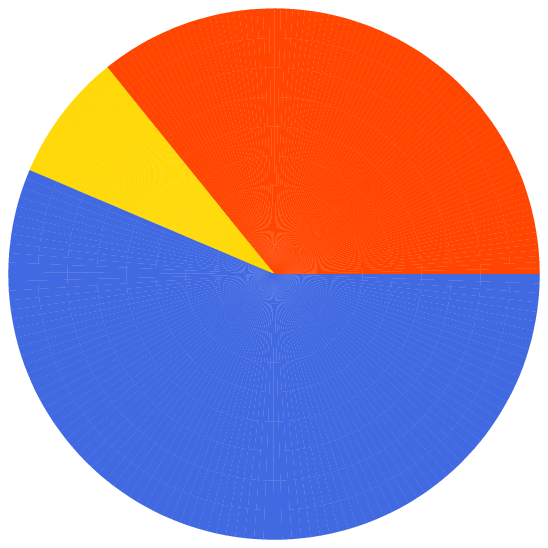}	& \input{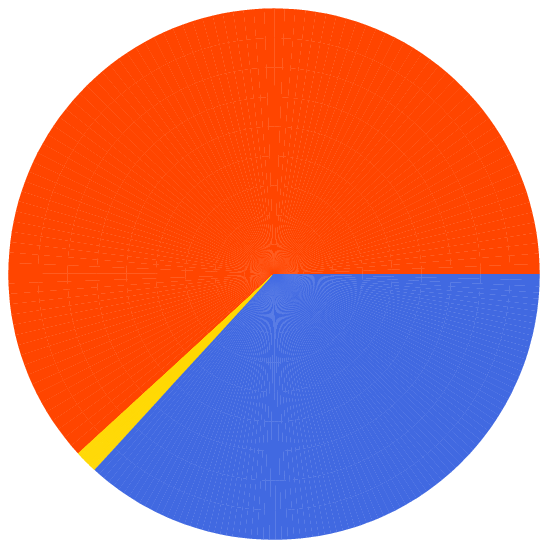} 
	\end{tabular}
	}
\end{figure}

ADIs lodge their statutory returns with APRA using a secure electronic data submission system \citep{APRA1406}.  A return is a collection of related forms covering the same reporting period.  A form is a dataset containing information on a specific topic (e.g., capital, risk class/sub-class, financial statement, etc.).  For reference purposes, forms in spreadsheet format and instructions are available at \url{www.apra.gov.au}.  Given the volume of data processed in preparing statutory returns, the major Australian banks automate their electronic data submissions using XBRL (eXtensible Business Reporting Language) --- an XML-based language for preparing, publishing, extracting and exchanging business and financial information.  Once forms are validated and returns submitted, data are stored in APRA's data warehouse.  We use SQL programming to fetch and aggregate data reported to APRA by the major banks for the quarter ending 31 December 2012.  In particular, we construct the aforementioned representative credit portfolio on the basis data sourced from IRB credit risk forms. 

Banking book exposures are reported on credit risk forms\footnote{
Reporting forms and instructions for ADIs available at \url{www.apra.gov.au}: ARF\_113\_1A, ARF\_113\_1B, ARF\_113\_1C, ARF\_113\_1D, ARF\_113\_3A, ARF\_113\_3B, ARF\_113\_3C, and ARF\_113\_3D.
} by IRB asset class: corporate (non-financial), small- and medium-sized enterprises (SME), bank, sovereign, residential mortgages, retail qualified revolving, and other retail.  For presentation purposes we merge IRB asset classes, and report credit exposures as business, government or household.  Henceforth, we refer to these banking book exposures as IRB credit exposures.  Figure~\ref{fig_irb_ead_rwa_121231} decomposes EAD and RWA, respectively, of IRB credit exposures into business, government and household sectors.  At 31 December 2012, RWA for credit risk reported by the major Australian banks, in aggregate, was divided 72.2/27.8 between IRB credit exposures and other banking book exposures.  The market dominance of the major banks, coupled with the concentration of their regulatory capital held against unexpected losses on IRB credit exposures, convey the significance of the ASRF model in protecting the Australian banking sector against insolvency.  Accordingly, we contend that, while focussing exclusively on IRB credit exposures of the major banks, our empirical analysis draws a representative sample of the credit risk assumed by the Australian banking sector.

\begin{table}[!b]
\begin{minipage}{\textwidth}		
	\centering
	\caption{Characteristics (EAD and exposure-weighted LGD, unconditional PD and asset correlation) of a portfolio that is representative of the IRB credit exposures of the major Australian banks at 31 December 2012 are reported for each credit rating grade by sector (business, government and household) and for the whole portfolio.}
	\label{tbl_rep_port_char}
	\resizebox{\textwidth}{!}{
	\begin{tabular}{l l r r r r r r r r}
		\toprule
		&	&	& \multicolumn{7}{c}{Credit rating grade} \\
		&	& \multicolumn{1}{c}{Total}	& \multicolumn{1}{c}{AAA}		&  \multicolumn{1}{c}{AA}	& \multicolumn{1}{c}{A}	&  \multicolumn{1}{c}{BBB}	& \multicolumn{1}{c}{BB}	& \multicolumn{1}{c}{B}	& \multicolumn{1}{c}{C\footnote{CCC, CC and C credit ratings.}}	\\
		\midrule
		Business		& EAD, $\ead_{i}$ (\$)				& 3,552	& 7		& 549	& 1,001	& 874	& 728	& 327	& 66		\\
		\cmidrule{2-10}
					& LGD, $\lgd_{i}$					& 0.429	& 0.219	& 0.519	& 0.474	& 0.418	& 0.356	& 0.339	& 0.412	\\
		\cmidrule{2-10}
					& Unconditional PD, $p_{i}$	 (\%)		& 1.02	& 0.02	& 0.03	& 0.11	& 0.41	& 1.24	& 3.08	& 18.56	\\
		\cmidrule{2-10}
					& Asset correlation, $\rho_{i}$			& 0.198	& 0.239	& 0.238	& 0.231	& 0.206	& 0.159	& 0.112	& 0.091	\\
		\midrule
		Government	& Exposure at default, $\ead_{i}$ (\$)	& 785	& 538	& 203	& 18		& 10		& 13		& 3		&		\\
		\cmidrule{2-10}
					& Loss given default, $\lgd_{i}$			& 0.102	& 0.088	& 0.085	& 0.223	& 0.496	& 0.387	& 0.408	&		\\
		\cmidrule{2-10}
					& Unconditional PD, $p_{i}$	 (\%)		& 0.07	& 0.01	& 0.03	& 0.09	& 0.40	& 1.74	& 3.00	&		\\
		\cmidrule{2-10}
					& Asset correlation, $\rho_{i}$			& 0.237	& 0.239	& 0.238	& 0.235	& 0.218	& 0.171	& 0.149	&		\\	
		\midrule
		Household	& Exposure at default, $\ead_{i}$ (\$)	& 5,663	&		&		& 2,581	& 1,725	& 919	& 291	& 147	\\
		\cmidrule{2-10}
					& Loss given default, $\lgd_{i}$			& 0.244	&		&		& 0.233	& 0.225	& 0.264	& 0.357	& 0.324	\\
		\cmidrule{2-10}
					& Unconditional PD, $p_{i}$	 (\%)		& 0.99	&		&		& 0.08	& 0.39	& 1.10	& 3.68	& 17.80	\\
		\cmidrule{2-10}
					& Asset correlation, $\rho_{i}$			& 0.143	&		&		& 0.144	& 0.147	& 0.143	& 0.123	& 0.127	\\
		\midrule
		Portfolio		& Exposure at default, $\ead_{i}$ (\$)	& 10,000	& 545	& 752	& 3,600	& 2,609	& 1,660	& 621	& 213	\\
		\cmidrule{2-10}
					& Loss given default, $\lgd_{i}$			& 0.299	& 0.090	& 0.402	& 0.300	& 0.291	& 0.306	& 0.348	& 0.350	\\
		\cmidrule{2-10}
					& Unconditional PD, $p_{i}$	 (\%)		& 0.93	& 0.01	& 0.03	& 0.09	& 0.40	& 1.16	& 3.36	& 18.03	\\
		\cmidrule{2-10}
					& Asset correlation, $\rho_{i}$			& 0.170	& 0.239	& 0.238	& 0.169	& 0.167	& 0.150	& 0.117	& 0.116	\\
		\bottomrule
	\end{tabular}
	}
\end{minipage}
\end{table}

Under the IRB approach, ADIs assign their on- and off-balance sheet credit exposures to internally-defined obligor grades reflecting PD bands, and LGD bands.  At 31 December 2012, RWA of IRB credit exposures held in the banking book of the major Australian banks was divided 75.0/25.0 between on-balance sheet assets and off-balance sheet exposures.  EAD, RWA, expected loss, and exposure weighted LGD, unconditional PD and firm size are reported for each obligor grade.  We assign IRB credit exposures reported by the major banks to standardised PD bands (i.e., consistent across the major banks), and calculate risk parameters characterising each of these standardised obligor grades.  Asset correlation is constant for residential mortgages and retail qualified revolving credit exposures; a function of unconditional PD for corporate, bank, sovereign and other retail credit exposures; and a function of firm size and unconditional PD for SME credit exposures \citep{BCBS0507}.  

We construct a portfolio that is representative of the IRB credit exposures of the major Australian banks at 31 December 2012.  That is, exposure weight of each PD band within an IRB asset class is equal to the corresponding EAD reported by major banks as a percentage of IRB credit exposures.  Credits assigned the same obligor grade share the same risk characteristics: LGD, unconditional PD and asset correlation.  Table~\ref{tbl_rep_port_char} categorises credit exposures as business, government or household, and groups them into credit rating grades on the basis of obligors' unconditional PD.\footnote{
We use the mapping of S\&P credit rating grades to KMV expected default frequency values derived by \citet{LJA02}.
}  It reports EAD, and exposure weighted LGD, unconditional PD and asset correlation for each credit rating grade by sector.  Sections~\ref{sect_cvrg_asymp_dist} and \ref{sect_cap_sens_depend_struct}, respectively, use characteristics of this representative credit portfolio to measure the rate of convergence to the asymptotic portfolio loss distribution, and evaluate the sensitivity of credit risk capital to dependence structure as modelled by asset correlations and elliptical copulas.

\section{Rate of Convergence to the Asymptotic Distribution}\label{sect_cvrg_asymp_dist}
The Basel~II IRB approach to credit risk is premised on Proposition~\ref{prop_qntl_loss_dist_subst}.  Assuming a conditional independence model of an asymptotic credit portfolio, it asserts that quantiles of the distribution of conditional expectation of portfolio percentage loss may be substituted for quantiles of the portfolio loss distribution.  Before measuring the rate of convergence to the asymptotic portfolio loss distribution, we demonstrate that Proposition~\ref{prop_qntl_loss_dist_subst} holds for a representative credit portfolio within a static (i.e., single-period) framework.  Table~\ref{tbl_rep_port_char} describes the representative credit portfolio.  In order to construct a portfolio that exhibits sufficient granularity we impose the constraint that no credit accounts for more than one basis point exposure.  This exercise demonstrates that the $\alpha$ quantile of the distribution of $\E[L_{n} \given Y]$, which is associated with the $(1\!-\!\alpha)$ quantile of the distribution of $Y$, may be substituted for the $\alpha$ quantile of the distribution of $L_{n}$.  Firstly, capital held against unexpected losses at the $99.9\%$ confidence level over a one-year horizon, $\Kap[99.9\%]{L_{n}}$, is calculated analytically by ASRF model~\eqref{eqn_reg_cap_cred}.  The expectation of portfolio percentage loss over a one-year horizon conditional on a state of the economy that is worse than at most $99.9\%$ of economic scenarios, ${\E\big[L_{n} \given Y=\Phi^{-1}(0.001)\big]}$, is readily calculated by the first term on the right-hand side of~\eqref{eqn_reg_cap_cred}.  Expected loss, $\E[L_{n}]$, is given by the second term of the right-hand side of~\eqref{eqn_reg_cap_cred}.  And $\Kap[99.9\%]{L_{n}}$ is the difference between $\E\big[L_{n} \given Y=\Phi^{-1}(0.001)\big]$ and $\E[L_{n}]$.

\begin{figure}[!t]
	\centering
	\caption{The empirical loss distribution of the representative credit portfolio described in Table~\ref{tbl_rep_port_char} is generated by Monte Carlo simulation.  The portfolio is constructed such that no credit accounts for more than one basis point exposure, and credit losses are reported as a percentage of EAD.}
	\label{fig_empir_loss_dist}
	\scalebox{0.90}{
		\input{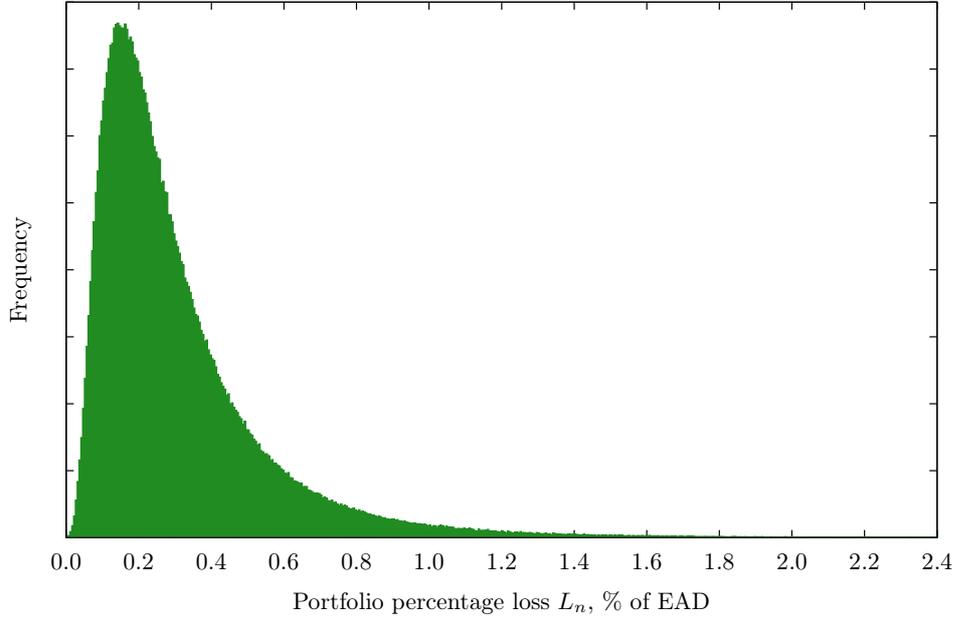}
	}
\end{figure}

Next, $K_{99.9\%}(L_{n})$ is determined by computationally intensive simulation~\eqref{eqn_sim_port_loss}.  Credit VaR at the 99.9\% confidence level over a one-year horizon, $\VaR[99.9\%]{L_{n}}$, is determined from the empirical loss distribution of the representative credit portfolio, which is generated by simulation of \eqref{eqn_cond_port_loss} parameterised by \eqref{eqn_zeta_y_gauss}.\footnote{
Gaussian random variables are generated using GNU Scientific Library routine \texttt{gsl\_ran\_gaussian}.  It implements the Box-Muller algorithm, which make two calls to the MT19937 generator of Makoto Matsumoto and Takuji Nishimura. 
}  Monte Carlo simulation performs 1,000,000 iterations to generate the empirical loss distribution (Figure~\ref{fig_empir_loss_dist}).  Equation~\eqref{eqn_sim_port_loss} calculates the portfolio percentage loss for each iteration, and credit VaR is the 99.9\% quantile of the empirical loss distribution described by \eqref{eqn_sim_loss_dist}.  Expected loss, $\E[L_{n}]$, is given by \eqref{eqn_sim_exp_loss}.  And $K_{99.9\%}(L_{n})$ is the difference between $\VaR[99.9\%]{L_{n}}$ and $\E[L_{n}]$.

\newcolumntype{g}{>{\columncolor{Gold}}r}
\begin{table}[!t]
	\centering
	\caption{Capital charges, at the 99.9\% confidence level over a one-year horizon, assessed on the representative credit portfolio described in Table~\ref{tbl_rep_port_char}.  The portfolio is constructed such that no credit accounts for more than one basis point exposure.  Credit VaR at the 99.9\% confidence level is computed numerically by Monte Carlo simulation, and its analytical approximation is calculated by the ASRF model.}
	\label{tbl_asrf_sim_rslt}
	\begin{tabular}[t]{r g g l}
		\toprule
		& \multicolumn{2}{c}{\% of EAD}	&	\\
		& \multicolumn{1}{c}{ASRF} & \multicolumn{1}{c}{Simulation}	&	\\
		\cmidrule{2-3}
		$\E\big[L_{n} \given Y=\Phi^{-1}(0.001)\big]$	& 2.18	& 2.19	& $\VaR[99.9\%]{L_{n}}$	\\
		$\E[L_{n}]$							& 0.31	& 0.31	& $\E[L_{n}]$	\\
		$\Kap[99.9\%]{L_{n}}$					& 1.87	& 1.88	& $K_{99.9\%}(L_{n})$	\\
		\bottomrule
	\end{tabular}
\end{table} 

It remains to compare the estimates produced by the analytical and simulation models.  Given that this representative portfolio contains a large number of credits without concentration in a few names dominating the rest of the portfolio, we anticipate that as the number of simulation iterations increases, its empirical loss distribution will converge to the distribution of conditional expectation of portfolio percentage loss.  Confirming our intuition, Table~\ref{tbl_asrf_sim_rslt} reports that estimates from the ASRF model and Monte Carlo simulation are within one basis point of one another.  Recall that ASRF model~\eqref{eqn_reg_cap_cred} describes defaults as conditionally independent Gaussian random variables, and the Monte Carlo simulation of \eqref{eqn_cond_port_loss}, parameterised by \eqref{eqn_zeta_y_gauss}, draws random variables representing the single systematic risk factor and obligor specific risks from a Gaussian distribution.  So, it would be surprising if this convergence did not occur.  Indeed, for a number of iterations large enough and a portfolio exhibiting sufficient granularity, the simulation does little more than demonstrate that the expectation of indicator function~$\One{Z_{i} < \zeta_{i}(y)}$ is conditional PD $p_{i}(y)$.  Our findings provide empirical support for Proposition~\ref{prop_qntl_loss_dist_subst}.

\begin{figure}[!b]
	\caption{Credit VaR at the $\alpha$ confidence level, ${99.0\% \leq \alpha < 100.0\%}$, for portfolios comprising obligors ranging in number from 50 to 2000.  The portfolios are representative of IRB credit exposures of the major Australian banks to the business sector reported in Table~\ref{tbl_rep_port_char}.  The curves illustrate the rate of convergence in the tail of empirical loss distributions, $\VaR[\alpha]{L_{n}}$, to the distribution of conditional expectation of portfolio percentage loss, ${\E[L_{n} \given Y = \Phi^{-1}(1\!-\!\alpha)]}$, representing the loss distribution of an infinitely fine-grained portfolio.}
	\label{fig_cvrg_asymp_dist}
	\scalebox{0.90}{
		\input{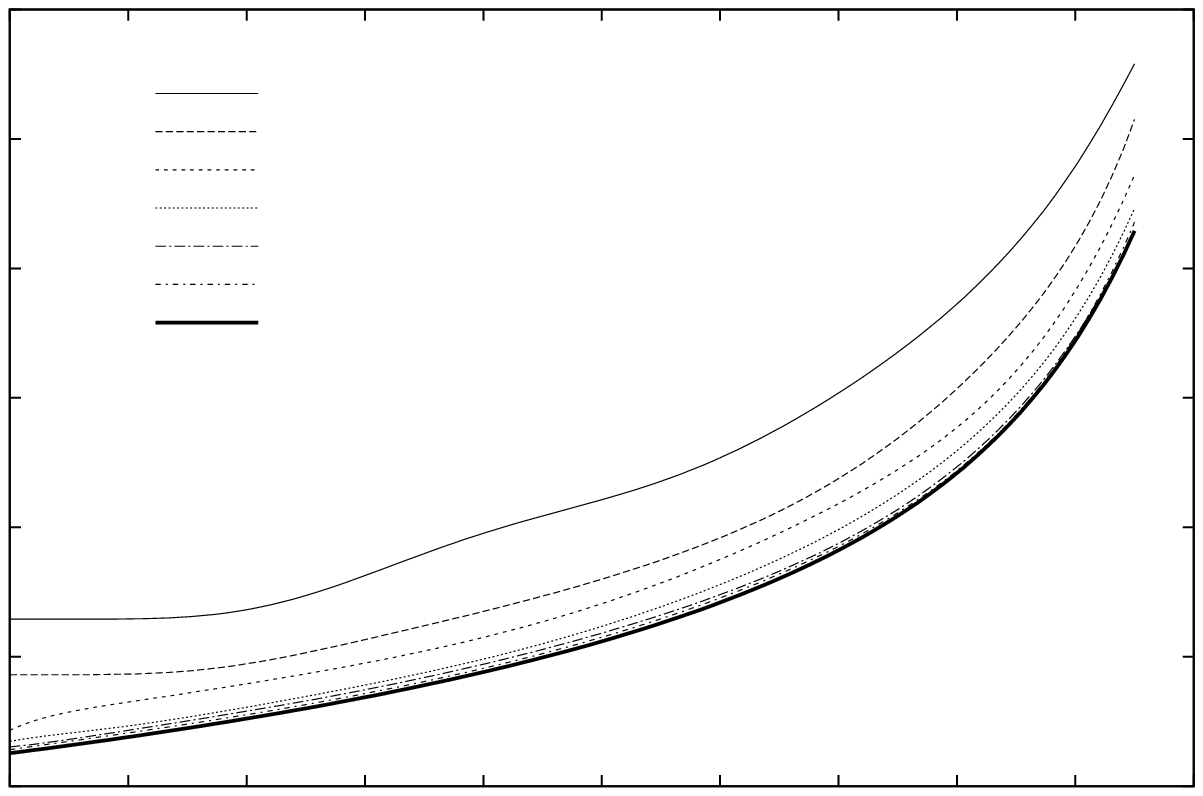}
	}
\end{figure}

Evidently, the representative credit portfolio described in Table~\ref{tbl_rep_port_char}, with no credit accounting for more that one basis point exposure, adequately satisfies the asymptotic granularity condition of Definition~\ref{def_asymp_port}, an assumption of Proposition~\ref{prop_qntl_loss_dist_subst}.   But how large need the number of credits, or obligors, constituting a portfolio be for ${\E[L_{n} \given Y = \Phi^{-1}(1\!-\!\alpha)]}$ to produce a statistically accurate estimate of $\VaR[\alpha]{L_{n}}$?  We proceed to address this question by constructing portfolios comprising obligors ranging in number from 50 to 2000.  For each portfolio we assign its constituent obligors equal dollar EAD, and exposure weighted LGD, unconditional PD and asset correlation as calculated for exposures to the business sector of the representative credit portfolio described in Table~\ref{tbl_rep_port_char}.  We choose to conduct this exercise on portfolios representative of IRB credit exposures of the major banks to the business sector, because corporate loans are typically ``lumpier'' than residential mortgages.

We generate empirical loss distributions, one for each of the constituted portfolios containing a finite number of obligors, by simulation of \eqref{eqn_cond_port_loss} parameterised by \eqref{eqn_zeta_y_gauss}.  Monte Carlo simulation performs 1,000,000 iterations to generate an empirical loss distribution for each of the constituted portfolios.  The distribution of conditional expectation of portfolio percentage loss represents the loss distribution of an infinitely fine-grained portfolio.  Solvency assessment is concerned with the tail of the portfolio loss distribution, so we examine the confidence interval bounded by ${99.0\% \leq \alpha < 100.0\%}$.  $\VaR[\alpha]{L_{n}}$ for each of the constituted portfolios is the $\alpha$ quantile of the empirical loss distributions described by~\eqref{eqn_sim_loss_dist}.  ${\E[L_{n} \given Y = \Phi^{-1}(1\!-\!\alpha)]}$ is given by the first term on the right-hand side of ASRF model~\eqref{eqn_cap_cred_gauss}.  Figure~\ref{fig_cvrg_asymp_dist} plots the tail of each of the empirical loss distributions along with the tail of the distribution of conditional expectation of portfolio percentage loss, illustrating the rate of convergence in terms of number of obligors.  By inspection we argue that a statistically accurate estimate of $\VaR[\alpha]{L_{n}}$ is given by ${\E[L_{n} \given Y = \Phi^{-1}(1\!-\!\alpha)]}$ for portfolios comprising 1,000 obligors or more, without concentration in a few names dominating the rest of the portfolio.

\section{Sensitivity of Credit Risk Capital to Dependence Structure}\label{sect_cap_sens_depend_struct}
Regulatory capital models, which serve the sole purpose of solvency assessment, require precision in the measurement of absolute risk levels under stressed economic conditions.  In this section we evaluate the robustness of the model specification of the Basel~II IRB approach to a relaxation of model assumptions.  Firstly, in relation to the one-factor Gaussian copula, we evaluate the sensitivity of credit risk capital to dependence structure as described by asset correlations.  Then, we examine the effect of tail dependence by measuring the sensitivity of credit risk capital to dependence structure, as modelled by a variety of elliptical copulas.  As in Section~\ref{sect_cvrg_asymp_dist}, we conduct this empirical analysis on a portfolio, with characteristics described in Table~\ref{tbl_rep_port_char}, that is  representative of the IRB credit exposures of the major banks at 31 December 2012.

\subsection{Default Dependence Described by Asset Correlations}\label{sect_sens_asset_corr}
The IRB approach, in effect, applies the one-factor Gaussian copula of Corollary~\ref{cor_ofgc} with matrix~\eqref{eqn_ofgc_gamma} describing pairwise correlations between obligors' asset values.  Choosing the one-factor Gaussian copula to model default dependence, we evaluate the sensitivity of credit risk capital to dependence structure as described by asset correlations.

\begin{figure}[!t]
	\caption{${\E\big[L_{n} \given Y=\Phi^{-1}(1\!-\!\alpha)\big]}$, ${90.0\% \leq \alpha < 100.0\%}$, for the representative credit portfolio described in Table~\ref{tbl_rep_port_char}.  The sensitivity of ${\E\big[L_{n} \given Y=\Phi^{-1}(1\!-\!\alpha)\big]}$ to default dependence structure is measured by adjusting the asset correlation assigned to constituent obligors by $\pm10\%$ and $\pm20\%$.}
	\label{fig_var_qntl_asset_corr}
	\scalebox{0.90}{
		\input{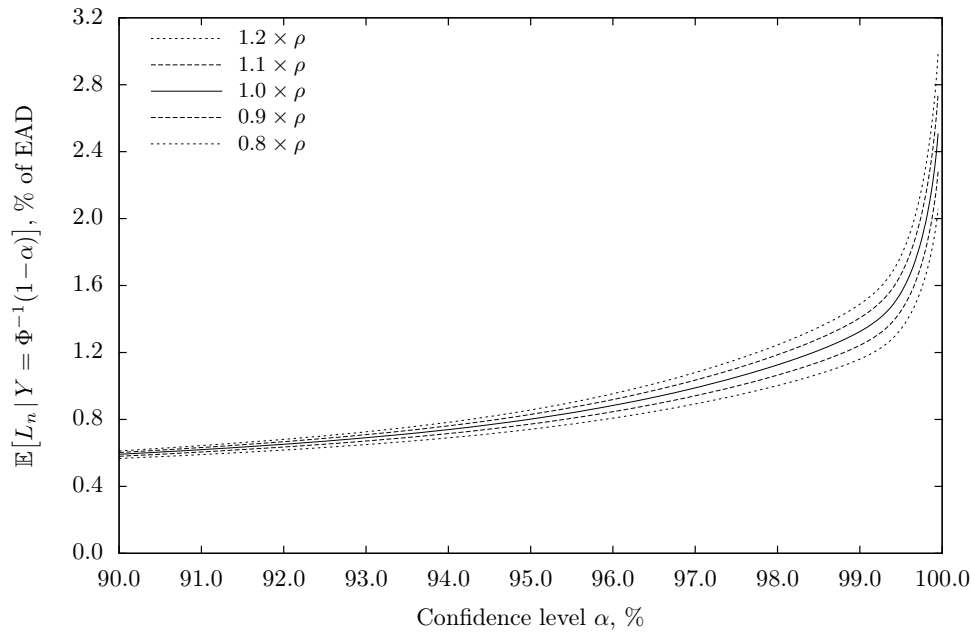}
	}
\end{figure}

Recall from Section~\ref{sect_empir_data} that the IRB approach models asset correlation as a constant for residential mortgages and retail qualified revolving credit exposures; a function of unconditional PD for corporate, bank, sovereign and other retail credit exposures; and a function of firm size and unconditional PD for SME credit exposures.  Parameters of the asset correlation functions prescribed by the IRB approach were derived from an analysis of times series collected by G10 supervisors.  Even if the dependence described by parameters derived from this time series analysis continues to hold, the actual asset correlation will lie in some distribution around the estimate given by the model specification of the IRB approach.  So, some measure of the sensitivity of credit risk capital to the error in asset correlation estimates would be informative.   

Figure~\ref{fig_var_qntl_asset_corr} plots the sensitivity of ${\E\big[L_{n} \given Y=\Phi^{-1}(1\!-\!\alpha)\big]}$, and hence credit risk capital, to dependence structure as described by asset correlations.  Here, we adopt the credit risk capital interpretation of Definition~\ref{def_cap_cred_risk}, which applies to portfolios that adequately satisfy the asymptotic granularity condition of Definition~\ref{def_asymp_port}.  ${\E\big[L_{n} \given Y=\Phi^{-1}(1\!-\!\alpha)\big]}$, given by the first term on the right-hand side of \eqref{eqn_cap_cred_gauss}, is calculated for the representative credit portfolio described in Table~\ref{tbl_rep_port_char}, and then recalculated after adjusting the asset correlation assigned to constituent obligors by $\pm10\%$ and $\pm20\%$.  Since solvency assessment is concerned with the tail of the portfolio loss distribution, we examine the confidence interval bounded by ${90.0\% \leq \alpha < 100.0\%}$.  Conditional expectation of portfolio percentage loss becomes more sensitive to asset correlation as one moves further into the tail of the portfolio loss distribution.  At the $99.9\%$ confidence level the relative error in asset correlation estimates affects the calculation of credit risk capital by a similar magnitude.  Naturally, this heuristic for the sensitivity of credit risk capital to asset correlation only applies deep in the tail of the loss distribution of near asymptotic portfolios with characteristics (viz., unconditional PDs and asset correlations) not very different from those described in Table~\ref{tbl_rep_port_char}.
 
\subsection{Dependence Structure Modelled by Elliptical Copulas}\label{sect_sens_ellip_cop}
While it's generally acknowledged that models which assume that financial data follow a Gaussian distribution tend to underestimate tail risk, the IRB approach to solvency assessment does apply the one-factor Gaussian copula.  As discussed in Section~\ref{sect_ellip_cop}, Gaussian copulas do not exhibit tail dependence --- the tendency for extreme observations (i.e., credit defaults) to occur simultaneously for all random variables.  We examine the effect of tail dependence by measuring the sensitivity of credit risk capital to dependence structure as modelled by elliptical copulas, including Gaussian and Student's $t$ copulas.  The latter, which we abbreviate by $t$-copula, admits tail dependence with fewer degrees of freedom producing stronger dependence (Figure~\ref{fig_bivar_ellip_cop}). 

\begin{figure}[!t]
	\caption{$\VaR[\alpha]{L_{n}}$, ${90.0\% \leq \alpha < 100.0\%}$, for the representative credit portfolio described in Table~\ref{tbl_rep_port_char}.  The portfolio is constructed such that no credit accounts for more than one basis point exposure.  The sensitivity of $\VaR[\alpha]{L_{n}}$ to default dependence structure is measured by generating empirical loss distributions using elliptical copulas and assuming that asset values follow a log-normal distribution.}
	\label{fig_var_qntl_ellip_cop}
	\scalebox{0.90}{
		\input{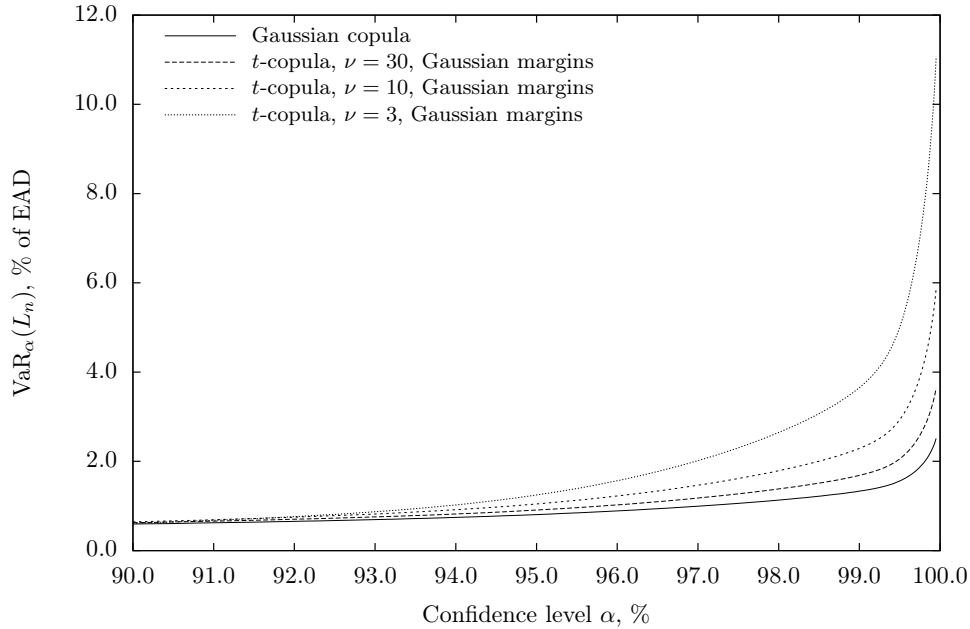}
	}
\end{figure}

In order to evaluate the sensitivity of credit risk capital to default dependence structure, we generate empirical loss distributions of the representative credit portfolio described in Table~\ref{tbl_rep_port_char} using elliptical copulas: one-factor Gaussian copula, and $t$-copulas with 30, 10 and 3 degrees of freedom and Gaussian margins.  Again, the portfolio is constructed such that no credit accounts for more than one basis point exposure.  Assuming that asset values follow a log-normal distribution, distributional differences are attributed to the copula function modelling default dependence.  The one-factor Gaussian copula is implemented by simulation of \eqref{eqn_cond_port_loss} parameterised by \eqref{eqn_zeta_y_gauss}, and generates the empirical loss distribution plotted in Figure~\ref{fig_empir_loss_dist}. $t$-copulas with Gaussian margins are implemented by simulation of \eqref{eqn_port_loss_t_cop} parameterised by \eqref{eqn_zeta_y_t_cop}.\footnote{
Normally and chi-square distributed random variables are generated using GNU Scientific Library routines \texttt{gsl\_ran\_gaussian} and \texttt{gsl\_ran\_chisq}, respectively. 
}  Monte Carlo simulation performs 1,000,000 iterations.

Figure~\ref{fig_var_qntl_ellip_cop} plots the sensitivity of credit VaR, and hence credit risk capital, to default dependence structure as modelled by the one-factor Gaussian copula and $t$-copulas with Gaussian margins.  As in Section~\ref{sect_sens_asset_corr}, we examine the confidence interval bounded by ${90.0\% \leq \alpha < 100.0\%}$.  At the $90.0\%$ confidence level there is little difference in estimates of credit VaR computed by the respective elliptical copulas.  However, as one moves further into the tail of the empirical loss distribution estimates of credit VaR diverge at an accelerating rate.  At the $99.9\%$ confidence level credit VaR computed by the $t$-copula with $\nu = 10$ is more than double that computed by the Gaussian copula; and the $t$-copula with $\nu = 3$ estimates credit VaR to be more than four times the estimate produced by the Gaussian copula.  Unfortunately, the calibration of degrees of freedom is not easy and to some extent subjective, which may explain why the one-factor Gaussian copula prevails in practice.  The sensitivity of credit risk capital to the choice of elliptical copula can be much greater than its sensitivity to asset correlation.

\section{Conclusion}\label{sect_concl}
The academic contribution of this paper is both theoretical and empirical.  We derive the theoretical foundations of the IRB approach to capital adequacy for credit risk.  Adapting the single asset model of \citet{MRC74} to a portfolio of credits, \citet{VO02} derived a function that transforms unconditional PDs into PDs conditional on a single systematic risk factor.  It is the kernel of ASRF model prescribed by the IRB approach.  The ASRF model rests on the proposition, due to \citet{GMB03}, that quantiles of the distribution of conditional expectation of portfolio percentage loss may be substituted for quantiles of the portfolio loss distribution.  Here, our theoretical contribution is to present a more compact proof of this proposition starting from weaker assumptions, and extend the ASRF model to a more general setting than the usual Gaussian case.

We proceed to conduct an empirical analysis of the ASRF model using internal bank data collected by the prudential regulator.  Access to these data distinguishes our research from other empirical studies on the IRB approach.  Firstly, we demonstrate that Gordy's proposition holds for a portfolio that is representative of the IRB credit exposures of the major Australian banks, and exhibits sufficient granularity.  Quantiles of the distribution of conditional expectation of portfolio percentage loss are readily calculated by the analytical ASRF model, while quantiles of the portfolio loss distribution are determined by a computationally intensive simulation of the Gaussian conditional independence model.  For our representative portfolio, estimates of credit risk capital from the ASRF model and Monte Carlo simulation are within one basis point of one another.  Then, we measure the rate of convergence, in terms of number of obligors, of empirical loss distributions to the asymptotic portfolio loss distribution (i.e., the distribution of conditional expectation of portfolio percentage loss representing an infinitely fine-grained portfolio).  For portfolios comprising 1,000 obligors or more, without concentration in a few names dominating the rest of the portfolio, the conditional expectation of portfolio percentage loss provides a statistically accurate estimate of portfolio percentage loss in the tail of the portfolio loss distribution.  Assuming that a Gaussian conditional independence model accurately describes default dependence and credit portfolios exhibit sufficient granularity, we find empirical support for the IRB approach.

In the context of solvency assessment, we empirically evaluate the sensitivity of credit risk capital (in the tail of the portfolio loss distribution) to dependence structure.  The IRB approach applies the one-factor Gaussian copula with default dependence described by the matrix of pairwise correlations between obligors' asset values.  The relative error in asset correlation estimates affects the calculation of credit risk capital by a similar magnitude.  Assuming that asset values follow a log-normal distribution, the portfolio loss distribution is determined by the copula function chosen to combine its margins.  The sensitivity of credit risk capital to dependence structure as modelled by elliptical copulas, including Gaussian and Student's $t$ copulas, can be much greater than its sensitivity to asset correlations.  

A couple of future research directions emerge from this paper.  In a related research paper \citet{RT14a} take measurements from the ASRF model of the prevailing state of Australia's economy and the level of capitalisation of its banking sector.  They find general agreement with macroeconomic indicators, financial statistics and external credit ratings.  Given the range of economic conditions, from mild contraction to moderate expansion, experienced in Australia since the implementation of Basel~II, their empirical findings support a favourable assessment of the ASRF model for purposes of capital allocation, performance attribution and risk monitoring.  Evaluating the ASRF model for the purpose of solvency assessment, would involve taking readings from north Atlantic banking jurisdictions that experienced the full force of the financial crisis of 2007--09, which precipitated the worst global recession since the Great Depression of the 1930s.

Portfolios of large banks contain financial instruments that are influenced by a multitude of underlying risk factors.  Internal and external risk controllers, such as the prudential regulator, require risks to be measured over varying holding periods for different risk classes (e.g., one-day or 10-day for market risk, and one-year for credit and operational risks).  The path dependent nature of many risks and the requirement to measure portfolio risk over different time horizons leads to a multi-period, or dynamic, simulation.  It is practical then to simulate all variables, including defaults and survivals, in each time period.  In order to consistently represent dependence between multivariate default times in each period of the simulation, \citet{BMS13} model default times using a continuous-time Markov chain with multivariate exponential distribution (i.e., Marshall-Olkin copula).  An efficient implementation of this Markovian credit risk model remains a challenge in building a feasible multi-period simulation for measuring risk in massive portfolios.

\section*{Acknowledgements}
This research has been conducted with the support of the Australian Prudential Regulation Authority (APRA).  It has provided access to professionals with expertise in financial regulation, risk modelling and data management, along with access to internal bank data collected by APRA from the institutions that it supervises.  We acknowledge with pleasure the support and encouragement from Charles Littrell, Executive General Manager at APRA, as well as his constructive comments on early drafts of this paper.  Anthony Coleman and Guy Eastwood of the Credit Risk Analytics team offered valuable input in the design of the empirical analysis and review of the findings.  The Statistics team, led by Steve Davies, provided data support for the empirical analysis, and reviewed the paper for compliance with confidentiality agreements.  Finally, we thank the Research team, headed by Bruce Arnold, for its contribution in distilling research topics that are pertinent to prudential regulators.   Financial support for this research is gratefully received in the form of an Australian Postgraduate Award, and a scholarship sponsored by the Capital Markets Cooperative Research Centre and APRA.

\appendix
\section{Proof of Proposition~\ref{prop_cond_port_loss}}\label{appx_cond_port_loss}
\citet[Proposition~1]{GMB03} established that, conditional on a single systematic risk factor, the portfolio percentage loss converges, almost surely, to its conditional expectation as the portfolio approaches asymptotic granularity.

\condportloss*	

\begin{proof}
It relies on a variant of the strong law of large numbers \citep[see, e.g.,][Theorem~7.5.1]{GS01} based on Kolmogorov's convergence criterion and Kronecker's lemma.

For a credit portfolio comprising $n$ obligors, $w_{i} = \ead_{i} / \EAD_{n}$ is the exposure weight of obligor~$i$.  Firstly, observe that given realisation~$y \in \R$ of systematic risk factor~$Y$, conditional variances satisfy
\begin{equation*}
	\sum_{i=1}^{\infty} \Var{\frac{\ead_{i}\lgd_{i}\One{Z_{i} < \zeta_{i}(y)}}{\EAD_{i}}} = \sum_{i=1}^{\infty} \left(\frac{\ead_{i}}{\EAD_{i}}\right)^{2}\lgd_{i}^{2}p_{i}(y)(1-p_{i}(y)) \leq \sum_{i=1}^{\infty} \left(\frac{\ead_{i}}{\EAD_{i}}\right)^{2} < \infty,
\end{equation*}
where $\Var{\One{Z_{i} < \zeta_{i}(y)}} = p_{i}(y)(1-p_{i}(y))$.  Recall that $\E\left[\One{Z_{i} < \zeta_{i}(y)}\right] = p_{i}(y)$, and the sequence of random variables ${\{\ead_{k}\lgd_{k}\One{Z_{k} < \zeta_{k}(y)} / \EAD_{k}\}}$ is independent with respect to $\P_{y}$.  Then,
\begin{equation}\label{eqn_kolmo_inf_series}
	\sum_{i=1}^{\infty} \frac{\ead_{i}\lgd_{i}}{\EAD_{i}}\big(\One{Z_{i} < \zeta_{i}(y)} - p_{i}(y)\big)
\end{equation}
converges $\P_{y}$-almost surely by Kolmogorov's convergence criterion \citep[Theorem~6.5.2]{GA05}.  Notice that ${\{\ead_{k}\lgd_{k}(\One{Z_{k} < \zeta_{k}(y)} - p_{k}(y))\}}$ constitutes a sequence of random variables, and the sequence of real numbers $\{\EAD_{k}\}$ is positive and strictly increasing to infinity.  Next, applying Kronecker's lemma \citep[Lemma~6.5.1]{GA05} to infinite series~\eqref{eqn_kolmo_inf_series} yields 
\begin{equation*}
	\frac{1}{\EAD_{n}}\sum_{i=1}^{n}\ead_{i}\lgd_{i}\big(\One{Z_{i} < \zeta_{i}(y)} - p_{i}(y)\big) \stackrel{a.s.}{\longrightarrow} 0 \textit{ as } n\rightarrow\infty,
\end{equation*}
which implies that
\begin{equation*}
	\lim_{n\rightarrow\infty} \left(L_{n} - \sum_{i = 1}^{n} w_{i}\lgd_{i}p_{i}(y)\right) = 0,
\end{equation*}
$\P_{y}$-almost surely for all $y \in \R$.  Accordingly, \eqref{eqn_cond_port_loss_lim} holds $\P$-almost surely.
\end{proof}

\section{Proof of Corollary~\ref{cor_lim_loss_dist}}\label{appx_lim_loss_dist}
We generalise the \citet{VO02} formulation of the loss distribution function of an asymptotic, homogeneous portfolio, which models default dependence as a multivariate Gaussian process.

\limlossdist*	

\begin{proof}
For the homogeneous case of Definition~\ref{def_homo_port}, let
\begin{equation*}
	\zeta(y) = \frac{F^{-1}(p) - \gamma y}{\sqrts{1-\gamma^{2}}} = G^{-1}\big(p(y)\big).
\end{equation*} 
Then, conditional on realisation~$y \in \R$ of systematic risk factor~$Y$, the portfolio percentage loss is calculated as
\begin{equation}\label{eqn_cond_port_loss_homo}
	L_{n} = \sum_{i = 1}^{n} w_{i}\lgd\One{Z_{i} < \zeta(y)},
\end{equation}
where $\One{Z_{i} < \zeta(y)}$ are independent identically distributed Bernoulli random variables with finite mean and variance.  By the strong law of large numbers \citep[see, e.g.,][Theorem~7.5.1]{GS01}, $\One{Z_{i} < \zeta(y)}$ converges to its conditional expectation $p(y)$ as $n\rightarrow\infty$, $\P_{y}$-almost surely for all $y\in\R$.  Observing that $\sum_{i=1}^{n}w_{i} = 1$ for all $n \in \N$, and $\lgd \in [0,1]$,  \eqref{eqn_cond_lim_port_loss} holds $\P$-almost surely.

It follows from~\eqref{eqn_cond_lim_port_loss} that given $Y=y$,
\begin{equation*}
	\lim_{n\rightarrow\infty} L_{n} = \lgd p(y),
\end{equation*}
which is a deterministic quantity.  Hence,
\begin{equation*}
	\lim_{n\rightarrow\infty} \P(L_{n} \leq \el \given Y=y) =  \One{0 < \lgd p(y) \leq \el} =  \One{p^{-1}(\el/\lgd) \leq y < \infty}.
\end{equation*}
Then, integrating over systematic risk factor~$Y$ with density function~$h(y)$, yields the limiting form of the portfolio loss distribution:
\begin{IEEEeqnarray*}{rCl}
	\lim_{n\rightarrow\infty} \P(L_{n} \leq \el) & = & \lim_{n\rightarrow\infty} \int_{-\infty}^{\infty} \P(L_{n} \leq \el, Y=y) \dee{y} =  \lim_{n\rightarrow\infty} \int_{-\infty}^{\infty} \P(L_{n} \leq \el \given Y=y) h(y)\dee{y} \\	
	& = & \int_{-\infty}^{\infty} \lim_{n\rightarrow\infty} \P(L_{n} \leq \el \given Y=y) h(y)\dee{y}= \int_{-\infty}^{\infty} \One{p^{-1}(\el/\lgd) \leq y < \infty}h(y)\dee{y} \\
	& = & \int_{p^{-1}(\el/\lgd)}^{\infty} h(y)\dee{y} = 1 - \int_{-\infty}^{p^{-1}(\el/\lgd)} h(y)\dee{y} \\
	& = & 1 - H\left(p^{-1}(\el/\lgd)\right) = 1 - H\left(\frac{F^{-1}(p) - \sqrts{1-\gamma^{2}}G^{-1}(\el/\lgd)}{\gamma}\right).
\end{IEEEeqnarray*}
The dominated convergence theorem \citep[see, e.g.,][Theorem~1.4.9]{SSE04} provides conditions under which the limit of integrals of a sequence of functions is the integral of the limiting function.  It justifies the third equality, while the last equality follows from \eqref{eqn_inv_cond_pd}.  
\end{proof}

\section{Proposition due to \citet{GMB03}}\label{appx_gordy_prop_5}
The Basel~II IRB approach rests on Proposition~5 of \citet{GMB03} that quantiles of the distribution of conditional expectation of portfolio percentage loss may be substituted for quantiles of the portfolio loss distribution.  It leads to an analytical approximation of credit VaR.  In Section~\ref{sect_cred_var} we present a version of this proposition that relaxes the technical conditions imposed by Gordy, resulting in a more compact, or parsimonious, proof.  Here, the statement and proof of Proposition~\ref{prop_gordy_prop_5} closely follow that of \citet{GMB03}.

\begin{proposition}\label{prop_gordy_prop_5}
Consider a credit portfolio comprising $n$ obligors, and denote by $L_{n}$ the portfolio percentage loss.  Let $Y$ be a random variable with continuous and strictly increasing distribution function~$H$, and denote by $\varphi_{n}(Y)$ the conditional expectation of portfolio percentage loss $\E[L_{n} \given Y]$.  Assume that the following conditions hold:
\begin{enumerate}[label=(\arabic*)]
	\item\label{item_loss_cvrg_cond_exp}
		$\displaystyle\lim_{n\rightarrow\infty} \left(L_{n} - \sum_{i = 1}^{n} \varphi_{n}(Y)\right) = 0, \quad\P\text{-a.s.}$
	\item\label{item_cond_exp_diffbl}  There is an open interval $I$ containing $H^{-1}(1-\alpha)$, $\alpha\in(0,1)$, and $N_{0}\in\N$ such that whenever $n > N_{0}$ the conditional expectation of portfolio percentage loss, ${\varphi_{n}(y)=\E[L_{n} \given Y=y]}$, is strictly decreasing in $y$ and differentiable on $I$.
	 \item\label{item_cond_exp_bndd_deriv} There is an $N_{0}\in\N$ such that whenever $n > N_{0}$,
	\begin{equation}\label{eqn_cond_exp_bndd_deriv}
		-\infty < -\ub \leq \varphi_{n}^{\prime}(y) \leq -\lb < 0
	\end{equation}
	for all $y \in I$, with $\lb > 0$ and $\ub > 0$ independent of $n$, and where $\varphi_{n}^{\prime}(y)$ denotes the derivative of $\E[L_{n} \given Y=y]$ with respect to $y$.
\end{enumerate}
Then, 
\begin{equation}\label{eqn_port_loss_alpha}
	\lim_{n\rightarrow\infty} \P\left({L_{n}} \leq \varphi_{n}\big(H^{-1}(1\!-\!\alpha)\big)\right) = \alpha,
\end{equation}
and
\begin{equation}\label{eqn_cvrg_var_cond_exp}
	\lim_{n\rightarrow\infty} \big|\VaR[\alpha]{L_{n}} - \varphi_{n}\big(H^{-1}(1\!-\!\alpha)\big)\big| = 0.
\end{equation}
\end{proposition}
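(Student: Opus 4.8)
The plan is to use Condition~\ref{item_loss_cvrg_cond_exp} only through the weaker statement of convergence in probability, which almost sure convergence entails: for every $\eta>0$ one has $\P\big(|L_{n}-\varphi_{n}(Y)|>\eta\big)\to 0$ as $n\to\infty$. Write $y_{\alpha}=H^{-1}(1\!-\!\alpha)$. The second ingredient is the distributional identity delivered by the strict monotonicity of $\varphi_{n}$ (global, by Remark~\ref{rem_cond_exp_decr_y} and Condition~\ref{item_cond_exp_diffbl}): since $\varphi_{n}$ is strictly decreasing, $\{\varphi_{n}(Y)\leq\varphi_{n}(y)\}=\{Y\geq y\}$, so $\P\big(\varphi_{n}(Y)\leq\varphi_{n}(y)\big)=1-H(y)$ (this is the decreasing case of Lemma~\ref{lem_qntl_mono_func_rvs}). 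The role of Condition~\ref{item_cond_exp_bndd_deriv} is to convert additive shifts of the loss threshold into shifts of the argument that are controlled \emph{uniformly in $n$}: by the mean value theorem, for $s>0$ small enough that $y_{\alpha}\pm s\in I$,
\begin{equation*}
	\theta s \;\leq\; \varphi_{n}(y_{\alpha})-\varphi_{n}(y_{\alpha}+s) \;\leq\; \Theta s
	\qquad\text{and}\qquad
	\theta s \;\leq\; \varphi_{n}(y_{\alpha}-s)-\varphi_{n}(y_{\alpha}) \;\leq\; \Theta s,
\end{equation*}
with $\theta,\Theta$ independent of $n$.

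To establish~\eqref{eqn_port_loss_alpha} I would sandwich $\P\big(L_{n}\leq\varphi_{n}(y_{\alpha})\big)$ between two limits that both tend to $\alpha$. For the lower bound, fix $\delta>0$ with $y_{\alpha}+\delta\in I$; monotonicity gives $\varphi_{n}(Y)\leq\varphi_{n}(y_{\alpha}+\delta)\leq\varphi_{n}(y_{\alpha})-\theta\delta$ on $\{Y\geq y_{\alpha}+\delta\}$, so intersecting with $\{|L_{n}-\varphi_{n}(Y)|<\theta\delta\}$ forces $L_{n}\leq\varphi_{n}(y_{\alpha})$. Hence
\begin{equation*}
	\P\big(L_{n}\leq\varphi_{n}(y_{\alpha})\big)
	\;\geq\; \big(1-H(y_{\alpha}+\delta)\big) - \P\big(|L_{n}-\varphi_{n}(Y)|\geq\theta\delta\big),
\end{equation*}
and letting $n\to\infty$ and then $\delta\downarrow 0$, using continuity of $H$, yields $\liminf_{n}\P\big(L_{n}\leq\varphi_{n}(y_{\alpha})\big)\geq 1-H(y_{\alpha})=\alpha$. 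A symmetric argument on $\{Y\leq y_{\alpha}-\delta\}$ bounds the complementary probability and produces the matching $\limsup\leq\alpha$, proving~\eqref{eqn_port_loss_alpha}.

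For~\eqref{eqn_cvrg_var_cond_exp} I would upgrade this to a two-sided squeeze on the quantile $\q{\alpha}{L_{n}}=\VaR[\alpha]{L_{n}}=\inf\{\el:\P(L_{n}\leq\el)\geq\alpha\}$. Fix $\varepsilon>0$ small enough that $\delta=\varepsilon/(2\Theta)$ satisfies $y_{\alpha}\pm\delta\in I$. Repeating the lower-bound argument with threshold $\varphi_{n}(y_{\alpha})+\varepsilon$ and event $\{Y\geq y_{\alpha}-\delta\}$ shows $\liminf_{n}\P\big(L_{n}\leq\varphi_{n}(y_{\alpha})+\varepsilon\big)\geq 1-H(y_{\alpha}-\delta)>\alpha$, so for all large $n$ the point $\varphi_{n}(y_{\alpha})+\varepsilon$ lies in the defining set and $\q{\alpha}{L_{n}}\leq\varphi_{n}(y_{\alpha})+\varepsilon$. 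Symmetrically, threshold $\varphi_{n}(y_{\alpha})-\varepsilon$ with event $\{Y\leq y_{\alpha}+\delta\}$ gives $\limsup_{n}\P\big(L_{n}\leq\varphi_{n}(y_{\alpha})-\varepsilon\big)\leq 1-H(y_{\alpha}+\delta)<\alpha$, whence $\q{\alpha}{L_{n}}\geq\varphi_{n}(y_{\alpha})-\varepsilon$ eventually. Combining the two inclusions gives $|\q{\alpha}{L_{n}}-\varphi_{n}(y_{\alpha})|\leq\varepsilon$ for all large $n$, and since $\varepsilon$ is arbitrary,~\eqref{eqn_cvrg_var_cond_exp} follows.

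The step I expect to be the main obstacle is the uniform-in-$n$ control in the quantile squeeze. Because $\varphi_{n}$ varies with $n$, one cannot fix a single band $[\varphi_{n}(y_{\alpha})-\varepsilon,\varphi_{n}(y_{\alpha})+\varepsilon]$ in loss space and evaluate its probability directly; one must instead translate the $\varepsilon$-band into a $y$-band through Condition~\ref{item_cond_exp_bndd_deriv}, and it is precisely the hypothesis that $\theta$ and $\Theta$ are independent of $n$ that keeps the induced $\delta=\varepsilon/(2\Theta)$ bounded away from zero simultaneously for all $n$, so that $1-H(y_{\alpha}\mp\delta)$ is strictly separated from $\alpha$ uniformly. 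This uniformity is what allows the single convergence-in-probability estimate to close both bounds at once; it is exactly the differentiability-with-bounded-derivative condition that the more compact Proposition~\ref{prop_qntl_loss_dist_subst} dispenses with in favour of the weaker interval hypotheses~\ref{item_func_intv_eps_delta}--\ref{item_func_intv_delta_eps}.
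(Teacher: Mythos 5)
Your proof is correct and follows essentially the same route as the paper's: convergence in probability from Condition~\ref{item_loss_cvrg_cond_exp}, the mean value theorem with the $n$-uniform derivative bounds to trade $\varepsilon$-bands in loss space for $\delta$-bands in $y$, the identity $\P\big(\varphi_{n}(Y)\leq\varphi_{n}(y)\big)=1-H(y)$ from global strict monotonicity, and a two-sided quantile squeeze at the thresholds $\varphi_{n}\big(H^{-1}(1\!-\!\alpha)\big)\pm\varepsilon$. The only cosmetic difference is that where the paper invokes Petrov's Lemma~\ref{lem_dist_abs_diff} to bound $\big|F_{n}\big(\varphi_{n}(y)\big)-G_{n}\big(\varphi_{n}(y)\big)\big|$ uniformly on $I$, you re-derive its content directly through the explicit event inclusions of your liminf/limsup sandwich.
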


\begin{remark}\label{rem_gordy_prop_gen_app}
Proposition~\ref{prop_gordy_prop_5} applies more generally than to the conditional independence model of Definition~\ref{def_cond_indep_model_port_loss}.  However, we are not aware of competing models for which the conditional expectation function $\varphi_{n}$ satisfies Condition~\ref{item_loss_cvrg_cond_exp}.  
\end{remark}

\begin{remark}\label{rem_reiter_bnds_pd_corr}
Condition~\ref{item_loss_cvrg_cond_exp} of Proposition~\ref{prop_gordy_prop_5} postulates that portfolio percentage loss converges, almost surely, to its conditional expectation as the portfolio approaches asymptotic granularity.  The proof of Proposition~\ref{prop_gordy_prop_5} only requires convergence in probability, but we assume almost sure convergence consistent with Proposition~\ref{prop_cond_port_loss}.  For an asymptotic credit portfolio (Definition~\ref{def_asymp_port}), the conditional independence model of Definition~\ref{def_cond_indep_model_port_loss} satisfies Condition~\ref{item_loss_cvrg_cond_exp} by Proposition~\ref{prop_cond_port_loss}.  

Conditions~\ref{item_cond_exp_diffbl} and~\ref{item_cond_exp_bndd_deriv} are more technical in nature.  Condition~\ref{item_cond_exp_diffbl} postulates that the conditional expectation of portfolio percentage loss rises ``smoothly'' as the economy deteriorates for a set of states of the economy associated with the tail of the portfolio loss distribution, which is the concern of solvency assessment.  In relation to the conditional independence model of Definition~\ref{def_cond_indep_model_port_loss}, we quite reasonably assume that the conditional expectation function~$\varphi_{n}$ satisfies Condition~\ref{item_cond_exp_diffbl}.  By inspection of~\eqref{eqn_cond_pd} we observe that Condition~\ref{item_cond_exp_bndd_deriv} holds if $p_{i}$ and $\gamma_{i}^{2}$ are bounded away from zero and one for ${i = 1, \ldots, n}$.
\end{remark}

The proof of Proposition~\ref{prop_gordy_prop_5}, presented below, requires a result due to \citet{PVV95}, which we proceed to state.  Note that in Lemma~\ref{lem_dist_abs_diff}, variables $X$ and $Z$ and functions $F$ and $G$ denote arbitrary random variables and distribution functions, and in Proposition~\ref{prop_gordy_prop_5}, $F_{n}$ and $G_{n}$ denote distribution functions distinct from the notation adopted in Section~\ref{sect_gen_set}.

\begin{lemma}\label{lem_dist_abs_diff}
Let $X$ and $Z$ be random variables defined on a common probability space with distribution functions F and G, respectively.  For all $a \in \R$ and $\varepsilon > 0$,
\begin{equation}\label{eqn_dist_abs_diff}
	\big|F(a) - G(a)\big| \leq \P\big(|X - Z| > \varepsilon\big) + \max\left\{G(a\!+\!\varepsilon) - G(a), G(a) - G(a\!-\!\varepsilon)\right\}.
\end{equation}
\end{lemma}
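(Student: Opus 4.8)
The plan is to establish the two one-sided estimates
\[
	F(a)-G(a) \le \P\big(|X-Z|>\varepsilon\big) + \big(G(a\!+\!\varepsilon)-G(a)\big)
\]
and
\[
	G(a)-F(a) \le \P\big(|X-Z|>\varepsilon\big) + \big(G(a)-G(a\!-\!\varepsilon)\big)
\]
separately, and then combine them. Since $\big|F(a)-G(a)\big|$ equals the larger of these two signed differences, each of which is dominated by $\P(|X-Z|>\varepsilon)$ plus one of the two increments of $G$, the bound~\eqref{eqn_dist_abs_diff} follows immediately upon replacing the relevant increment by the maximum of the two. The entire argument rests on relating events described through $X$ to events described through $Z$ by means of the ``coupling'' quantity $|X-Z|$, exploiting the elementary fact that on the event $\{|X-Z|\le\varepsilon\}$ the two variables differ by at most $\varepsilon$.

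For the first estimate I would partition the event $\{X\le a\}$ according to whether $|X-Z|\le\varepsilon$ or $|X-Z|>\varepsilon$. On the intersection $\{X\le a\}\cap\{|X-Z|\le\varepsilon\}$ one has $Z\le X+\varepsilon\le a+\varepsilon$, so this intersection is contained in $\{Z\le a+\varepsilon\}$ and hence has probability at most $G(a+\varepsilon)$. Adding the probability of the complementary piece, which is at most $\P(|X-Z|>\varepsilon)$, yields $F(a)\le G(a+\varepsilon)+\P(|X-Z|>\varepsilon)$; subtracting $G(a)$ gives the first estimate.

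The second estimate I would obtain by bounding $F(a)$ from below. Here the key inclusion is $\{Z\le a-\varepsilon\}\cap\{|X-Z|\le\varepsilon\}\subseteq\{X\le a\}$, because $Z\le a-\varepsilon$ together with $X\le Z+\varepsilon$ forces $X\le a$. Consequently $F(a)\ge\P\big(Z\le a-\varepsilon,\,|X-Z|\le\varepsilon\big)\ge G(a-\varepsilon)-\P(|X-Z|>\varepsilon)$, and rearranging produces the second estimate.

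I do not expect a genuine obstacle here; the only point demanding care is the bookkeeping of which increment of $G$ appears in each half. In particular, one must ensure that \emph{both} one-sided bounds come out purely in terms of increments of $G$, never of $F$: for the lower bound on $F(a)$ this dictates shifting the $Z$-event by $-\varepsilon$ rather than by the more obvious $+\varepsilon$, so that the inclusion lands inside $\{X\le a\}$ instead of $\{X\le a+\varepsilon\}$. Taking the maximum over the two increments then absorbs both cases simultaneously and completes the proof.
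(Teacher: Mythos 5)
Your proof is correct and complete: both one-sided estimates are established by valid inclusions, namely $\{X\le a\}\cap\{|X-Z|\le\varepsilon\}\subseteq\{Z\le a+\varepsilon\}$ for the upper bound $F(a)\le G(a+\varepsilon)+\P(|X-Z|>\varepsilon)$, and $\{Z\le a-\varepsilon\}\cap\{|X-Z|\le\varepsilon\}\subseteq\{X\le a\}$ together with $\P(A\cap B)\ge\P(A)-\P(B^{c})$ for the lower bound $F(a)\ge G(a-\varepsilon)-\P(|X-Z|>\varepsilon)$, and taking the maximum of the two increments of $G$ absorbs both cases. Note that the paper itself gives no proof of this lemma, deferring entirely to \citet[Lemma~1.8]{PVV95}; your argument is precisely the standard coupling proof found there, so you have in effect supplied the self-contained proof the paper delegates to the literature.
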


\begin{proof}
See \citet[Lemma~1.8]{PVV95}.
\end{proof}

\begin{proof}[Proof of Proposition~\ref{prop_gordy_prop_5}]
Denote by $F_{n}$ and $G_{n}$ the distribution functions of $L_{n}$ and $\varphi_{n}(Y)$, respectively.  Firstly, by appealing to Lemma~\ref{lem_dist_abs_diff}, we show that
\begin{equation*}
	\big|F_{n}\big(\varphi_{n}(y)\big) - G_{n}\big(\varphi_{n}(y)\big)\big| \rightarrow 0 \textnormal{ as } n\rightarrow\infty,
\end{equation*}
for all $y \in I$, and deduce~\eqref{eqn_port_loss_alpha}.  Set $X = L_n$, $Z = \varphi_{n}(Y)$ and $a = \varphi_{n}(y)$ in Lemma~\ref{lem_dist_abs_diff}.  Then, for any $\varepsilon > 0$,
\begin{IEEEeqnarray*}{rCl} 
	\IEEEeqnarraymulticol{3}{l}{\big|F_{n}\big(\varphi_{n}(y)\big) - G_{n}\big(\varphi_{n}(y)\big)\big|} \\ \qquad
	& \leq & \P\big(\big|L_{n} - \varphi_{n}(Y)\big| > \varepsilon\big) \\
	& + & \max\left\{G_{n}\big(\varphi_{n}(y) + \varepsilon\big) - G_{n}\big(\varphi_{n}(y)\big), G_{n}\big(\varphi_{n}(y)\big) - G_{n}\big(\varphi_{n}(y) - \varepsilon\big)\right\}.\IEEEyesnumber\label{eqn_cdfs_abs_diff}
\end{IEEEeqnarray*}

The almost sure convergence asserted by Condition~\ref{item_loss_cvrg_cond_exp} implies convergence in probability \citep[Theorem~7.2.3]{GS01}.  By the definition of convergence in probability \citep[Definition~7.2.1]{GS01}, for any $\xi > 0$ and $\varepsilon > 0$, choose an $N_{1} \in \N$ (which in general depends on $\varepsilon$ and $\xi$) such that $n > N_{1}$ implies
\begin{equation}\label{eqn_cvrg_in_prob_eps}
	\P\big(\big|L_{n} - \E[L_n \given Y]\big| > \varepsilon\big) < \frac{\xi}{2}
\end{equation}
for all $y \in I$.

The convergence of $G_{n}$ in the neighbourhood of $\varphi_{n}\big(H^{-1}(1\!-\!\alpha)\big)$ does not immediately follow from the assumption that $\varphi_{n}(y)$ is differentiable on $I$.  By hypothesis, $\varphi_{n}(y)$ is differentiable on I and $\varphi_{n}^{\prime}(y)$ has an upper bound $-\lb$ on $I$ whenever $n > N_{0}$.  By the mean value theorem \citep[see, e.g.,][Theorem~4.15]{WWR04}, for any $\varepsilon > 0$ satisfying ${(y - \varepsilon/\lb, y + \varepsilon/\lb) \subset I}$, there is a ${y^{*} \in (y - \varepsilon/\lb, y)}$ such that
\begin{equation*}
	\varphi_{n}(y - \varepsilon/\lb) - \varphi_{n}(y) = \varphi_{n}^{\prime}(y^{*})(-\varepsilon/\lb) \geq -\lb(-\varepsilon/\lb) = \varepsilon
\end{equation*}
for $n >N_{0}$.  Similarly, there is a  $y^{*} \in (y, y + \varepsilon/\lb)$ such that
\begin{equation*}
	\varphi_{n}(y) - \varphi_{n}(y + \varepsilon/\lb) = \varphi_{n}^{\prime}(y^{*})(-\varepsilon/\lb) \geq -\lb(-\varepsilon/\lb) = \varepsilon
\end{equation*}
for $n > N_{0}$.  By hypothesis, $\varphi_{n}(y)$ is strictly decreasing in~$y$ on $I$ whenever $n > N_{0}$, implying that ${\varphi_{n}(Y) \leq \varphi_{n}(y)}$ if and only if $Y \geq y$.  Moreover, $H$ is continuous and strictly increasing.  Accordingly, 
\begin{equation}\label{eqn_cdfs_an_y}
	G_{n}\big(\varphi_{n}(y)\big) = \P(Y \geq y) = 1 - H(y).
\end{equation}
Hence,
\begin{equation*}
	G_{n}\big(\varphi_{n}(y) + \varepsilon\big) \leq G_{n}\big(\varphi_{n}(y - \varepsilon/\lb)\big) = \P(Y \geq y -  \varepsilon/\lb)
\end{equation*}
and
\begin{equation*}
	G_{n}\big(\varphi_{n}(y) - \varepsilon\big) \geq G_{n}\big(\varphi_{n}(y + \varepsilon/\lb)\big) = \P(Y \geq y +  \varepsilon/\lb)
\end{equation*}
for $n > N_{0}$.  It follows that the continuity of $H(y)$ on $I$ implies convergence.  That is, for any $\xi > 0$ and $\varepsilon > 0$, there is an $N_{0} \in \N$ (which in general depends on $\varepsilon$ and~$\xi$) such that $n > N_{0}$ implies
\begin{IEEEeqnarray*}{rCl}
	\IEEEeqnarraymulticol{3}{l}{\max\left\{G_{n}\big(\varphi_{n}(y) + \varepsilon\big) - G_{n}\big(\varphi_{n}(y)\big), G_{n}\big(\varphi_{n}(y)\big) - G_{n}\big(\varphi_{n}(y) - \varepsilon\big)\right\}} \\ \qquad
	& \leq & \max\left\{\P(Y \geq y-\varepsilon/\lb) - \P(Y \geq y),  \P(Y \geq y) - \P(Y \geq y+\varepsilon/\lb)\right\} \\ \qquad
	& = & \max\left\{H(y) - H(y-\varepsilon/\lb), H(y+\varepsilon/\lb) -H(y)\right\} < \frac{\xi}{2}\IEEEyesnumber\label{eqn_max_cdfs_cvrg}
\end{IEEEeqnarray*}
for any $y \in I$.  Combining \eqref{eqn_cvrg_in_prob_eps} and \eqref{eqn_max_cdfs_cvrg} to evaluate \eqref{eqn_cdfs_abs_diff}, we claim that for any $\xi > 0$, whenever ${n > N = \max\{N_{0}, N_{1}\}}$,
\begin{equation}\label{eqn_cdfs_cvrg_eps}
	\big|F_{n}\big(\varphi_{n}(y)\big) - G_{n}\big(\varphi_{n}(y)\big)\big| = \big|\P\big(L_{n} \leq \varphi_{n}(y)\big) - \P\big(\varphi_{n}(Y) \leq \varphi_{n}(y)\big)\big| < \frac{\xi}{2} + \frac{\xi}{2} = \xi
\end{equation}
for any $y \in I$.  Substituting for $G_{n}\big(\varphi_{n}(y)\big)$ in~\eqref{eqn_cdfs_cvrg_eps} from~\eqref{eqn_cdfs_an_y}, and setting $y = H^{-1}(1\!-\!\alpha) \in I$, \eqref{eqn_cdfs_cvrg_eps} may be expressed as 
\begin{equation}\label{eqn_cdfs_cvrg_lim}
	\lim_{n\rightarrow\infty} \big|\P\left(L_{n} \leq \varphi_{n}\big(H^{-1}(1\!-\!\alpha)\big)\right) - \P\big(Y \geq H^{-1}(1\!-\!\alpha)\big)\big| = 0.
\end{equation}
Now, observing that $\P\big(Y \geq H^{-1}(1\!-\!\alpha)\big) = \alpha$ establishes~\eqref{eqn_port_loss_alpha}, and completes the first part of the proof.

It remains to deduce~\eqref{eqn_cvrg_var_cond_exp}.  Notice that whenever $n > N$, $(y - \varepsilon/\ub, y + \varepsilon/\ub) \subseteq (y - \varepsilon/\lb, y + \varepsilon/\lb)$ and $\varphi_{n}^{\prime}(y)$ has a lower bound $-\ub$ on $I$.  Again, by the mean value theorem, for any $\varepsilon > 0$ satisfying ${(y - \varepsilon/\ub, y + \varepsilon/\ub) \subset I}$, there is a ${y^{*} \in (y - \varepsilon/\ub, y)}$ such that
\begin{equation*}
	\varphi_{n}(y - \varepsilon/\ub) - \varphi_{n}(y) = \varphi_{n}^{\prime}(y^{*})(-\varepsilon/\ub) \leq -\ub(-\varepsilon/\ub) = \varepsilon
\end{equation*}
for $n > N$.  Similarly, there is a ${y^{*} \in (y, y + \varepsilon/\ub)}$ such that
\begin{equation*}
	\varphi_{n}(y) - \varphi_{n}(y + \varepsilon/\ub) = \varphi_{n}^{\prime}(y^{*})(-\varepsilon/\ub) \leq -\ub(-\varepsilon/\ub) = \varepsilon
\end{equation*}
for $n > N$.  Setting ${y = H^{-1}(1\!-\!\alpha) \in I}$, corresponding to the $\alpha$ confidence level in Definition~\ref{def_cred_var}, yields
\begin{IEEEeqnarray*}{rCl}
	F_{n}\big(\varphi_{n}(H^{-1}(1\!-\!\alpha)) + \varepsilon\big) & \geq & F_{n}\big(\varphi_{n}(H^{-1}(1\!-\!\alpha) - \varepsilon/\ub)\big)	\\
	& = & G_{n}\big(\varphi_{n}(H^{-1}(1\!-\!\alpha) - \varepsilon/\ub)\big)	\\
	& = & \P\big(Y \geq H^{-1}(1\!-\!\alpha) - \varepsilon/\ub\big)	\\
	& > & \alpha.\IEEEyesnumber\label{eqn_appx_cdf_alpha_cvrg_above}
\end{IEEEeqnarray*}
The first equality is a consequence of~\eqref{eqn_cdfs_cvrg_lim}, the second of~\eqref{eqn_cdfs_an_y}.  It follows from Definition~\ref{def_cred_var} and~\eqref{eqn_appx_cdf_alpha_cvrg_above} that whenever $n > N$,
\begin{equation}\label{eqn_appx_var_alpha_cvrg_above}
	\VaR[\alpha]{L_{n}} < \varphi_{n}\big(H^{-1}(1\!-\!\alpha)\big) + \varepsilon.
\end{equation}
By a parallel argument, whenever $n > N$,
\begin{IEEEeqnarray*}{rCl}
	F_{n}\big(\varphi_{n}(H^{-1}(1\!-\!\alpha)) - \varepsilon\big) & \leq & F_{n}\big(\varphi_{n}(H^{-1}(1\!-\!\alpha) + \varepsilon/\ub)\big)	\\
	& = & G_{n}\big(\varphi_{n}(H^{-1}(1\!-\!\alpha) + \varepsilon/\ub)\big)	\\
	& = & \P\big(Y \geq H^{-1}(1\!-\!\alpha) + \varepsilon/\ub\big)	\\
	& < & \alpha,\IEEEyesnumber\label{eqn_appx_cdf_alpha_cvrg_below}
\end{IEEEeqnarray*}
and
\begin{equation}\label{eqn_appx_var_alpha_cvrg_below}
	\VaR[\alpha]{L_{n}} > \varphi_{n}\big(H^{-1}(1\!-\!\alpha)\big) - \varepsilon.
\end{equation}
Together, \eqref{eqn_appx_var_alpha_cvrg_above} and \eqref{eqn_appx_var_alpha_cvrg_below} may be expressed as
\begin{equation*}
	\big|\VaR[\alpha]{L_{n}} - \varphi_{n}\big(H^{-1}(1\!-\!\alpha)\big)\big| < \varepsilon.
\end{equation*}
Setting $\varepsilon$ arbitrarily close to zero establishes~\eqref{eqn_cvrg_var_cond_exp}, and completes the proof.
\end{proof}

\section{Proof of Lemma~\ref{lem_sfc_model} and Corollary~\ref{cor_ofgc}}\label{appx_sfc_model_ofgc}
Lemma~\ref{lem_sfc_model} derives the single-factor copula model describing default dependence for the general case.  Then, Corollary~\ref{cor_ofgc} deals with the special case of the one-factor Gaussian copula, the most commonly applied copula function in credit risk modelling.

\sfcmodel*		

\begin{proof}
Integrating conditional marginal distribution functions over systematic risk factor~$Y$, copula~\eqref{eqn_dflt_cop} may be expressed as
\begin{IEEEeqnarray*}{rCl}
	C(u_{1}, \ldots, u_{n}) & = & \P\big(W_{1} < F_{1}^{-1}(u_{1}), \ldots, W_{n} < F_{n}^{-1}(u_{n})\big) \\
	& = & \int_{-\infty}^{\infty} \P\big(W_{1}<F_{1}^{-1}(u_{1}), \ldots, W_{n}<F_{n}^{-1}(u_{n}) \given Y=y\big) \dee{H(y)} \\
	& = & \int_{-\infty}^{\infty} \P\big(W_{1}<F_{1}^{-1}(u_{1}) \given Y=y\big) \ldots \P\big(W_{n}<F_{n}^{-1}(u_{n}) \given Y=y\big) \dee{H(y)}. \\
	& & \IEEEyesnumber\label{eqn_sfc_model_proof}
\end{IEEEeqnarray*}
The integrand in the last equality is the product of obligor PDs conditional on realisation~$y \in \R$ of systematic risk factor~$Y$.  Substituting the probability statements in this integrand with the expression for conditional PD in~\eqref{eqn_cond_pd}, and observing that  $u_{i} = p_{i}$ for $i = 1, \ldots, n$ establishes~\eqref{eqn_sfc_model}.
\end{proof}

\ofgc*	

\begin{proof}
Substituting the operands of the product operator in~\eqref{eqn_sfc_model} with the expression for conditional PD in \eqref{eqn_cond_pd_gauss}, and observing that $\dee{\Phi(y)} = \phi(y)\dee{y}$ establishes~\eqref{eqn_ofgc}.
\end{proof}

\section{Proof of Proposition~\ref{prop_cap_cred_asymp}}\label{appx_cap_cred_asymp}
Appealing to Definition~\ref{def_unexp_loss} and Proposition~\ref{prop_qntl_loss_dist_subst}, we deduce a function for calculating capital held against unexpected losses, $K_{\alpha}(L_{n})$.

\capcredasymp*		

\begin{proof}
By Definition~\ref{def_cap_cred_asymp}, capital held against unexpected losses on an asymptotic credit portfolio may be expressed as
\begin{equation}\label{eqn_cap_cred_var_asymp}
	\lim_{n\rightarrow\infty}K_{\alpha}(L_{n}) = \lim_{n\rightarrow\infty}\VaR[\alpha]{L_{n}} - \lim_{n\rightarrow\infty}\E[L_{n}],
\end{equation}
assuming that the limits on the right-hand side are well defined.  By hypothesis, defaults are modelled as conditionally independent random variables given systematic risk factor~$Y$.  So, adding and subtracting the $\alpha$ quantile of the distribution of conditional expectation of portfolio percentage loss yields
\begin{IEEEeqnarray*}{rCl}
	\lim_{n\rightarrow\infty}K_{\alpha}(L_{n}) & = & \lim_{n\rightarrow\infty}\left(\VaR[\alpha]{L_{n}} - \E\big[L_{n} \given Y=H^{-1}(1\!-\!\alpha)\big]\right)	\\
	& & + \lim_{n\rightarrow\infty}\E\big[L_{n} \given Y=H^{-1}(1\!-\!\alpha)\big] - \lim_{n\rightarrow\infty}\E[L_{n}]	\\
	& = & \lim_{n\rightarrow\infty}\E\big[L_{n} \given Y=H^{-1}(1\!-\!\alpha)\big] - \lim_{n\rightarrow\infty}\E[L_{n}],\IEEEyesnumber\label{eqn_cap_cred_var_subst}
\end{IEEEeqnarray*}
where the limits on the right-hand side of~\eqref{eqn_cap_cred_var_subst} are well defined by hypothesis.  Recall that in probabilistic terms, $\VaR[\alpha]{L_{n}}$ is the $\alpha$ quantile of the portfolio loss distribution, $\q{\alpha}{L_{n}}$.  Hence, the second equality in~\eqref{eqn_cap_cred_var_subst} follows from Proposition~\ref{prop_qntl_loss_dist_subst}, with the $\alpha$ quantile of the distribution of $\E[L_{n} \given Y]$, in effect, being substituted for $\VaR[\alpha]{L_{n}}$.  Expanding~\eqref{eqn_cap_cred_var_subst} using~\eqref{eqn_exp_port_loss} and~\eqref{eqn_cond_exp_port_loss} yields
\begin{IEEEeqnarray*}{rCl}
	\lim_{n\rightarrow\infty}K_{\alpha}(L_{n}) & = & \lim_{n\rightarrow\infty}\sum_{i = 1}^{n} w_{i}\lgd_{i}p_{i}\big(H^{-1}(1\!-\!\alpha)\big) - \lim_{n\rightarrow\infty}\sum_{i = 1}^{n} w_{i}\lgd_{i}p_{i}.\IEEEyesnumber\label{eqn_cap_cred_cond_exp}
\end{IEEEeqnarray*}
Finally, setting $y=H^{-1}(1\!-\!\alpha)$ in \eqref{eqn_cond_pd} and substituting into~\eqref{eqn_cap_cred_cond_exp} establishes~\eqref{eqn_cap_cred_asymp}.
\end{proof}



\end{document}